\numberwithin{equation}{section}		
\numberwithin{figure}{section}			
\numberwithin{table}{section}				
\newtheorem{thm}{Theorem}[section]
\newtheorem{defi}[thm]{Definition}
\newtheorem{lem}[thm]{Lemma}
\newtheorem{cor}[thm]{Corollary}
\newtheorem*{remark}{Remark}
\newcommand{\what}{\widehat}
\newcommand{\leqs}{\leqslant}
\newcommand{\geqs}{\geqslant}
\newcommand{\babs}[1]{\left|{#1}\right|}
\newcommand{\bnorm}[1]{\left|\left|{#1}\right|\right|}
\newcommand{\vect}[1]{\boldsymbol{\mathbf{#1}}}
\title{A mathematical theory of super-resolution and two-point resolution 
	\thanks{\footnotesize This work was supported in part by the Swiss National Science Foundation grant number
		200021--200307.}}
\author{Ping Liu\thanks{\footnotesize Department of Mathematics, ETH Z\"urich, R\"amistrasse 101, CH-8092 Z\"urich, Switzerland (ping.liu@sam.math.ethz.ch, habib.ammari@math.ethz.ch).}  \and Habib Ammari\footnotemark[2]}
\date{}
\begin{document}
\maketitle
\begin{abstract}
This paper focuses on the fundamental aspects of super-resolution, particularly addressing the stability of super-resolution and the estimation of two-point resolution. Our first major contribution is the introduction of two location-amplitude identities that characterize the relationships between locations and amplitudes of true and recovered sources in the one-dimensional super-resolution problem. These identities facilitate direct derivations of the super-resolution capabilities for recovering the number, location, and amplitude of sources, significantly advancing existing estimations to levels of practical relevance. As a natural extension, we establish the stability of a specific $l_0$ minimization algorithm in the super-resolution problem.

The second crucial contribution of this paper is the theoretical proof of a two-point resolution limit in multi-dimensional spaces.  The resolution limit is expressed as:
    \[
     \mathcal R = \frac{4\arcsin \left(\left(\frac{\sigma}{m_{\min}}\right)^{\frac{1}{2}} \right)}{\Omega}
    \]
    for $\frac{\sigma}{m_{\min}}\leqs\frac{1}{2}$, where $\frac{\sigma}{m_{\min}}$ represents the inverse of the signal-to-noise ratio ($\mathrm{SNR}$) and $\Omega$ is the cutoff frequency. It also demonstrates that for resolving two point sources, the resolution can exceed the Rayleigh limit $\frac{\pi}{\Omega}$ when the signal-to-noise ratio (SNR) exceeds $2$. Moreover, we find a tractable algorithm that achieves the resolution $\mathcal{R}$ when distinguishing two sources.
\end{abstract}

\vspace{0.5cm}
\noindent{\textbf{Mathematics Subject Classification:} 94A08, 94A12, 42A10, 15A09} 

\vspace{0.2cm}
\noindent{\textbf{Keywords:} source number detection, location and amplitude recovery, location-amplitude identities, super-resolution, diffraction limit, two-point resolution, line spectral estimation, direction of arrival, sparsity-promoting algorithm, Vandermonde matrix} 
\vspace{0.5cm}

\section{Introduction}
Since the first report of the use of microscopes for observation in the 17th century, optical microscopes have played a central role in helping to untangle complex biological mysteries. Numerous scientific advancements and manufacturing innovations over the past three centuries have led to advanced optical microscope designs with significantly improved image quality. However, due to the physical nature of wave propagation and diffraction, there is a fundamental diffraction barrier in optical imaging systems which is called the \emph{resolution limit}. This resolution limit is one of the most important characteristics of an imaging system. In 19th century, Rayleigh \cite{rayleigh1879xxxi} gave a well-known criterion for determining the resolution limit (\emph{Rayleigh's diffraction limit}) for distinguishing two point sources, which is extensively used in optical microscopes for analyzing the resolution. The problem to resolve point sources separated below the Rayleigh diffraction limit is then called super-resolution and is commonly known to be very challenging for single snapshot. However, Rayleigh's criterion is based on intuitive notions and is more applicable to observations with the human eye. It also neglects the effect of the noise in the measurements and the aberrations in the modeling. On the other hand, due to the rapid advancement of technologies, modern imaging data is generally captured using intricate imaging techniques and sensitive cameras, and may also be subject to analysis by complex processing algorithms. Thus Rayleigh's deterministic resolution criterion is not well adapted to current quantitative imaging techniques, necessitating new and more rigorous definitions of resolution limit with respect to the noise, model and imaging methods \cite{ram2006beyond}. 

Our previous works \cite{liu2021mathematicaloned, liu2021theorylse, liu2021mathematicalhighd, liu2023improved} have achieved certain success in this respect and enable us to understand the performance of some super-resolution algorithms. Nevertheless, the derived estimates are still lacking enough guiding significance in practice on the possibility of super-resolution.  

In this paper, we introduce new and direct insights into the stability of super-resolution problems and significantly enhance many estimates to have practical significance. Our findings reveal several new facts; for example, we theoretically demonstrate that super-resolution from a single snapshot is indeed feasible in practice.


\subsection{Resolution limit, super-resolution and diffraction limit}

\subsubsection{Classical criteria for the resolution limit}
The resolution of an optical microscope is usually defined as the shortest distance between two points on a specimen that can still be distinguished by the observer or camera system as separate entities, but it is in some sense ambiguous. In fact, it does not explicitly require that just the source number should be detected or the source locations should be stably reconstructed. From the mathematical perspective, recovering the source number and stably reconstructing the source locations are actually two different tasks \cite{liu2021theorylse} demanding different separation distances. 

Historically, the resolution of optical microscopy focuses mainly on correctly detecting the number of sources, rather than on stably recovering the locations. This can be seen from the below discussions for the classical and semi-classical results on the resolution.

In the 18th and 19th centuries, many classical criteria were proposed to determine the resolution limit. For example, Rayleigh thought that two point sources observed are regarded as just resolved when the principal diffraction maximum of one Airy disk coincides with the first minimum of the other, as is shown by Figure \ref{fig:rayleilimit}. Since the separation of sources is still relatively large, not only the source number can be detected, but also the source locations can be stably recovered. 

\begin{figure}[!h]
	\centering
\includegraphics[width=2in,height=1.2in]{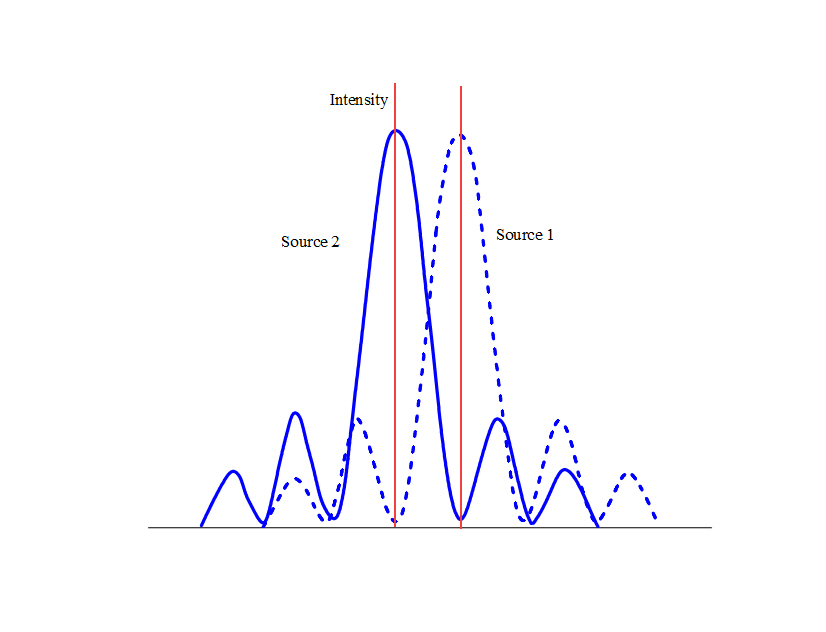}
\includegraphics[width=2.5in]{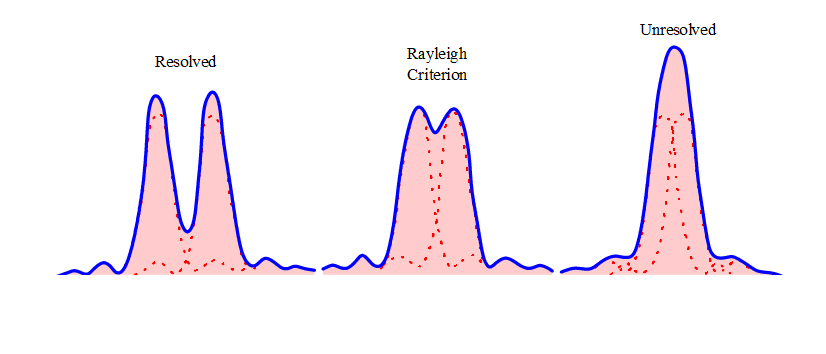}
\caption{Rayleigh's criterion and Rayleigh's diffraction limit.}
\label{fig:rayleilimit}
\end{figure}

However, Rayleigh's choice of resolution limit is based on the presumed resolving ability of human visual system, which at first glance seems arbitrary. In fact, Rayleigh said about his criterion that 
\begin{quotation}{``This rule is convenient on account of its simplicity and it is sufficiently accurate in view of the necessary uncertainty as to what exactly is meant by resolution."}
\end{quotation}
As is shown in Figure \ref{fig:rayleilimit}, the Rayleigh diffraction limit results in an $\sim 20\%$ dip in intensity between the two peaks of Airy disks \cite{demmerle2015assessing}. Schuster pointed out in 1904 \cite{schuster1904introduction} that the dip in intensity necessary to indicate resolution is a physiological phenomenon and there are other forms of spectroscopic investigation besides that of eye observation. Many alternative criteria were proposed by other physicists as illustrated in Figure \ref{fig:resolutionlimits}. 
\begin{figure}[!h]
	\centering	\includegraphics[width=4.8in, height = 1.8in]{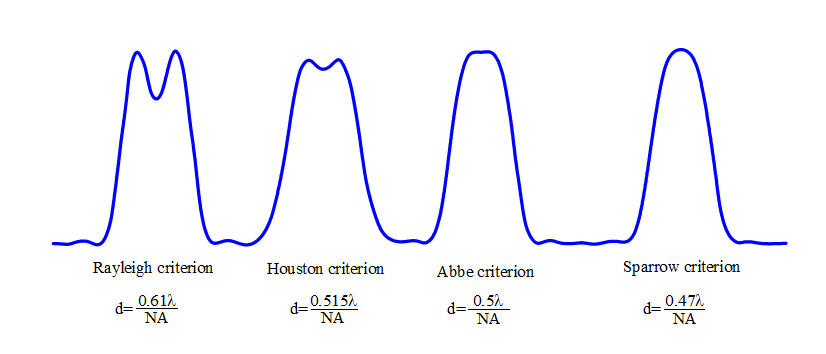}
	\caption{Different resolution limits.}
	\label{fig:resolutionlimits}
\end{figure}

A more mathematically rigorous criterion was proposed by Sparrow \cite{sparrow1916spectroscopic}, who advocates that the resolution limit should be the distance between two point sources where the images no longer have a dip between the central peaks of each Airy disk (as illustrated by Figure \ref{fig:resolutionlimits}). Based on Sparrow's criterion, the two point sources are so close that the source locations may not be stably resolved although the source number is detected. Indeed, the Sparrow resolution limit is less relevant with practical use \cite{chen2020algorithmic, demmerle2015assessing} because it is very signal-to-noise dependent and has no easy comparison to a readily measured value in real images. Therefore, these classical resolution criteria focus on the smallest distance between two sources above which we can be sure that there are two sources, regardless of whether their locations can be resolved stably or not.

The classical resolution criteria mentioned above deal with calculated images that are described by a known and exact mathematical model of the intensity distribution. However, if one has perfect access to the intensity profile of the diffraction image of two point sources, one could locate the exact source despite the diffraction. There would be no resolution limit for the reconstruction. This simple fact has been noticed by many researchers in the field \cite{di1955resolving, den1997resolution, chen2020algorithmic}. On the other hand, an imaging system constructed without any aberration or irregularity is not practical, because the shape of the point-spread function is never known exactly and the measurement noise is inevitable
\cite{ronchi1961resolving, den1997resolution}. Therefore, a rigorous and practically meaningful resolution limit could only be set when taking into account the aberrations and measurement noises \cite{ronchi1961resolving, goodman2015statistical}. In this setting, the images (detected by detectors in practice) were categorized as detected images by Ronchi  \cite{ronchi1961resolving} and their resolution was advocated to be more important to investigate than the resolution defined by those classical criteria.  Inspired by this, many researchers have analyzed the two-point resolution from the perspective of statistical inference \cite{helstrom1964detection, helstrom1969detection, lucy1992statistical, lucy1992resolution, goodman2015statistical, shahram2004imaging, shahram2004statistical, shahram2005resolvability}. 
For instance, in \cite{helstrom1969detection} Helstrom has shown that the resolution of two identical objects depends on deciding whether they  are both present in the field of view or only one of them is there, and their resolvability is measured by the probability of making this decision correctly.  
In all the papers mentioned above, the authors have derived quasi-explicit formulas or estimations for the minimum SNR that is required to discriminate two point sources or for the possibility of a correct decision. Although the resolutions (or the requirement) in this respect were thoroughly explored in these works which spanned the course of several decades, these results are rarely (even never) utilized in practical applications due to their complexity.

\subsubsection{Concept of super-resolution}
We next introduce the concept of super-resolution. Super-resolution microscopy is a series of techniques in optical microscopy that allow such images to have resolutions higher than those imposed by the diffraction limit (Rayleigh resolution limit). Due to the breakthrough in super-resolution techniques in recent years, this topic becomes increasingly popular in various fields and the concept of super-resolution becomes very general.  Some literature claims super-resolution although theoretically the sources should be separated by a distance above the Rayleigh limit. Bounds on the recovery of the amplitudes (or intensities) of the sources have been derived. Nevertheless, the original concept of super-resolution actually focuses mainly on both detecting the source number and recovering the source locations.

To the best of our knowledge, there is no unique and mathematically rigorous definition of super-resolution. On the other hand, as we have said, the number detection and location recovery are two inherently different \cite{liu2021theorylse} but important tasks in the super-resolution, thus we consider two different super-resolutions in the current paper. One is achieving resolution better than the Rayleigh diffraction limit in detecting the correct source number and is named "super-resolution in number detection". The other is achieving resolution better than the Rayleigh diffraction limit in stably recovering the source locations and is named "super-resolution in location recovery".

\subsection{Previous mathematical contributions}

Before introducing the mathematical contributions of this work, let us first introduce the mathematical model for the imaging problem in $k$-dimensional space. We consider the collection of point sources as a discrete measure $\mu=\sum_{j=1}^{n}a_{j}\delta_{\vect y_j}$, where $\vect y_j \in \mathbb R^k,j=1,\cdots,n$, represent the location of the point sources and the $a_j$'s their amplitudes. The imaging problem is to recover the sources $\mu$ from its noisy Fourier data, 
\begin{equation}\label{equ:highdmodelsetting1}
	\mathbf Y(\vect{\omega}) = \mathcal F[\mu] (\vect{\omega}) + \mathbf W(\vect{\omega})= \sum_{j=1}^{n}a_j e^{i \vect{y}_j\cdot \vect{\omega}} + \mathbf W(\vect{\omega}), \ \vect \omega \in \mathbb R^k, \ ||\vect{\omega}||_2\leqs \Omega, 
\end{equation}
where $\mathcal F[\cdot]$ denotes the Fourier transform in the $k$-dimensional space, $\Omega$ is the cut-off frequency, and $\mathbf W$ represents the total effect of noise and aberrations. We assume that 
\begin{equation}\label{equ:noiselevel1}
 \left|\mathbf W(\vect{\omega})\right|< \sigma, \ \vect \omega \in \mathbb R^k, \ ||\vect{\omega}||_2\leqs \Omega,
\end{equation}
with $\sigma$ being the noise level. We denote respectively the magnitude of the signal and the minimum separation distance between sources by
\begin{equation}\label{equ:intendisset}
m_{\min}=\min_{j=1,\cdots, n}|a_j|,
\quad 	d_{\min}=\min_{j\neq p}\bnorm{\vect y_j-\vect y_p}_2.
\end{equation}
As most of our analyses are on a one-dimensional space, throughout the paper, we use $y_j, \omega$ to denote the one-dimensional source locations and frequencies and reserve $\vect y_j, \vect \omega$ for the problem in spaces of general dimensionality. Especially, we denote the one-dimensional sources as $\mu=\sum_{j=1}^{n}a_{j}\delta_{y_j}$ and the noisy measurement as 
\[
\mathbf Y(\omega) = \sum_{j=1}^{n}a_j e^{i y_j\omega} + \mathbf W(\omega),\quad  \omega \in [-\Omega, \Omega]. 
\]

Model (\ref{equ:highdmodelsetting1}) is commonly used in the field of applied mathematics for theoretically analyzing the imaging problem \cite{donoho1992superresolution, candes2014towards, batenkov2019super}. It is directly the model in the frequency domain for the imaging modalities with $\mathrm{sinc}(||\vect x ||_2)$ being the point spread function \cite{den1997resolution}. Its discrete form is also a standard model in array signal processing. On the other hand, even for imaging with general point spread functions or optical transfer functions, some of the imaging enhancements such as inverse filtering method \cite{frieden1975image} will modify the measurements in the frequency domain to (\ref{equ:highdmodelsetting1}). These ensure the generality of the model (\ref{equ:highdmodelsetting1}) in the fields of imaging and array signal processing.

Based on Rayleigh's criterion, the corresponding resolution limit for imaging with the point spread function $\mathrm{sinc}(||\vect x||_2)^2$ is $\frac{\pi}{\Omega}$.  On the other hand, it was shown by many mathematical studies that $\frac{\pi}{\Omega}$ is also the critical limit for the imaging model \eqref{equ:highdmodelsetting1}. To be more specific,  in \cite{donoho1992superresolution} Donoho demonstrated that for sources on grid points spacing by $\Delta\geqs \frac{\pi}{\Omega}$, the stable recovery is possible from (\ref{equ:highdmodelsetting1}) in dimension one, but the stability becomes much worse in the case when $\Delta< \frac{\pi}{\Omega}$. Therefore, in the same way as \cite{donoho1992superresolution}, we regard $\frac{\pi}{\Omega}$ as the Rayleigh limit in this paper, and super-resolution refers to obtaining a better resolution than $\frac{\pi}{\Omega}$.

For the mathematical theory of super-resolving $n$ point sources or infinity point sources, to the best of our knowledge, the first result was derived by Donoho in 1992 \cite{donoho1992superresolution}. He developed a theory from the optimal recovery point of view to explain the possibility and difficulties of super-resolution via sparsity constraint. He considered discrete measures $\mu$ supported on a lattice $\{k\Delta\}_{k=-\infty}^{\infty}$ and regularized by a so-called “Rayleigh index” $R$. The available measurement is the noisy Fourier data of $\mu$ like model (\ref{equ:highdmodelsetting1}) in a one-dimensional space.  He showed that the minimax error $E^*$ for the amplitude recovery with noise level $\sigma$ was bounded as $$\beta_1(R,\Omega)\left(\frac{1}{\Delta}\right)^{2R-1}\sigma\leqs E^* \leqs  \beta_2(R,\Omega) \left(\frac{1}{\Delta}\right)^{4R+1}\sigma$$ for certain small $\Delta$. His results emphasize the importance of sparsity in the super-resolution. In recent years, due to the impressive development of super-resolution modalities in biological imaging \cite{hell1994breaking, westphal2008video, hess2006ultra, betzig2006imaging, rust2006sub} and super-resolution algorithms in applied mathematics \cite{candes2014towards, duval2015exact, poon2019, tang2013compressed, tang2014near, morgenshtern2016super, morgenshtern2020super, denoyelle2017support}, the inherent super-resolving capacity of the imaging problem becomes increasingly popular and the one-dimensional case was well-studied. In \cite{demanet2015recoverability}, the authors considered resolving the amplitudes of $n$-sparse point sources supported on a grid and improved the results of Donoho. Concretely, they showed that the scaling of the noise level for the minimax error should be $\mathrm{SRF}^{2n-1}$, where $\mathrm{SRF}:= \frac{1}{\Delta\Omega}$ is the super-resolution factor. Similar results for multi-clumps cases were also derived in \cite{batenkov2020conditioning, li2018stable}. Recently in  \cite{batenkov2019super}, the authors derived sharp minimax errors for the location and the amplitude recovery of off-the-grid sources. They showed that for $\sigma \lesssim (\mathrm{SRF})^{-2p+1}$, where $p$ is the number of nodes that form a cluster of certain type, the minimax error rate for reconstruction of the clustered nodes is of the order $(\mathrm{SRF})^{2p-2} \frac{\sigma}{\Omega}$, while for recovering the corresponding amplitudes the rate is of the order $(\mathrm{SRF})^{2p-1}\sigma$. Moreover, the corresponding minimax rates for the recovery of the non-clustered nodes and amplitudes are $\frac{\sigma}{\Omega}$ and $\sigma$ respectively. This was generalized to the case of resolving positive sources in \cite{liu2023super} recently. We also refer the readers to \cite{moitra2015super,chen2020algorithmic} for understanding the resolution limit from the perceptive of sample complexity and to  \cite{tang2015resolution, da2020stable} for the resolving limit of some algorithms.

On the other hand, in order to characterize the exact resolution rather than the minimax error in recovering multiple point sources, in the earlier works \cite{liu2021mathematicaloned, liu2021theorylse, liu2021mathematicalhighd, liu2023improved, liu2022rslpositive} we have defined the so-called "computational resolution limits", which characterize the minimum required distance between point sources so that their number and locations can be stably resolved under certain noise level. By developing a nonlinear approximation theory in so-called Vandermonde spaces, we have derived sharp bounds for computational resolution limits in the one-dimensional super-resolution problem (\ref{equ:highdmodelsetting1}). In particular, we have shown in \cite{liu2021theorylse} that the computational resolution limits for the number and location recoveries should be bounded above by respectively $\frac{4.4e\pi}{\Omega}\left(\frac{\sigma}{m_{\min }}\right)^{\frac{1}{2 n-2}}$ and $\frac{5.88 e\pi}{\Omega}\left(\frac{\sigma}{m_{\min }}\right)^{\frac{1}{2 n-1}}$, where the noise level $\sigma$ and signal magnitude $m_{\min}$ are defined as in (\ref{equ:noiselevel1}) and (\ref{equ:intendisset}), respectively. In the present paper, we substantially improve these estimates to have practical significance.

\subsection{Our main contributions in this paper}
 Our first contribution is two location-amplitude identities characterizing the relations between locations and amplitudes of true and recovered sources in the one-dimensional super-resolution problem. These identities allow us to directly derive the super-resolution capability for number, location, and amplitude recovery in the super-resolution problem and improve state-of-the-art estimations to an unprecedented level to have practical significance. Although these nonlinear inverse problems are known to be very challenging, we now have a clear and simple picture of all of them, which allows us to solve them in a unified way in just a few pages. To be more specific, we prove that, under model (\ref{equ:highdmodelsetting1}) in dimension one, it is definitely possible to detect the correct source number when the sources are separated by
\[
d_{\min}\geqs \frac{2e\pi }{\Omega }\Big(\frac{\sigma}{m_{\min}}\Big)^{\frac{1}{2n-2}},
\]
where $\frac{\sigma}{m_{\min}}$ represents the inverse of the signal-to-noise ratio ($\mathrm{SNR}$). This substantially improves the estimate in \cite{liu2021theorylse} and indicates that super-resolution in detecting correct source number (i.e., surpassing the diffraction limit $\frac{\pi}{\Omega}$) is definitely possible when $\frac{m_{\min}}{\sigma}\geqs (2e)^{2n-2}$.  Moreover, for the case when resolving two sources, the requirement for the separation is improved to 
\[
d_{\min}\geqs \frac{2\arcsin\left(2\left(\frac{\sigma}{m_{\min}}\right)^{\frac{1}{2}}\right)}{\Omega},
\]
indicating that surpassing the Rayleigh limit in distinguishing two sources is possible when $\mathrm{SNR}>4$. This is the first time where it is demonstrated theoretically that super-resolution ('in number detection') is actually possible in practice. For the stable location recovery, the estimate is improved to 
\[
d_{\min}\geqs \frac{2.36e\pi}{\Omega }\Big(\frac{\sigma}{m_{\min}}\Big)^{\frac{1}{2n-1}}
\]
as compared to the previous result in \cite{liu2021theorylse}, indicating that the location recovery is stable when $\frac{m_{\min}}{\sigma}\geqs (2.36 e)^{2n-1}$. Moreover, under the same separation condition, we also obtain stability results for amplitude recovery and a certain $l_0$ minimization algorithm in the super-resolution problem. These results provide us with a quantitative understanding of the super-resolution of multiple sources. Since our method is rather direct, it is very hard to substantially improve the estimates now and we even roughly know to what extent the constant factor in the estimates can be improved.

Our second crucial result is the theoretical proof of a two-point resolution limit in multi-dimensional spaces under only an assumption on the noise level.  It is given by
\begin{equation}\label{equ:diffractionlimit1}
	\frac{4\arcsin \left(\left(\frac{\sigma}{m_{\min}}\right)^{\frac{1}{2}} \right)}{\Omega}
\end{equation}
for $\frac{\sigma}{m_{\min}}\leqs\frac{1}{2}$. In the case when $\frac{\sigma}{m_{\min}}>\frac{1}{2}$, there is no super-resolution under certain circumstances. Our results show that, for resolving any two point sources, the resolution can exceed the Rayleigh limit when $\mathrm{SNR}>2$. When $\mathrm{SNR}>4$, one can achieve $1.5$ times improvement of the Rayleigh limit. This finding indicates that obtaining a resolution far better than the Rayleigh limit in practical imaging and direction-of-arrival problems is possible with refined sensors. As a comparison, former works for the two-point resolution \cite{helstrom1964detection, helstrom1969detection, lucy1992statistical, lucy1992resolution, goodman2015statistical, shahram2004imaging, shahram2004statistical, shahram2005resolvability} consider the model in the physical domain with the imaging process or the noise being random. In addition, the derived estimate is relatively complicated as the model considered is more complex than (\ref{equ:highdmodelsetting1}). For example, the $\mathrm{SNR}$ governing object detectability in \cite{helstrom1969detection} is given by 
\[
\mathrm{SNR} = (E/N)(TW)^{-\frac{1}{2}} \left|\int_A \phi_s(\mathbf{r}, \mathbf{r}) d^2 \mathbf{r}\right|^{-1}\times\left[\int_A \int_A\left|\phi_s\left(\mathbf{r}_1, \mathbf{r}_2\right)\right|^2 d^2 \mathbf{r}_1 d^2 \mathbf{r}_2\right]^{\frac{1}{2}},
\]
where $\phi_s(\cdot)$ denotes the autocovariance function of the field in the aperture plane and $E, N, T, W, A$ represent other factors.

The estimate of two-point resolution can be directly extended to the following more general setting: 
\begin{equation}\label{equ:highdgeneralmodel0}
	\mathbf Y(\vect{\omega}) = \chi(\vect \omega)\mathcal F[\mu] (\vect{\omega}) + \mathbf W(\vect{\omega})= \sum_{j=1}^{n}a_j \chi(\vect \omega) e^{i \vect{y}_j\cdot \vect{\omega}} + \mathbf W(\vect{\omega}), \ \vect \omega \in \mathbb R^k, \ ||\vect{\omega}||_2\leqs \Omega,
\end{equation}
where $\chi(\vect \omega)=0$ or $1$, $\chi(\vect 0)=1$ and $\chi(\vect \omega)=1, ||\vect \omega||_2 =\Omega$. This enables the application of our results to line spectral estimations and directional-of-arrival in signal processing. Moreover, our findings can be applied to imaging systems with general optical transfer functions. A new fact revealed in this paper is that the two-point resolution is actually determined by the boundary points of the transfer function and is not that dependent on the interior frequency information. Also, as revealed in Section \ref{section:optimalalgorithm1}, the measurements at $\vect \omega =\vect 0$ and $\bnorm{\vect \omega}=\Omega$ are already enough for the algorithm which provably achieves the resolution limit.

In the last part of the paper, we find an algorithm that achieves the optimal resolution when distinguishing two sources and conducts many numerical experiments to manifest its optimal performance and phase transition.  Although the noise and the aberration are inevitable and the point source is not an exact delta point, our results still indicate that super-resolving two sources in practice is possible for general imaging modalities, due to the excellent noise tolerance.  We plan to examine the practical feasibility of our method in the near future. 

To summarize, by this paper we have shed light on understanding quantitatively when super-resolution is definitely possible and when it is not.  It has been disclosed by our results that super-resolution when distinguishing two sources is far more possible than what was commonly recognized. 

\subsection{Organization of the paper}
The paper is organized in the following way. In Section 2, we present the theory of location-amplitude identities. In Section 3, we derive stability results for recovering the number, locations, and amplitudes of sources in the one-dimensional super-resolution problem. In Section 4, we derive the exact formula of the two-point resolution limit and, in Section 5, we devise algorithms achieving exactly the optimal resolution in distinguishing images from one and two sources. The Appendix consists of some useful inequalities.

\section{Location-amplitude identities}
In this section, we intend to derive two location-amplitude identities that characterize the relations between source locations and amplitudes in the one-dimensional super-resolution problem. We start from the following elementary model in dimension one: 
\begin{equation} \label{equ:modelsetting0} 
\mathcal F[\what \mu](\omega) = \mathcal F[\mu](\omega)+\vect w(\omega), \quad \omega \in [0, \Omega],
\end{equation}
where $\what \mu, \mu$ are discrete measures, $\mathcal F[f]=\int_{\mathbb R} e^{iy\omega} f(y)dy$ denotes the Fourier transform, and $\vect w(\omega) = \mathcal F[\what \mu](\omega) - \mathcal F[\mu](\omega)$. To be more specific, we set $\mu=\sum_{j=1}^na_j\delta_{y_j}$ and $\what \mu= \sum_{j=1}^d\what a_j\delta_{\what y_j}$ with $a_j, \what a_j$ being the source amplitudes and $y_j, \what y_j$ the source locations. 

\subsection{Statement of the identities}
Based on the above model, we have the following location-amplitude identities. 
\begin{thm} \label{thm:locaintenrelation1} [Location-amplitude identities] Consider the model
\[
\mathcal F[\what \mu](\omega) = \mathcal F[\mu](\omega)+\vect w(\omega), \quad \omega \in [0, \Omega],
\]
where $\what \mu = \sum_{j=1}^d\what a_j\delta_{\what y_j}$ and $\mu=\sum_{j=1}^na_j\delta_{y_j}$. For any fixed $y_t$ and $\what y_{t'}$, define the set $S_t$ containing all $y_j$'s and $\what y_j$'s except $y_t, \what y_{t'}$ that
\[
S_t:=\left\{y_1,\cdots, y_{t-1}, y_{t+1}, \cdots ,y_n, \what y_1,\cdots, \what y_{t'-1}, \what y_{t'+1},\cdots,  \what y_d \right\}.
\]
Let $\# S_t$ be the number of elements in $S_t$, i.e., $n+d-2$. Then, for any $0< \omega^*\leqs \frac{\Omega}{\#S_t}$, we have the following relations:
\begin{align}\label{equ:locaintenrelation1}
&\what {a}_{t'}\prod_{q\in S_t}\left(e^{i\what y_{t'}\omega^*}-e^{i q\omega^*}\right)- a_t \prod_{q\in S_t} \left(e^{i y_t\omega^*}- e^{i q\omega^*} \right)= \vect w_1^{\top}\vect v.
\end{align}
Moreover, for any $0< \omega^*\leqs \frac{\Omega}{\#S_t+1}$, we have
\begin{align}\label{equ:locaintenrelation2}
a_t \prod_{q\in S_t\cup \{\what y_{t'}\}}\left(e^{iy_t\omega^*} - e^{iq\omega^*}\right) =  \left(e^{i \what y_{t'}\omega^*}\vect w_1- \vect w_2\right)^{\top}\vect v.
\end{align}
Here, $\vect w_1 = (\vect w(0), \vect w(\omega^*), \cdots, \vect w((\#S_t)\omega^*))^{\top}$, $\vect w_2 = (\vect w(1), \vect w(\omega^*), \cdots, \vect w((\#S_t+1)\omega^*))^{\top}$ and the vector $\vect v$ is given by
\[
 \left((-1)^{\# S_t}\sum_{\{q_1,\cdots, q_{\#S_t}\}\in S_{j, \# S_t}} e^{iq_1\omega^*}\cdots e^{iq_{\#S_t}\omega^*}, \ \cdots,\ (-1)^2\sum_{\{q_1, q_2\}\in S_{t,2}} e^{iq_1\omega^*}e^{iq_2\omega^*}, \ (-1)\sum_{\{q_1\}\in S_{t,1}} e^{iq_1\omega^*}, 1 \right)^{\top},  
\]
where $S_{t,p}:=\left\{\{q_1, \cdots, q_p\}\bigg| q_j\in S_t, 1\leqs j\leqs p, \text{ $q_{j'}$ and $q_{j''}$ are different elements in $S_t$ when $j'\neq j''$} \right\}$, $\ p=1,\cdots, \#S_t$.
\end{thm}

\bigskip
For the convenience of the applications of our location-amplitude identities, we derive the following corollary, as a direct consequence of Theorem \ref{thm:locaintenrelation1}. 
\begin{cor}\label{cor:locaintenrelation1}
Consider the model
\[
\mathcal F[\what \mu](\omega) = \mathcal F[\mu](\omega)+\vect w(\omega), \quad \omega \in [0, \Omega],
\]
where $\what \mu = \sum_{j=1}^d\what a_j\delta_{\what y_j}$ and $\mu=\sum_{j=1}^na_j\delta_{y_j}$ and assume that $|\vect w(\omega)|< \sigma, \omega \in [0, \Omega]$. For any fixed $y_t$ and $\what y_{t'}$, define the set $S_t$ as 
\begin{equation}\label{equ:defiofst1}
S_t:=\left\{y_1,\cdots, y_{t-1}, y_{t+1}, \cdots ,y_n, \what y_1,\cdots, \what y_{t'-1}, \what y_{t'+1},\cdots,  \what y_d \right\}.
\end{equation}
Let $\# S_t$ be the number of elements in $S_t$, i.e., $n+d-2$. Then, for any $0< \omega^*\leqs \frac{\Omega}{\#S_t}$, we have 
\begin{align}\label{equ:locaintenrelation3}
\left|\what {a}_{t'}\prod_{q\in S_t}\left(e^{i\what y_{t'}\omega^*}-e^{i q\omega^*}\right)- a_t \prod_{q\in S_t} \left(e^{i y_t\omega^*}- e^{i q\omega^*} \right)\right|
<  2^{\#S_t}\sigma.
\end{align}
Moreover,  for any $0< \omega^*\leqs \frac{\Omega}{\#S_t+1}$, we have  
\begin{align}\label{equ:locaintenrelation4}
\left|a_t \prod_{q\in S_t\cup \{\what y_{t'}\}}\left(e^{iy_t\omega^*} - e^{iq\omega^*}\right)  \right|< 2^{\#S_t+1}\sigma. 
\end{align}
\end{cor}
\begin{proof}
This is a direct consequence of Theorem \ref{thm:locaintenrelation1} in view of 
\[
|\vect w_1^{\top}\vect v|< 2^{\#S_t}\sigma, \qquad |(e^{i \what y_{t'} \omega^*}\vect w_1 -\vect w_2)^{\top}\vect v|< 2^{\#S_t+1}\sigma. 
\]
\end{proof}

\subsection{Proof of Theorem \ref{thm:locaintenrelation1}}
Before starting the proof, we first introduce some notation and lemmas. Denote  by
\begin{equation}\label{equ:phiformula}
\phi_{p,q}(t) = \left(t^{p}, t^{p+1}, \cdots, t^{q}\right)^{\top}.
\end{equation}

The following lemma on the inverse of the Vandermonde matrix is standard. 
\begin{lem}{\label{lem:invervandermonde0}}
Let $V_k$ be the Vandermonde matrix $\left(\phi_{0, k-1}(t_1), \cdots, \phi_{0, k-1}(t_k)\right)$. Then its inverse $V_k^{-1}=B$ can be specified as follows:
\[
B_{j q}=\left\{\begin{array}{cc}(-1)^{k-q}\left(\frac{\sum_{\substack{1 \leqs m_1<\ldots<m_{k-q} \leqs k \\ m_1, \ldots, m_{k-q} \neq j}} t_{m_1} \cdots t_{m_{k-q}}}{\prod_{\substack{1 \leqs m \leqs k \\ m \neq j}}\left(t_j-t_m\right)}\right), & \quad  1 \leqs q<k, \\
\frac{1}{\prod_{\substack{1 \leqs m \leqs k \\ m \neq j}}\left(t_j-t_m\right)}, & \quad  q=k.
\end{array}\right.
\]
\end{lem}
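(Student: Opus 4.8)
The plan is to recover $B=V_k^{-1}$ from its connection to Lagrange interpolation, so that each entry is read off from a polynomial identity rather than from a cofactor expansion. Since $V_k$ is square it suffices to check $BV_k=I$ for the matrix $B$ whose entries are the claimed ones. Because the $(m,q)$ entry of $V_k$ is $t_q^{\,m-1}$, the $(j,q)$ entry of the product is
\[
(BV_k)_{jq}=\sum_{m=1}^{k}B_{jm}\,t_q^{\,m-1}=P_j(t_q),\qquad P_j(t):=\sum_{m=1}^{k}B_{jm}\,t^{m-1}.
\]
Thus the whole lemma reduces to showing that the degree-at-most-$(k-1)$ polynomial $P_j$ attached to the $j$-th row of $B$ satisfies $P_j(t_q)=\delta_{jq}$ for every $q$.

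First I would identify which polynomial this must be. The unique polynomial of degree at most $k-1$ that takes the value $1$ at $t_j$ and $0$ at every other node is the Lagrange basis polynomial
\[
L_j(t)=\prod_{\substack{1\leq m\leq k\\ m\neq j}}\frac{t-t_m}{t_j-t_m}.
\]
Hence it is enough to expand $L_j$ in the monomial basis and verify that its coefficient of $t^{q-1}$ coincides with the stated $B_{jq}$; once the coefficients match we have $P_j=L_j$, and then $P_j(t_q)=L_j(t_q)=\delta_{jq}$ follows immediately, giving $BV_k=I$.

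The expansion step is where the bookkeeping lives. The denominator $\prod_{m\neq j}(t_j-t_m)$ is a common factor already appearing in the formula, so the work is concentrated in the numerator $\prod_{m\neq j}(t-t_m)$, a monic polynomial of degree $k-1$. By Vieta's formulas its coefficient of $t^{\,k-1-s}$ is $(-1)^s e_s\big(\{t_m:m\neq j\}\big)$, where $e_s=\sum_{1\leq m_1<\cdots<m_s\leq k,\ m_i\neq j}t_{m_1}\cdots t_{m_s}$ is the elementary symmetric polynomial of degree $s$ in the nodes other than $t_j$. Matching the target monomial $t^{q-1}$ forces $s=k-q$, which turns the coefficient into $(-1)^{k-q}e_{k-q}$; this is exactly the numerator and sign appearing in the stated $B_{jq}$ for $1\leq q<k$. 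The boundary case $q=k$ corresponds to $s=0$, where $e_0=1$ and the sign is $+1$, reproducing the remaining branch of the formula.

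The delicate point is this sign-and-degree matching rather than any conceptual difficulty: one must keep straight that the coefficient of $t^{q-1}$ pairs with the symmetric function of complementary degree $k-q$, that the accompanying sign is $(-1)^{k-q}$ and not $(-1)^q$, and that the index $j$ is excluded throughout every product and every symmetric sum. Once the coefficient identification is pinned down the conclusion is automatic, so the entire statement is the classical Lagrange-interpolation description of the Vandermonde inverse combined with Vieta's formulas.
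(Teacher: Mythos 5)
Your proof is correct and complete: identifying the $j$-th row of $B$ with the coefficient vector of the Lagrange basis polynomial $L_j$, expanding the numerator $\prod_{m\neq j}(t-t_m)$ by Vieta's formulas, and noting that $BV_k=I$ follows from $L_j(t_q)=\delta_{jq}$ (with left inverse implying two-sided inverse for a square matrix) reproduces exactly the stated entries, including the sign $(-1)^{k-q}$ and the complementary degree $k-q$. The paper gives no proof of this lemma at all---it is simply declared standard---so there is nothing to compare against; your Lagrange-interpolation argument is the classical derivation, and the identity $P_j=L_j$ you establish along the way is precisely the content of Lemma \ref{lem:invervandermonde1}, which the paper also defers to an external reference.
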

The following lemma can be deduced from the inverse of the Vandermonde matrix and the readers can check Lemma 5 in \cite{liu2021theorylse} for a simple proof, although the numbers there are real numbers. 
\begin{lem}{\label{lem:invervandermonde1}}
	Let $t_1, \cdots, t_k$ be $k$ different complex numbers. For $t\in \mathbb C$, we have
	\[
	\left(V_k^{-1}\phi_{0,k-1}(t)\right)_{j}=\prod_{1\leqs q\leqs k,q\neq j}\frac{t- t_q}{t_j- t_q},
	\]
	where $V_k:=  \big(\phi_{0,k-1}(t_1),\cdots,\phi_{0,k-1}(t_k)\big)$ with $\phi_{0, k-1}(\cdot)$ being defined by (\ref{equ:phiformula}). 
\end{lem}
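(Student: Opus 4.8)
The plan is to prove the equivalent matrix identity $V_k\,\vect c = \phi_{0,k-1}(t)$, where $\vect c = (c_1,\ldots,c_k)^{\top}$ with $c_j := \Pi_{1\leq q\leq k,\, q\neq j}\frac{t-t_q}{t_j-t_q}$ denoting the claimed right-hand side. Since the $t_j$ are pairwise distinct, $V_k$ is invertible, so this identity is exactly equivalent to $\vect c = V_k^{-1}\phi_{0,k-1}(t)$. First I would write out the $r$-th coordinate of $V_k\,\vect c$: because the rows of $V_k$ are $(t_1^{r},\ldots,t_k^{r})$ for $r=0,1,\ldots,k-1$, this coordinate equals $\sum_{j=1}^k t_j^{r}c_j$, so the entire claim reduces to showing
\[
\sum_{j=1}^k t_j^{r}\,\Pi_{1\leq q\leq k,\, q\neq j}\frac{t-t_q}{t_j-t_q} = t^{r}, \qquad r=0,1,\ldots,k-1.
\]

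Next I would recognize each $c_j=c_j(t)$ as the $j$-th Lagrange cardinal polynomial associated with the nodes $t_1,\ldots,t_k$, characterized by $\deg c_j\leq k-1$ and $c_j(t_i)=\delta_{ij}$. Consequently, for any function $f$, the sum $\sum_{j=1}^k f(t_j)\,c_j(t)$ is the unique polynomial of degree at most $k-1$ interpolating the data $(t_i,f(t_i))$. Applying this with $f(s)=s^{r}$ for fixed $r\leq k-1$: the monomial $s\mapsto s^{r}$ is itself a polynomial of degree at most $k-1$ agreeing with that data, and interpolation through $k$ distinct nodes by polynomials of degree $\leq k-1$ is unique, whence $\sum_{j=1}^k t_j^{r}c_j(t)=t^{r}$. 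Stacking these equalities for $r=0,\ldots,k-1$ gives $V_k\,\vect c=\phi_{0,k-1}(t)$ and hence the claim.

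Alternatively, to match the remark that the identity is ``deduced from the inverse of the Vandermonde matrix,'' I could instead expand directly using Lemma \ref{lem:invervandermonde0}: writing $\big(V_k^{-1}\phi_{0,k-1}(t)\big)_j=\sum_{q=1}^k B_{jq}\,t^{q-1}$ and expanding the numerator $\Pi_{q\neq j}(t-t_q)$ into elementary symmetric polynomials of $\{t_m:m\neq j\}$ with alternating signs, one checks term-by-term that the coefficient of $t^{q-1}$ coincides with $B_{jq}$ as given in Lemma \ref{lem:invervandermonde0}. This route is more computational but entirely mechanical.

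There is no genuine obstacle here: the result is the classical Lagrange interpolation identity, and the only substantive point over the cited real-variable version is that nothing in either argument uses ordering or positivity, so the whole proof transfers verbatim to pairwise distinct complex nodes $t_1,\ldots,t_k$ (invertibility of $V_k$ and uniqueness of polynomial interpolation hold over any field). The sole bookkeeping care needed is to keep the index conventions of the elementary symmetric polynomials consistent if one follows the explicit-inverse route; the interpolation argument avoids even that, so I would present it as the primary proof.
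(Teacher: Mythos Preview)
Your proposal is correct. The paper itself does not give a proof but merely remarks that the lemma ``can be deduced from the inverse of the Vandermonde matrix'' and refers to Lemma~5 in \cite{liu2021theorylse}; this corresponds to your alternative route via Lemma~\ref{lem:invervandermonde0}. Your primary argument---verifying $V_k\vect c=\phi_{0,k-1}(t)$ by recognizing the $c_j$ as Lagrange cardinal polynomials and invoking uniqueness of degree-$(k-1)$ interpolation---is a cleaner, computation-free route that avoids manipulating elementary symmetric polynomials; the explicit-inverse route is more in the spirit of the paper's hint but, as you note, requires more bookkeeping. Either argument works over any field, so the extension from real to complex nodes is immediate.
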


\medskip
Now we start the main proof. 
\begin{proof}
We only prove the theorem for $\omega^* \leqs  \frac{\Omega}{\#S_t+1}$. The case when $\omega^* \leqs  \frac{\Omega}{\#S_t}$ for (\ref{equ:locaintenrelation1}) is obvious afterwards. We fix  $t\in \{1,\cdots, n\}$ in the following derivations. From (\ref{equ:modelsetting0}),  we can write
\begin{align} \label{equ:matrixrelation0}
    \what A \what a = A a +W, 
\end{align}
where $\what a = (\what a_1, \cdots, \what a_d)^{\top}$, $a = (a_1, \cdots, a_n)^{\top}, W =\left(\vect w(0), \vect w(\omega^*), \cdots, \vect w((\#S_t+1)\omega^*)\right)^{\top}$ and 
\begin{align*}
&\what A = \left( \phi_{0,\#S_t+1}(e^{i \what y_1 \omega^*}), \ \phi_{0, \#S_t+1}(e^{i \what y_2 \omega^*}), \ \cdots, \ \phi_{0,\#S_t+1}(e^{i \what y_d   \omega^*})\right), \\
& A = \left( \phi_{0,\#S_t+1}(e^{i y_1 \omega^*}), \ \phi_{0,\#S_t+1}(e^{i  y_2 \omega^*}), \ \cdots, \ \phi_{0,\#S_t+1}(e^{i y_n \omega^*})\right),
\end{align*} 
with $0< \omega^*\leqs \frac{\Omega}{\#S_t+1}$. We further decompose (\ref{equ:matrixrelation0}) into the following two equations:
\begin{equation}\label{equ:matrixrelation1}
\what {a}_{t'} \phi_{0, \#S_t}(e^{i \what y_{t'}\omega^*}) = B_1 b + \vect w_1, \quad   \what {a}_{t'} \phi_{1, \#S_t+1}(e^{i \what y_{t'}\omega^*}) = B_2 b + \vect w_2, 
\end{equation}
where $\vect w_1 = (\vect w(0), \vect w(\omega^*), \cdots, \vect w(\#S_t \omega^*))^{\top}$, $\vect w_2 = (\vect w(\omega^*), \cdots, \vect w((\#S_t+1)\omega^*))^{\top}$ and 
\begin{align*}
B_1 &= \left( \phi_{0, \#S_t}(e^{i y_{1}\omega^*}),\ \cdots,\ \phi_{0, \#S_t}(e^{i y_{n}\omega^*}),\ \phi_{0, \#S_t}(e^{i\what  y_{1}\omega^*}),\ \cdots, \phi_{0, \#S_t}(e^{i\what  y_{t'-1}\omega^*}),\right. \\
& \qquad \qquad \ \left. \phi_{0, \#S_t}(e^{i\what  y_{t'+1}\omega^*}),\ \cdots, \ \phi_{0, \#S_t}(e^{i \what y_{d}\omega^*})  \right),\\
B_2 &= \left( \phi_{1, \#S_t+1}(e^{i y_{1}\omega^*}),\ \cdots,\ \phi_{1, \#S_t+1}(e^{i y_{n}\omega^*}),\ \phi_{1, \#S_t+1}(e^{i\what  y_{1}\omega^*}),\ \cdots, \phi_{1, \#S_t+1}(e^{i\what  y_{t'-1}\omega^*}),\right. \\
& \qquad \qquad \ \left. \phi_{1, \#S_t+1}(e^{i\what  y_{t'+1}\omega^*}),\ \cdots, \ \phi_{1, \#S_t+1}(e^{i \what y_{d}\omega^*})  \right).
\end{align*}
We first consider the case when all the $e^{y_j \omega^*}$'s and $e^{\what y_j\omega^*}$'s are distinct. Thus, $b=(b_1, \cdots, b_{\#S_t+1})$ in (\ref{equ:matrixrelation1}) is such that 
\[
b_{l} = \left\{\begin{array}{ll} 
a_l, & 1\leqs l\leqs n, \\
- \what a_{l-n}, & n<l\leq n+t'-1,\\
- \what a_{l-n+1}, & n+t'-1<l.
\end{array}\right.
\]
Observe that 
\begin{align*}
&B_2 = B_1 \mathrm{diag}\left(e^{i y_1\omega^*} ,\cdots, e^{i y_n \omega^*}, e^{i \what y_{1} \omega^*}, \cdots, e^{i \what y_{t'-1} \omega^*}, e^{i \what y_{t'+1} \omega^*},\cdots, e^{i \what y_{d} \omega^*}\right),\\
& \phi_{1, \#S_t+1}(e^{i y_{l}\omega^*}) = e^{i y_{l}\omega^*}\phi_{0, \#S_t}(e^{i y_{l}\omega^*}), 
\end{align*}
we rewrite (\ref{equ:matrixrelation1}) as 
\begin{equation}\label{equ:matrixrelation2}
\begin{aligned}
&\what {a}_{t'} \phi_{0, \#S_t}(e^{i \what y_{t'}\omega^*}) = B_1 b + \vect w_1,\\  
&e^{i \what y_{t'}\omega^*}\what {a}_{t'} \phi_{0, \#S_t}(e^{i \what y_{t'}\omega^*}) = B_1\mathrm{diag}\left(e^{i y_1\omega^*} ,\cdots, e^{i y_n \omega^*}, e^{i \what y_{1} \omega^*}, \cdots, e^{i \what y_{t'-1} \omega^*}, e^{i \what y_{t'+1} \omega^*},\cdots, e^{i \what y_{d} \omega^*}\right)b + \vect w_2. 
\end{aligned}
\end{equation}
Since all the $e^{iy_j\omega^*}$'s and $e^{i\what y_j \omega^*}$'s are pairwise distinct, $B_1$ is a regular matrix. We multiply  both sides of the above equations by the inverse of $B_1$ to get from Lemma \ref{lem:invervandermonde1} that 
\begin{align}
\what {a}_{t'}\prod_{q\in S_t}\frac{e^{i\what y_{t'}\omega^*}-e^{i q\omega^*}}{e^{i y_t\omega^*}- e^{i q\omega^*}} &= a_t + (B_1^{-1})_{t} \vect w_1, \label{equ:intensityrelation1}\\
e^{i \what y_{t'}\omega^*} \what {a}_{t'}\prod_{q\in S_t}\frac{e^{i\what y_{t'}\omega^*}-e^{i q\omega^*}}{e^{i y_t\omega^*}- e^{i q\omega^*}} &= e^{i y_{t}\omega^*} a_t + (B_1^{-1})_{t} \vect w_2, \label{equ:intensityrelation2}
\end{align}
where $(B_1^{-1})_{t}$ is the $t$-th row of $B_1^{-1}$. By Lemma \ref{lem:invervandermonde0},  it follows that
\begin{equation}\label{equ:matrixrelation3}
(B_1^{-1})_{t}\vect w_1 =  \frac{\sum_{p=0}^{\#S_t-1}\left(\vect w(p \omega^*) (-1)^{\#S_t-p}\sum_{\{q_1,\cdots, q_{\#S_{t}-p}\}\in S_{t, \#S_t-p}} e^{iq_1\omega^*}\cdots e^{iq_{\#S_t-p}\omega^*}\right)+ \vect w(\#S_t \omega^*)}{\prod_{q\in S_t}(e^{iy_t\omega^*} - e^{i q \omega^*})}.  
\end{equation}
Thus, multiplying $\prod_{q\in S_t}\left(e^{i y_t\omega^*}- e^{i q\omega^*}\right)$ in both sides of (\ref{equ:intensityrelation1}) proves (\ref{equ:locaintenrelation1}) in the case when all the $e^{iy_j\omega^*}$'s and $e^{i\what y_j \omega^*}$'s are pairwise distinct. Furthermore, equation (\ref{equ:intensityrelation1}) times $e^{i \what y_{t'} \omega^*}$ minus (\ref{equ:intensityrelation2}) yields 
\[
(e^{i y_{t}\omega^*} - e^{i \what y_{t'}\omega^*})a_t = (B_1^{-1})_{t}\left(e^{i \what y_{t'} \omega^*}\vect w_1-\vect w_2\right).
\]
Similarly, further expanding $(B_1^{-1})_{t}\left(e^{i \what y_{t'} \omega^*}\vect w_1-\vect w_2\right)$ explicitly by Lemma \ref{lem:invervandermonde0} and multiplying $\prod_{q\in S_t}\left(e^{i y_t\omega^*}- e^{i q\omega^*}\right)$ in both sides above yields (\ref{equ:locaintenrelation2}). 

Finally, we consider the case when the $e^{iy_j\omega^*}$'s and $e^{i\what y_j \omega^*}$'s are not pairwise distinct. Since it is a limiting case of the above cases, (\ref{equ:locaintenrelation1}) and (\ref{equ:locaintenrelation2}) still hold in this case. This completes the proof. 
\end{proof}

\section{Stability of super-resolution in dimension one}\label{section:stabilitySR}

In this section, based on our location-amplitude identities, we analyze the super-resolution capability of the reconstruction of the numbers, locations, and amplitudes of off-the-grid sources in the one-dimensional super-resolution problem. Note that these problems have been analyzed in \cite{liu2021theorylse, batenkov2019super} from different perspectives but the proofs are over several tens of pages. Now, by our method, we have a direct and clear picture of all these problems, which allows us to prove them in a unified way and in less than ten pages.

We consider the imaging model (\ref{equ:highdmodelsetting1}) and focus on the one-dimensional case in this section. Since the source locations $\vect y_j$'s are the supports of the Dirac masses in $\mu$, we use the support recovery for a substitution of the location reconstruction throughout the paper.

Since we focus on the resolution limit case, we consider the case when the point sources are tightly spaced and form a cluster. To be more specific, we denote the ball in the $k$-dimensional space by
\begin{equation}\label{equ:highdball}
B_\delta^k(\mathbf{x}):=\left\{\mathbf{y} \ \bigg| \mathbf{y} \in \mathbb{R}^k,\|\mathbf{y}-\mathbf{x}\|_2<\delta\right\},
\end{equation}
and assume $\mathbf{y}_j \in B_{\frac{(n-1) \pi}{2\Omega}}^k(\mathbf{0}), j=1, \ldots, n$, or equivalently $\left\|\mathbf{y}_j\right\|_2<\frac{(n-1) \pi}{2 \Omega}$. This assumption is a common assumption for super-resolving the off-the-grid sources \cite{liu2021theorylse, batenkov2019super} and is necessary for the analysis. Since we are interested in resolving closely-spaced sources, it is also reasonable. We remark that our results for sources in $B_{\frac{(n-1) \pi}{2\Omega}}^k(\mathbf{0})$ can be directly generalized to sources in $B_{\frac{(n-1) \pi}{2\Omega}}^k(\mathbf{x}),\ \vect x\in \mathbb R^k$. 

The reconstruction process is usually targeting at some specific solutions in a so-called admissible set, which comprises discrete measures whose Fourier data are sufficiently close to $\vect Y$. In general, every admissible measure is possibly the ground truth and it is impossible to distinguish which one is closer to the ground truth without any additional prior information. In our problem, we introduce the following concept of $\sigma$-admissible discrete measures. For simplicity, we also call them $\sigma$-admissible measures.

\begin{defi}\label{defi:highdsigmaadmissmeasure}
	Given the measurement $\mathbf Y$ in (\ref{equ:highdmodelsetting1}), $\what \mu=\sum_{j=1}^{d} \what a_j \delta_{\mathbf {\what y}_j}$ is said to be a $\sigma$-admissible discrete measure of \, $\mathbf Y$ if
	\[
	\babs{\mathcal F[\what \mu](\vect \omega)-\mathbf Y(\vect \omega)}< \sigma, \quad \bnorm{\vect \omega}_2 \leqs \Omega.
	\]
	If further $\what a_j>0, j=1, \cdots, d$, then $\what \mu$ is said to be a positive $\sigma$-admissible discrete measure of \, $\mathbf Y$. 
\end{defi}

\subsection{Stability of number detection}\label{section:numberdetection}
In this section, we estimate the super-resolving capability of number detection in the super-resolution problem. We introduce the concept of computational resolution limit for number detection \cite{liu2021mathematicaloned, liu2021theorylse, liu2021mathematicalhighd} and present a sharp bound for it. 

Note the set of $\sigma$-admissible measures of $\mathbf Y$ characterizes all possible solutions to our super-resolution problem with the given measurement $\mathbf Y$. Detecting the source number $n$ is possible only if all of the admissible measures have at least $n$ supports, otherwise, it is impossible to detect the correct source number without additional a priori information. Thus, following definitions similar to those in \cite{liu2021theorylse, liu2021mathematicaloned, liu2021mathematicalhighd}, we define the computational resolution limit for the number detection problem as follows. 

\begin{defi}\label{defi:highdcomputresolimit}
The computational resolution limit to the number detection problem in the super-resolution of sources in $\mathbb R^k$ is defined as the smallest nonnegative number $\mathcal D_{num}(k, n)$ such that for all $n$-sparse measures $\sum_{j=1}^{n}a_{j}\delta_{\vect y_j}, a_j\in \mathbb C,\vect y_j \in B_{\frac{(n-1) \pi}{2\Omega}}^k(\mathbf{0})$ and the associated measurement $\vect Y$ in (\ref{equ:highdmodelsetting1}), if 
	\[
	\min_{p\neq j} \bnorm{\vect y_j-\vect y_p}_2 \geqs \mathcal D_{num}(k, n),
	\]
then there does not exist any $\sigma$-admissible measure of \, $\mathbf Y$ with less than $n$ supports. In particular, when considering positive sources and positive $\sigma$-admissible measures, the corresponding computational resolution limit is denoted by $\mathcal D_{num}^+(k, n)$.
\end{defi}

The definition of ``computational resolution limit'' emphasizes the impossibility of correctly detecting the number of very close sources by any means. It depends crucially on the signal-to-noise ratio and the sparsity of the sources, which is fundamentally different from all the classical resolution limits \cite{abbe1873beitrage, volkmann1966ernst, rayleigh1879xxxi, schuster1904introduction, sparrow1916spectroscopic} that depend only on the cutoff frequency. 

Our first result is a sharp estimate for the upper bound of the computational resolution limit in the one-dimensional super-resolution problem. As we have said, we will use $y_j$'s to denote the one-dimensional source locations.

\begin{thm}\label{thm:upperboundnumberlimithm0}
    Let the measurement $\mathbf Y$ in (\ref{equ:highdmodelsetting1}) be generated by any one-dimensional source $\mu =\sum_{j=1}^{n}a_j\delta_{y_j}$ with $y_j \in B_{\frac{(n-1) \pi}{2\Omega}}^1(0), j=1,\cdots, n$. Let $n\geqs 2$ and assume that the following separation condition is satisfied 
	\begin{equation}\label{upperboundnumberlimithm0equ0}
	\min_{p\neq j}\Big|y_p-y_j\Big|\geqs \frac{2e\pi }{\Omega }\Big(\frac{\sigma}{m_{\min}}\Big)^{\frac{1}{2n-2}},
	\end{equation}
 where $\sigma, m_{\min}$ are defined as in (\ref{equ:noiselevel1}), (\ref{equ:intendisset}), respectively. Then there does not exist any $\sigma$-admissible measures of \,$\mathbf Y$ with less than $n$ supports. Moreover, for the cases when $n=2$ and $n=3$, if 
 \[
\min_{p\neq j}\Bigg|y_p-y_j\Bigg|\geqs \frac{2\arcsin\left(2\left(\frac{\sigma}{m_{\min}}\right)^{\frac{1}{2}}\right)}{\Omega}, \quad \min_{p\neq j}\Bigg|y_p-y_j\Bigg|\geqs \frac{2\pi}{\Omega }\Big(\frac{8\sigma}{m_{\min}}\Big)^{\frac{1}{4}}, \quad \text{respectively},
 \]
 then there does not exist any $\sigma$-admissible measures of \,$\mathbf Y$ with less than $n$ supports.
\end{thm}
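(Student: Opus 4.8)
The plan is to argue by contradiction. Assume the separation condition (\ref{upperboundnumberlimithm0equ0}) holds, yet some $\sigma$-admissible measure $\hat\mu=\sum_{j=1}^k\hat a_j\delta_{\hat y_j}$ with $k\le n-1$ supports exists. Since $\mathbf Y=\mathcal F[\mu]+\mathbf W$ with $|\mathbf W|<\sigma$ and $|\mathcal F[\hat\mu]-\mathbf Y|<\sigma$, the function $\vect w:=\mathcal F[\hat\mu]-\mathcal F[\mu]$ satisfies $|\vect w(\omega)|<2\sigma$ on $[0,\Omega]$. Hence Corollary \ref{cor:locaintenrelation1} applies to the pair $(\hat\mu,\mu)$ with $\sigma$ replaced by $2\sigma$; this single observation is the only place admissibility enters, and it reduces the whole problem to the algebraic inequalities (\ref{equ:locaintenrelation3})--(\ref{equ:locaintenrelation4}).

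For the main bound I would exploit the cardinality mismatch. The $n$ true nodes are pairwise $d_{\min}$-separated, so the open intervals of radius $d_{\min}/2$ around them are disjoint; since there are at most $n-1$ recovered nodes, at least one true node $y_{j_0}$ has every recovered node (in particular its nearest one $\hat y_{j_0'}$) at distance $\ge d_{\min}/2$. Applying the location identity (\ref{equ:locaintenrelation4}) at this index and clearing the denominator turns the left-hand side into the full product of gaps
\[
\babs{e^{i\hat y_{j_0'}\omega^*}-e^{iy_{j_0}\omega^*}}\,\Pi_{q\in S}\babs{e^{iy_{j_0}\omega^*}-e^{iq\omega^*}}=\Pi_{q}\,2\babs{\sin\!\big(\tfrac12(y_{j_0}-q)\omega^*\big)}
\]
taken over all $\#S+1\le 2n-2$ remaining nodes $q$ (true and recovered), which (\ref{equ:locaintenrelation4}) bounds above by $2^{\#S+1}\cdot 2\sigma/m_{\min}$. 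Choosing $\omega^*$ as large as the identity permits, namely $\omega^*=\Omega/(\#S+1)$, and bounding every sine factor from below, one obtains an inequality of the form $\big(\mathrm{const}\cdot\sin(\tfrac12 d_{\min}\omega^*)\big)^{2n-2}\lesssim \sigma/m_{\min}$, where the worst case is $k=n-1$; solving for $d_{\min}$ and absorbing the constants produces the threshold $\tfrac{2e\pi}{\Omega}(\sigma/m_{\min})^{1/(2n-2)}$, contradicting (\ref{upperboundnumberlimithm0equ0}).

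The hard part will be the lower bound on that product of gaps. The factors coming from the other true nodes are controlled by $d_{\min}$ together with the clustering assumption $y_j\in I(n,\Omega)$, but the factors coming from the recovered nodes are a priori unconstrained: a recovered node could sit very close to $y_{j_0}$, or far enough that its half-angle $\tfrac12(y_{j_0}-\hat y_l)\omega^*$ leaves the regime where $|\sin|$ is bounded below. The first danger disappears automatically, because the closest recovered node $\hat y_{j_0'}$ is excluded from $S$ and enters only as the displacement factor, while the isolation of $y_{j_0}$ keeps every remaining node at distance $\ge d_{\min}/2$; the second danger is where one must separately argue that the recovered supports cannot escape to where the sine factors vanish, using the smallness of $\omega^*=\Omega/(\#S+1)$ and the fact that the nodes relevant to a tight cluster stay inside a half-period of the sine. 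Carrying out the resulting product-of-sines estimate is the genuine technical core: it is essentially a lower bound for a Vandermonde determinant on the unit circle, and the normalising factorial of the node spacings, estimated by Stirling, is exactly what generates the factor $e$, while the elementary bound $\sin x\ge\tfrac{2}{\pi}x$ generates the $\pi$.

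Finally, the refined statements for $n=2$ and $n=3$ concern the sparsest competitor, namely distinguishing $n$ sources from $n-1$; for $n=2$ this is precisely the two-point problem and for $n=3$ the three-versus-two-point problem. Here I would not pass through the generic product-of-sines bound, since with so few nodes the factorial/Stirling loss and the restriction $\omega^*\le\Omega/(\#S+1)$ are both wasteful: indeed the $n=3$ constant $\tfrac{2\pi}{\Omega}(8\sigma/m_{\min})^{1/4}$ already differs from the generic $\tfrac{2e\pi}{\Omega}(\sigma/m_{\min})^{1/4}$ precisely by the removal of the Stirling factor $e$. Instead I would evaluate the estimate exactly, using the amplitude identity (\ref{equ:locaintenrelation3}) at the full band together with the unit-modulus constraint on the recovered node and $m_{\min}\le|a_j|$; inverting the sine directly then yields the exact thresholds $\tfrac{2}{\Omega}\arcsin\!\big(2(\sigma/m_{\min})^{1/2}\big)$ and $\tfrac{2\pi}{\Omega}(8\sigma/m_{\min})^{1/4}$, in agreement with the optimal two-point diffraction-limit analysis of Section 4 leading to (\ref{equ:diffractionlimit1}). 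The elementary inequalities $\tfrac{2}{\pi}x\le\sin x\le x$ and the Stirling estimate used throughout would be collected in the Appendix.
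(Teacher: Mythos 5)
Your skeleton matches the paper's: argue by contradiction, note that admissibility gives $|\vect w|<2\sigma$, feed this into Corollary \ref{cor:locaintenrelation1}, and reduce everything to a lower bound on a product of gaps $\Pi_q|e^{iy_j\omega^*}-e^{iq\omega^*}|$, with Stirling producing the factor $e$. But the mechanism you propose for the core estimate does not work, for two reasons. First, your pigeonhole step (isolate one true node $y_{j_0}$ with all recovered nodes at distance $\ge d_{\min}/2$) only bounds each recovered-node factor individually, giving $\left(\tfrac{\theta_{\min}}{\pi}\right)^{n-1}$ for that part of the product with no factorial gain; the factorial then comes solely from the $n-1$ true-node factors, i.e.\ from $\zeta(n)$ alone. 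Since $\omega^*$ scales like $\Omega/(n-1)$, you need $\big(\zeta(n)\xi(n-1)\big)^{1/(2n-2)}\sim \tfrac{n-1}{2e}$ — two factorials — to cancel the $(n-1)$ and obtain the $n$-independent constant $2e\pi$; with one factorial you are left with a constant growing like $\sqrt{n-1}$, so the theorem as stated is not proved. The paper supplies the missing second factorial through Lemma \ref{lem:multiproductlowerbound1} (Corollary 7 of \cite{liu2021theorylse}): for $n$ well-separated points on the unit circle and \emph{arbitrary} $n-1$ points $e^{i\hat\theta_q}$, $\max_j\Pi_q|e^{i\theta_j}-e^{i\hat\theta_q}|\geq \xi(n-1)\left(\tfrac{2\theta_{\min}}{\pi}\right)^{n-1}$. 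This is exactly the "genuine technical core" you flag but do not carry out, and it also disposes of your "second danger": the recovered supports $\hat y_l$ are not assumed to lie in $I(n,\Omega)$, so on the real line they can alias ($e^{i\hat y_l\omega^*}=e^{iy_{j_0}\omega^*}$ with $\hat y_l$ far from $y_{j_0}$) and kill a sine factor; your interval pigeonhole cannot exclude this, whereas the lemma is a statement about points on the circle and holds regardless.

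Second, you lose a factor of $2$ in the usable bandwidth. You take $\omega^*=\Omega/(\#S+1)$ from the one-sided data on $[0,\Omega]$, while the paper first modulates ($\rho=\sum e^{-iy_j\Omega}a_j\delta_{y_j}$) to convert the measurement on $[-\Omega,\Omega]$ into data on $[0,2\Omega]$ and then takes $\omega^*=2\Omega/(2n-2)$. This doubling is needed both for the generic constant and for the exact two-point threshold $\tfrac{2}{\Omega}\arcsin\!\big(2(\sigma/m_{\min})^{1/2}\big)$: with only $\omega^*=\Omega$ replaced by $\Omega/2$ you would get a threshold twice as large. For $n=2$ the paper does not pass through the amplitude identity as you suggest, but through (\ref{equ:locaintenrelation4}) combined with the elementary observation that the single recovered node cannot be within angular distance $\theta_{\min}/2$ of both true nodes, whence $\max_{j=1,2}|e^{iy_j\omega^*}-e^{i\hat y_1\omega^*}|\ge 2\sin(\theta_{\min}/4)$; for $n=3$ the refinement is nothing more than evaluating $\tfrac{2}{\zeta(3)\xi(2)}=8$ without the Stirling relaxation. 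Finally, the paper must (and does) treat separately the degenerate configuration in which some $e^{i\hat y_l\omega^*}$ coincides with some $e^{iy_p\omega^*}$, via a small perturbation of $\hat\mu$; your proposal omits this, though it is a minor repair compared with the two issues above.
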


Theorem \ref{thm:upperboundnumberlimithm0} gives a sharper upper bound for the computational resolution limit $\mathcal D_{num}(1, n)$ compared to the one in \cite{liu2021theorylse}. By the new estimate (\ref{thm:upperboundnumberlimithm0}), it is already possible to surpass the Rayleigh limit $\frac{\pi}{\Omega}$ in detecting source number when $\frac{m_{\min}}{\sigma}\geqs (2e)^{2n-2}$. Moreover, this upper bound is shown to be sharp by a lower bound provided in \cite{liu2022rslpositive}. Thus, we can conclude that 
\[
\frac{2e^{-1}}{\Omega }\Big(\frac{\sigma}{m_{\min}}\Big)^{\frac{1}{2n-2}}< \mathcal D_{num}(1, n) \leqs \frac{2e\pi}{\Omega }\Big(\frac{\sigma}{m_{\min}}\Big)^{\frac{1}{2n-2}}.
\]
It is also easy to generalize the estimates in Theorem \ref{thm:upperboundnumberlimithm0} to high-dimensional spaces by methods in \cite{liu2021mathematicalhighd, liu2023improved}, whereby we can obtain that 
\[
\frac{2e^{-1}}{\Omega }\Big(\frac{\sigma}{m_{\min}}\Big)^{\frac{1}{2n-2}}< \mathcal D_{num}(k, n) \leqs \frac{C_1(k,n)}{\Omega }\Big(\frac{\sigma}{m_{\min}}\Big)^{\frac{1}{2n-2}}
\]
for a constant $C_1(k,n)$.

In particular, for the case when $n=2$, our estimate demonstrates that when the signal-to-noise ratio $\mathrm{SNR}>4$, then the resolution is better than the Rayleigh limit and the "super-resolution in number detection" can be achieved thusly. This result is already practically important. As we will see later, our estimate is very sharp and close to the true two-point resolution limit.

\begin{remark}
We remark that our new techniques also provide a way to analyze the stability of number detection for sources with  multi-cluster patterns. Our former method (also the only one we know of) for analyzing the stability of number detection cannot handle such cases. The technique here is the first known method that can tackle the issue. But since the current paper focuses on understanding the resolution limits in the super-resolution, the multi-cluster case is out of scope and we leave it as a future work. 
\end{remark}


We now present the proof of Theorem \ref{thm:upperboundnumberlimithm0}. The problem is essentially a nonlinear approximation problem where we have to optimize the approximation over the coupled factors: source number $d$, source locations $\what y_j$'s, and amplitudes $\what a_j$'s. Here, by leveraging the location-amplitude identities, we prove it in a rather simple and direct way.

We first denote for an integer $k\geqs 1$, 
\begin{equation}\label{zetaxiformula1}
\begin{aligned}
&\zeta(k)= \left\{
\begin{array}{cc}
(\frac{k-1}{2}!)^2,& \text{if $k$ is odd,}\\
(\frac{k}{2})!(\frac{k-2}{2})!,& \text{if $k$ is even,}
\end{array} 
\right. \quad \xi(k)=\left\{
\begin{array}{cc}
\frac{1}{2},  & \text{if } k=1,\\
\frac{(\frac{k-1}{2})!(\frac{k-3}{2})!}{4},& \text{if $k$ is odd,\,\,$ k\geqs 3$,}\\
\frac{(\frac{k-2}{2}!)^2}{4},& \text{if $k$ is even}.
\end{array} 
\right.	
\end{aligned}
\end{equation}
We also define for positive integers $p, q$, and $z_1, \cdots, z_p, \what z_1, \cdots, \what z_q \in \mathbb C$, the following vector in $\mathbb{R}^p$
\begin{equation}\label{notation:eta}
\eta_{p,q}(z_1,\cdots,z_{p}, \what z_1,\cdots,\what z_q)=\left(\begin{array}{c}
|(z_1-\what z_1)|\cdots|(z_1-\what z_q)|\\
|(z_2-\what z_1)|\cdots|(z_2-\what z_q)|\\
\vdots\\
|(z_{p}-\what z_1)|\cdots|(z_{p}-\what  z_q)|
\end{array}\right).
\end{equation}
We recall the following useful lemmas. 
\begin{lem}\label{lem:multiproductlowerbound0}
	Let $-\frac{\pi}{2}\leqs \theta_1<\theta_2<\cdots<\theta_{k}  \leqs \frac{\pi}{2}$ with $\min_{p\neq j}|\theta_p-\theta_j|=\theta_{\min}$. We have the estimate 	
	\[
	\prod_{1\leqs p\leqs k,p\neq j}{\left|e^{i\theta_j}-e^{i\theta_p}\right|}\geqs \zeta(k)\left(\frac{2\theta_{\min}}{\pi}\right)^{k-1},\  j=1,\cdots, k,
	\]
 where $\zeta(k)$ is defined in (\ref{zetaxiformula1}).
\end{lem}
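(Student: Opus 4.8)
The plan is to reduce the product of chordal distances on the unit circle to a product of angular gaps, and then to a product of integer differences whose minimum over the index $j$ yields exactly $\zeta(k)$. First I would rewrite each factor using the elementary identity $\left|e^{i\theta_j}-e^{i\theta_p}\right| = 2\left|\sin\left(\frac{\theta_p-\theta_j}{2}\right)\right|$. Since all the $\theta$'s lie in $[-\frac{\pi}{2},\frac{\pi}{2}]$, every half-difference $\frac{\theta_p-\theta_j}{2}$ lies in $[-\frac{\pi}{2},\frac{\pi}{2}]$, so I can apply the standard concavity bound $|\sin x|\geq \frac{2|x|}{\pi}$ valid on that interval (the sine curve lies above the chord joining $(0,0)$ and $(\frac{\pi}{2},1)$). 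This gives the pointwise estimate $\left|e^{i\theta_j}-e^{i\theta_p}\right|\geq \frac{2|\theta_p-\theta_j|}{\pi}$ for every pair, and hence $\Pi_{p\neq j}\left|e^{i\theta_j}-e^{i\theta_p}\right|\geq \left(\frac{2}{\pi}\right)^{k-1}\Pi_{p\neq j}|\theta_p-\theta_j|$.

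Next I would exploit the ordering and the separation hypothesis. Because the $\theta$'s are strictly increasing with consecutive gaps at least $\theta_{\min}$, I have $|\theta_p-\theta_j|\geq |p-j|\,\theta_{\min}$ for every $p\neq j$. Substituting this into the product yields $\Pi_{p\neq j}|\theta_p-\theta_j|\geq \theta_{\min}^{k-1}\,\Pi_{1\leq p\leq k, p\neq j}|p-j|$. The remaining integer product factors cleanly as $\Pi_{p\neq j}|p-j| = (j-1)!\,(k-j)!$, since the terms with $p<j$ contribute the descending integers $j-1,\dots,1$ and those with $p>j$ contribute the ascending integers $1,\dots,k-j$.

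The crux is the final minimization over $j$. Because $(j-1)+(k-j)=k-1$ is fixed, minimizing $(j-1)!\,(k-j)!$ is a matter of balancing the two factorials: for a fixed sum of arguments, a product of factorials is smallest when the arguments are as equal as possible. Carrying this out, the minimum is attained at $j=\frac{k+1}{2}$ when $k$ is odd, giving $\left(\left(\frac{k-1}{2}\right)!\right)^2$, and at $j=\frac{k}{2}$ (or $\frac{k}{2}+1$) when $k$ is even, giving $\left(\frac{k}{2}\right)!\left(\frac{k-2}{2}\right)!$; in both cases this is precisely $\zeta(k)$ as defined in (\ref{zetaxiformula1}). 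Combining the three displayed bounds produces $\Pi_{p\neq j}\left|e^{i\theta_j}-e^{i\theta_p}\right|\geq \zeta(k)\left(\frac{2\theta_{\min}}{\pi}\right)^{k-1}$ uniformly in $j$, which is the claim. I expect the only genuinely delicate point to be verifying the factorial-balancing inequality rigorously; this follows by noting that replacing an unbalanced pair $(m,n)$ with $m\leq n-2$ by $(m+1,n-1)$ strictly decreases the product, since $(m+1)!\,(n-1)! = \frac{m+1}{n}\,m!\,n! < m!\,n!$, so the product decreases under every step toward balance. Everything else is a direct chain of elementary estimates.
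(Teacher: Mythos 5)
Your proof is correct and follows essentially the same route as the paper: both reduce each chordal factor to an angular gap via $\left|e^{i\theta_j}-e^{i\theta_p}\right|\geq \frac{2}{\pi}\left|\theta_j-\theta_p\right|$ on $\left[-\frac{\pi}{2},\frac{\pi}{2}\right]$ and then bound the product of gaps by $\zeta(k)\theta_{\min}^{k-1}$. The only difference is that the paper asserts this last product inequality without detail, whereas you supply the (correct) combinatorial justification via $|\theta_p-\theta_j|\geq |p-j|\theta_{\min}$, the factorization $\Pi_{p\neq j}|p-j|=(j-1)!\,(k-j)!$, and the balancing argument identifying the minimum over $j$ with $\zeta(k)$.
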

\begin{proof}
Note that 
\begin{equation} \label{eq-theta}
\left|e^{i\theta_j}-e^{i\theta_p} \right| \geqs \frac{2}{\pi} \left|\theta_j-\theta_p\right|, \quad \mbox{for all}\,\, \theta_j, \theta_p \in \Big[-\frac{\pi}{2}, \frac{\pi}{2}\Big].
\end{equation}
Then we have
\begin{align*}
\prod_{1\leqs p\leqs k,p\neq j}{\left|e^{i\theta_j}-e^{i\theta_p}\right|}\
\geqs  \left(\frac{2}{\pi}\right)^{k-1} \prod_{1\leqs p\leqs k,p\neq j}{\left|\theta_j-\theta_p\right|} \geqs \zeta(k)\left(\frac{2\theta_{\min}}{\pi}\right)^{k-1}.
\end{align*}
\end{proof}


\begin{lem}\label{lem:multiproductlowerbound1}
	Let $-\frac{\pi}{2}\leqs \theta_1<\theta_2<\cdots<\theta_{k+1}  \leqs \frac{\pi}{2}$. Assume that $\min_{p\neq j}|\theta_p-\theta_j|=\theta_{\min}$. Then for any $\what \theta_1,\cdots, \what \theta_k\in \mathbb R$, we have the following estimate: 	
	\[\bnorm{\eta_{k+1,k}(e^{i\theta_1},\cdots,e^{i\theta_{k+1}},e^{i \what \theta_1},\cdots,e^{i\what \theta_k})}_{\infty}\geqs \xi(k)\left(\frac{2\theta_{\min}}{\pi}\right)^k, 
	\] 
 where $\eta_{k+1,k}$ is defined as in (\ref{notation:eta}). 
\end{lem}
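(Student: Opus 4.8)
The plan is to read the $\ell$-th entry of $\eta_{k+1,k}$ as the modulus of a fixed monic polynomial evaluated at a node. Setting $x_\ell := e^{i\theta_\ell}$ and $P(z):=\prod_{m=1}^k\bigl(z-e^{i\hat\theta_m}\bigr)$, the polynomial $P$ is monic of degree $k$ and its roots are exactly the (arbitrary) hat points, so by \eqref{notation:eta} the $\ell$-th component of $\eta_{k+1,k}$ equals $\prod_{m=1}^k|e^{i\theta_\ell}-e^{i\hat\theta_m}| = |P(x_\ell)|$. Hence $\bnorm{\eta_{k+1,k}(e^{i\theta_1},\dots,e^{i\theta_{k+1}},e^{i\hat\theta_1},\dots,e^{i\hat\theta_k})}_{\infty}=\max_{1\le \ell\le k+1}|P(x_\ell)|$, and the lemma becomes a statement about how large a monic degree-$k$ polynomial must be at one of $k+1$ prescribed nodes. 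The device I would use is the leading-coefficient (divided-difference) identity: interpolating $P$ at the $k+1$ distinct nodes $x_\ell$ and reading off the coefficient of $z^k$ gives $\sum_{\ell=1}^{k+1}\frac{P(x_\ell)}{\prod_{m\ne \ell}(x_\ell-x_m)}=1$. This is precisely what neutralises the arbitrariness of the $\hat\theta_m$: wherever the roots sit, the weighted sum of the values is forced to equal the leading coefficient $1$.

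Taking moduli and extracting the maximum yields $1\le \max_\ell|P(x_\ell)|\cdot\sum_{\ell=1}^{k+1}\frac{1}{\prod_{m\ne \ell}|x_\ell-x_m|}$, so it remains to upper bound that sum, i.e.\ to lower bound each node product $\prod_{m\ne \ell}|e^{i\theta_\ell}-e^{i\theta_m}|$. Here I would reuse the chordal inequality \eqref{eq-theta}, $|e^{i\alpha}-e^{i\beta}|\ge\frac{2}{\pi}|\alpha-\beta|$ (valid since all $\theta_m\in[-\frac\pi2,\frac\pi2]$), together with the ordering of the $\theta$'s: the $\ell$-th point has $\ell-1$ neighbours to its left and $k+1-\ell$ to its right, at cumulative distances at least $\theta_{\min},2\theta_{\min},\dots$ on each side. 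This gives $\prod_{m\ne \ell}|e^{i\theta_\ell}-e^{i\theta_m}|\ge\left(\tfrac{2\theta_{\min}}{\pi}\right)^{k}(\ell-1)!\,(k+1-\ell)!$. Summing the reciprocals and using the binomial identity $\sum_{\ell=1}^{k+1}\frac{1}{(\ell-1)!(k+1-\ell)!}=\frac{1}{k!}\sum_{j=0}^{k}\binom{k}{j}=\frac{2^{k}}{k!}$, I obtain $\max_\ell|P(x_\ell)|\ge\frac{k!}{2^{k}}\left(\frac{2\theta_{\min}}{\pi}\right)^{k}$.

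To finish, I would check that this constant dominates the stated one, $\frac{k!}{2^{k}}\ge\xi(k)$ for all $k\ge1$ (equality at $k=1$; for larger $k$ the factorial constant is far larger, since by \eqref{zetaxiformula1} one has $\xi(k)=\zeta(k-1)/4$, built from a single ``middle'' product, whereas $k!/2^k$ aggregates all nodes), which upgrades the bound to the form claimed with $\xi(k)$. I expect the one genuinely delicate point to be the treatment of the \emph{arbitrary} hat points $\hat\theta_m$: they may cluster around any fixed $\theta_\ell$ and so defeat any attempt to lower bound a single node product directly, and the interpolation identity is exactly what circumvents this by trading the adversarial placement for the clean value $1$. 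The appearance of $\xi(k)=\zeta(k-1)/4$ in the statement suggests the authors instead reduce directly to Lemma \ref{lem:multiproductlowerbound0}, presumably via a pigeonhole choice of one node (the $k+1$ disjoint $\frac{\theta_{\min}}{2}$-balls cannot all be met by only $k$ hat points) combined with the product estimate there; but coping with clustered hat points along that route is the subtle part, so I would prefer the interpolation argument, which establishes at least the asserted inequality in a few lines.
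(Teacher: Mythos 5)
Your proof is correct. The paper itself does not prove this lemma: it simply cites Corollary 7 of \cite{liu2021theorylse}, where the argument (as the shape of the constant $\xi(k)=\zeta(k-1)/4$ suggests) proceeds by selecting a favourable node and reducing to a product estimate in the spirit of Lemma \ref{lem:multiproductlowerbound0}, with the clustering of the $\hat\theta_m$ handled by a separate combinatorial device. Your route is genuinely different and self-contained: writing the $\ell$-th entry of $\eta_{k+1,k}$ as $|P(e^{i\theta_\ell})|$ for the monic polynomial $P(z)=\prod_{m=1}^k(z-e^{i\hat\theta_m})$ and invoking the leading-coefficient identity $\sum_{\ell}P(x_\ell)/\prod_{m\neq\ell}(x_\ell-x_m)=1$ is exactly the right way to neutralise the adversarial placement of the $\hat\theta_m$, and every subsequent step checks out: the chordal bound \eqref{eq-theta} applies since all $\theta_\ell\in[-\tfrac{\pi}{2},\tfrac{\pi}{2}]$ (so the nodes $e^{i\theta_\ell}$ are distinct and the denominators are nonzero), the ordered-gap count gives $\prod_{m\neq\ell}|e^{i\theta_\ell}-e^{i\theta_m}|\geq(\tfrac{2\theta_{\min}}{\pi})^k(\ell-1)!\,(k+1-\ell)!$, and the binomial sum yields the constant $k!/2^k$. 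The final comparison $k!/2^k\geq\xi(k)$ does hold for all $k\geq1$: equality at $k=1$, and for $k\geq2$ one has $\xi(k)\leq (k-1)!/2^{k-2}\cdot\tfrac14\leq k!/2^k$ by bounding the central binomial coefficient $\binom{k-2}{\lfloor(k-2)/2\rfloor}\geq 2^{k-2}/(k-1)$ (and similarly in the odd case), so your bound is in fact strictly stronger than the stated one for $k\geq 2$. What your approach buys is a short, transparent proof with a better constant; what the paper's citation buys is consistency of the constant $\xi(k)$ with the companion results of \cite{liu2021theorylse} that are reused elsewhere (e.g.\ Lemma \ref{lem:stablemultiproduct1}), at the cost of not being self-contained here.
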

\begin{proof}
See Corollary 7 in \cite{liu2021theorylse}.
\end{proof}

\begin{proof}
We are now ready to prove Theorem \ref{thm:upperboundnumberlimithm0}. 
Suppose that $\what \mu=\sum_{j=1}^{d}\what a_j \delta_{\what y_j}, d\leqs n-1$ is a $\sigma$-admissible measure of $\vect Y$. By the Definition \ref{defi:highdsigmaadmissmeasure} and the model (\ref{equ:highdmodelsetting1}), we have 
\[
\mathcal F[\what \mu](\omega)= \mathcal F[\mu](\omega)+\vect W_1(\omega),\quad  \omega \in [-\Omega,\Omega] 
\]
for some $\vect W_1$ with $|\vect W_1(\omega)|<2\sigma$. Note that by adding some point sources in $\what \mu$, from above we can actually construct
$\what \mu=\sum_{j=1}^{n-1}\what a_j \delta_{\what y_j}$ such that
\[
\mathcal F[\what \mu](\omega)= \mathcal F[\mu](\omega)+\vect W_2(\omega),\quad  \omega \in [-\Omega,\Omega], 
\]
for some $\vect W_2$ with $|\vect W_2(\omega)|<2\sigma$. For ease of exposition, we consider in the following that the measure $\what \mu$ is with $n-1$ point sources. On the other hand, the above equation implies that $\what \rho = \sum_{j=1}^{n-1}e^{-\what y_j \Omega}\what a_{j} \delta_{\what y_j}$ and $\rho = \sum_{j=1}^ne^{-y_j \Omega}a_{j} \delta_{y_j}$ satisfy  
\begin{equation}\label{equ:proofnumber0}
\mathcal F[\what \rho](\omega)= \mathcal F[\rho](\omega)+\vect W_3(\omega),\quad  \omega \in [0,2\Omega],
\end{equation}
for some $\vect W_3$ with $|\vect W_3(\omega)|<2\sigma, \omega \in [0,2\Omega]$. For any $y_t$ and $\what y_{t'}$, define $S_t$ as 
 \[
S_t:= \left\{y_1,\cdots, y_{t-1}, y_{t+1}, \cdots ,y_n, \what y_1,\cdots, \what y_{t'-1}, \what y_{t'+1},\cdots,  \what y_{n-1} \right\}.
\]
Then $\#S_t = 2n-3$. Let $\omega^*=\frac{2\Omega}{2n-2}$. Applying (\ref{equ:locaintenrelation4}) to (\ref{equ:proofnumber0}) we obtain that
\[
\prod_{q=1,\cdots, n, q\neq t} \left|e^{i y_{t}\omega^*} - e^{i y_{q}\omega^*} \right|\prod_{q=1,\cdots, n-1}\left|e^{iy_{t}\omega^*} - e^{i\what y_q\omega^*}\right|\left|a_t \right|< 2^{\#S_t+2}\sigma, \ t=1,\cdots, n. 
\]
This gives 
\[
\prod_{q=1,\cdots, n, q\neq t} \left|e^{i y_{t}\omega^*} - e^{i y_{q}\omega^*} \right|\prod_{q=1,\cdots, n-1}\left|e^{iy_{t}\omega^*} - e^{i\what y_q\omega^*}\right|< 2^{2n-1}\frac{\sigma}{m_{\min}}, \ t=1,\cdots, n. 
\]
Therefore, it follows that 
\[
\min_{t=1,\cdots, n}\prod_{q=1,\cdots, n, q\neq t} \left|e^{i y_{t}\omega^*} - e^{i y_{q}\omega^*} \right|\max_{t=1,\cdots, n}\prod_{q=1,\cdots, n-1}\left|e^{iy_{t}\omega^*} - e^{i\what y_q\omega^*}\right|< 2^{2n-1}\frac{\sigma}{m_{\min}}.
\]
Denote $y_q \omega^*$ by $\theta_q$ and $\theta_{\min} = \min_{p\neq q}|\theta_p-\theta_q|$. Since the $y_j$'s are in $B_{\frac{(n-1) \pi}{2\Omega}}^1(0)$ and $\omega^* = \frac{2\Omega}{2n-2}$, we have that the $\theta_j$'s are in $\left[-\frac{\pi}{2}, \frac{\pi}{2}\right]$. Thus, by Lemma \ref{lem:multiproductlowerbound0}, we get
\begin{equation}\label{equ:proofnumber3}
\min_{t=1,\cdots, n}\prod_{q=1,\cdots, n, q\neq t} \left|e^{i y_{t}\omega^*} - e^{i y_{q}\omega^*} \right|\geqs \zeta(n)\left(\frac{2\theta_{\min}}{\pi}\right)^{n-1}.
\end{equation}
Moreover, using Lemma \ref{lem:multiproductlowerbound1} yields
\begin{equation}\label{equ:proofnumber4}
\max_{t=1, \cdots, n}\prod_{q=1,\cdots,n-1}\left|e^{iy_{t}\omega^*} - e^{i\what y_q\omega^*}\right|\geqs \xi(n-1) \left(\frac{2\theta_{\min}}{\pi}\right)^{n-1}. 
\end{equation}
Combining the above estimates, it follows that 
\begin{equation}\label{equ:proofnumber5}
\zeta(n)\xi(n-1) \left(\frac{2\theta_{\min}}{\pi}\right)^{2n-2}< \frac{2^{2n-1}\sigma}{m_{\min}}.
\end{equation}
Thus, 
\[
\theta_{\min} < \pi\left( \frac{2}{\zeta(n)\xi(n-1)}\right)^{\frac{1}{2n-2}}\left(\frac{\sigma}{m_{\min}}\right)^{\frac{1}{2n-2}},
\]
and consequently,
\begin{equation}\label{equ:proofnumber2}
d_{\min}= \frac{\theta_{\min}}{\omega^*}< \frac{(2n-2)\pi}{2\Omega}\left( \frac{2}{\zeta(n)\xi(n-1)}\right)^{\frac{1}{2n-2}}\left(\frac{\sigma}{m_{\min}}\right)^{\frac{1}{2n-2}}\leqs \frac{2\pi e}{\Omega}\left(\frac{\sigma}{m_{\min}}\right)^{\frac{1}{2n-2}}, 
\end{equation}
where $d_{\min}:=\min_{p\neq q}|y_p-y_q|$ and the last inequality is from Lemma \ref{lem:upperboundnumbercalculate1}. Therefore, if $d_{\min}\geqs \frac{2\pi e}{\Omega}\left(\frac{\sigma}{m_{\min}}\right)^{\frac{1}{2n-2}}$, then  there is no $\sigma$-admissible measure of $\vect Y$ with less than $n$ supports. 

The last part consists in proving the cases when $n=2,3$. When $n=3$, the result is enhanced by noting that $\frac{2}{\zeta(n)\xi(n-1)}=8$ in (\ref{equ:proofnumber2}). When $n=2$, the result is enhanced by improving the estimates (\ref{equ:proofnumber3}) and (\ref{equ:proofnumber4}). For (\ref{equ:proofnumber3}), we now have
\[
\babs{e^{iy_1\omega^*} - e^{iy_2\omega^*}}\geqs 2\sin \left(\frac{\theta_{\min}}{2}\right),
\]
where $\theta_{\min} = |y_1\omega^*-y_2\omega^*|$ and $\omega^*= \Omega$. For (\ref{equ:proofnumber4}), we have 
\[
\max_{j=1,2}\left|e^{iy_{j}\omega^*} - e^{i\what y_1\omega^*}\right|\geqs 2\sin\left(\frac{\theta_{\min}}{4}\right).
\]
Thus, similarly to (\ref{equ:proofnumber5}),  we obtain that
\[
2\sin \left(\frac{\theta_{\min}}{2}\right)2\sin\left(\frac{\theta_{\min}}{4}\right)<\frac{2^{3}\sigma}{m_{\min}},
\]
which gives 
\[
\sin^2 \left(\frac{\theta_{\min}}{2}\right)<\frac{4\sigma}{m_{\min}}.
\]
It then follows that 
\[
d_{\min} = \frac{\theta_{\min}}{\omega^*} < \frac{2\arcsin\left(2\left(\frac{\sigma}{m_{\min}}\right)^{\frac{1}{2}}\right)}{\Omega},
\]
which completes the proof. 
\end{proof}

\medskip
\noindent \textbf{Comparison with results in \cite{liu2021theorylse}:} Considering that the results in this section are closely related to results in our previous work \cite{liu2021theorylse}, we highlight the major difference between them. Note first that Theorem \ref{thm:upperboundnumberlimithm0} gives a sharper upper bound for $\mathcal D_{num}(1, n)$ than the previous estimate in \cite{liu2021theorylse}. It also significantly improves the resolution estimate for resolving two sources, enhancing their practical relevance.  These improvements stem from employing location-amplitude identities, a more essential and powerful method than the one used in \cite{liu2021theorylse, liu2021mathematicaloned}. In particular, \cite{liu2021theorylse, liu2021mathematicaloned} established the approximation theory in the Vandermonde space by some algebraic manipulations, while the derived location-amplitude identities here reveal in depth the essence of the super-resolution problem. For example, identities (\ref{equ:locaintenrelation1}) and (\ref{equ:locaintenrelation2}) (or inequalities (\ref{equ:locaintenrelation3}) and (\ref{equ:locaintenrelation4})) reveal directly the relation between the amplitudes and locations of the true sources and the recovered ones, demonstrating that the stabilities of the recoveries are determined by $\frac{\sigma}{\prod_{q\in S_t}(e^{iy_t\omega^*} - e^{iq\omega^*})}$. This analytical perspective transforms the super-resolution problem into analyzing the distribution of the locations of true and recovered sources, leading to optimal stability results for the recovery of source numbers, locations, and amplitudes as substantiated in Theorems \ref{thm:upperboundnumberlimithm0}, \ref{thm:upperboundsupportlimithm0}, and \ref{thm:upperboundamplitudelimithm0}. As a comparison, the method in \cite{liu2021theorylse, liu2021mathematicaloned} deals with only the number detection and location recovery problem, since the algebraic manipulation incurs non-necessary noise amplifications when analyzing the amplitude reconstruction. The method in \cite{batenkov2019super} is only applicable for recovering the location and amplitude of sources due to the "quantitative inverse function theorem" necessitating an equal number of true and recovered sources. Furthermore, our method is capable of analyzing the stability of super-resolving multi-cluster sources under very general settings, making it highly effective.

On the other hand, due to the simplicity and directness of our method, the bounds derived here are nearly optimal and hard to improve. The only parts in the proof deteriorating the resolution estimate are the noise amplification in Corollary \ref{cor:locaintenrelation1} and inequality (\ref{equ:proofnumber5}), which also indicate the path for future improvement.

\subsection{Stability of location reconstruction}\label{section:locationrecovery}
We now consider the location (support) recovery problem in the super-resolution. We first introduce the following concept of $\delta$-neighborhood of a discrete measure. 
\begin{defi} 
Let $\mu=\sum_{j=1}^n a_j \delta_{\mathbf{y}_j}, \mathbf{y}_j \in \mathbb{R}^k$ be a discrete measure and let $\delta>0$ be such that the $n$ balls $B_\delta^k\left(\mathbf{y}_j\right), 1 \leqs j \leqs n$ are pairwise disjoint. We say that $\what{\mu}=\sum_{j=1}^n \what{a}_j \delta_{\what{\mathbf{y}}_j}$ is within $\delta$ neighborhood of $\mu$ if each $\what{\mathbf{y}}_j$ is contained in one and only one of the $n$ balls $B_\delta^k\left(\mathbf{y}_j\right), 1 \leqs j \leqs n$.
\end{defi}
According to the above definition, a measure in a $\delta$-neighborhood preserves the inner structure of the true set of sources. For any stable support recovery algorithm, the output should be a measure in some $\delta$-neighborhood, otherwise it is impossible to distinguish which is the reconstructed location of some $\vect y_j$'s. We now introduce the computational resolution limit for stable support recoveries. For ease of exposition, we only consider measures supported in $B_{\frac{(n-1)\pi}{2\Omega}}^{k}(\vect 0)$. 

\begin{defi}\label{defi:highdcomputresolutionlimit2}
The computational resolution limit to the stable support recovery problem in the super-resolution of sources in $\mathbb R^{k}$ is defined as the smallest nonnegative number $\mathcal D_{supp}(k, n)$ such that for all $n$-sparse measure $\sum_{j=1}^{n}a_{j}\delta_{\vect y_j},\vect  y_j \in B_{\frac{(n-1) \pi}{2\Omega}}^k(\mathbf{0})$ and the associated measurement $\vect Y$ in (\ref{equ:highdmodelsetting1}), if 
	\[
	\min_{p\neq j} \bnorm{\vect y_j-\vect y_p}_2 \geqs \mathcal D_{supp}(k, n),
	\]
	then there exists $\delta>0$ such that any $\sigma$-admissible measure for $\mathbf Y$ with $n$ supports in $B_{\frac{(n-1) \pi}{2\Omega}}^k(\mathbf{0})$ is within a $\delta$-neighborhood of $\mu$. In particular, when considering positive sources and positive $\sigma$-admissible measures, the corresponding computational resolution limit is denoted by $\mathcal D_{supp}^+(k, n)$. 
\end{defi}

Leveraging the location-amplitude identities, we derive the following theorem for stably recovering the source locations in the one-dimensional super-resolution problem. 
\begin{thm}\label{thm:upperboundsupportlimithm0}
	Let $n\geqs 2$, consider the measurement $\vect Y$ in (\ref{equ:highdmodelsetting1}) generated by any one-dimensional source $\mu=\sum_{j=1}^{n}a_j \delta_{y_j}$ supported on $B_{\frac{(n-1) \pi}{2\Omega}}^1(0)$ satisfying 
	\begin{equation}\label{supportlimithm0equ0}
	d_{\min}=\min_{p\neq j}\left|y_p-y_j\right|\geqs \frac{2.36e\pi}{\Omega }\Big(\frac{\sigma}{m_{\min}}\Big)^{\frac{1}{2n-1}}. 
 \end{equation}
	If $\what \mu=\sum_{j=1}^{n}\what a_{j}\delta_{\what y_j}$ supported on $B_{\frac{(n-1) \pi}{2\Omega}}^1(0)$ is a $\sigma$-admissible measure of the measurement $\vect Y$, then $\what \mu$ is within the $\frac{d_{\min}}{2}$-neighborhood of $\mu$. After reordering the $\what y_j$'s, we have 
	\begin{equation}\label{supportlimithm0equ2}
	\Big|\what y_j-y_j\Big|<\frac{C(n)}{\Omega}\mathrm{SRF}^{2n-2}\frac{\sigma}{m_{\min}}, \quad 1\leqs j\leqs n,
	\end{equation}
	where $C(n)=\sqrt{n-0.5}\ 2^{2n-\frac{3}{2}}e^{2n}4.5^{-1}$ and $\mathrm{SRF}:= \frac{\pi}{\Omega d_{\min}}$ is the super-resolution factor. Moreover, for the case when $n=2$, the minimum separation can be improved to 
 \[
d_{\min}\geqs \frac{3}{\Omega}\arcsin\left(2\left(\frac{\sigma}{m_{\min}}\right)^{\frac{1}{3}}\right). 
 \]
\end{thm}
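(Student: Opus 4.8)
The plan is to run the same machinery as in the proof of Theorem~\ref{thm:upperboundnumberlimithm0}, but now exploiting the sharper location identity (\ref{equ:locaintenrelation2}) (equivalently its corollary form (\ref{equ:locaintenrelation4})) source by source. First I would pass to the shifted measures $\hat\rho=\sum_{j=1}^{n} e^{-\hat y_j\Omega}\hat a_j\delta_{\hat y_j}$ and $\rho=\sum_{j=1}^{n} e^{-y_j\Omega}a_j\delta_{y_j}$, so that the data live on $[0,2\Omega]$ with noise bound $2\sigma$, exactly as in (\ref{equ:proofnumber0}). Since $\hat\mu$ has $n$ supports, for a fixed $y_j$ the closest recovered node $\hat y_{j'}$ produces a set $S$ with $\#S=2n-2$, so I would take $\omega^*=\frac{2\Omega}{2n-1}$; this keeps every scaled node $y_j\omega^*$ inside $(-\frac{\pi}{2},\frac{\pi}{2})$ because $y_j\in I(n,\Omega)$. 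The master estimate is then
\[
\big|e^{i\hat y_{j'}\omega^*}-e^{iy_j\omega^*}\big|\,m_{\min}<\frac{2^{2n}\sigma}{\prod_{q\in S}\big|e^{iy_j\omega^*}-e^{iq\omega^*}\big|},
\]
obtained from (\ref{equ:locaintenrelation4}) with the doubled noise level.

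The first real step is to pin down the neighborhood structure. Because the hypothesis (\ref{supportlimithm0equ0}) is strictly stronger than the number-detection condition (\ref{upperboundnumberlimithm0equ0}) (as $\frac{\sigma}{m_{\min}}\le 1$ and $\frac{1}{2n-1}<\frac{1}{2n-2}$, while $2.36>2$), Theorem~\ref{thm:upperboundnumberlimithm0} already forbids admissible measures with fewer than $n$ supports, so the count is correct. I would then show by a covering/counting argument that each of the $n$ disjoint intervals $(y_k-\frac{d_{\min}}{2},y_k+\frac{d_{\min}}{2})$ contains exactly one $\hat y_j$: if some interval were empty, the corresponding source $y_{j_0}$ would have its nearest recovered node $\hat y_{j_0'}$ at distance $\ge\frac{d_{\min}}{2}$, and the master estimate applied at $y_{j_0}$, combined with $\big|e^{i\hat y_{j_0'}\omega^*}-e^{iy_{j_0}\omega^*}\big|\ge\frac{2}{\pi}\cdot\frac{d_{\min}}{2}\omega^*$ from (\ref{eq-theta}), would contradict (\ref{supportlimithm0equ0}) once the denominator is bounded below. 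After this, the $\hat y_j$'s may be reordered so that $\hat y_j\leftrightarrow y_j$ and $j'=j$.

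With the neighborhood structure in hand, I would run the quantitative bound for \emph{each} fixed $j$. I split the denominator product over $S$ into the true-source factors $\prod_{p\neq j}\big|e^{iy_j\omega^*}-e^{iy_p\omega^*}\big|$ and the recovered-node factors $\prod_{l\neq j}\big|e^{iy_j\omega^*}-e^{i\hat y_l\omega^*}\big|$. The first is bounded below directly by Lemma~\ref{lem:multiproductlowerbound0}, giving $\ge\zeta(n)\big(\frac{2\theta_{\min}}{\pi}\big)^{n-1}$ with $\theta_{\min}=d_{\min}\omega^*$. For the second, the neighborhood structure gives $|\hat y_l-y_l|\le \frac{d_{\min}}{2}\le\frac{1}{2}|y_j-y_l|$, hence $|y_j-\hat y_l|\ge\frac{1}{2}|y_j-y_l|$, and applying (\ref{eq-theta}) and Lemma~\ref{lem:multiproductlowerbound0} again yields a comparable lower bound. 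Substituting both into the master estimate, converting $\big|e^{i\hat y_j\omega^*}-e^{iy_j\omega^*}\big|$ back to $|\hat y_j-y_j|$ through (\ref{eq-theta}), and inserting $\theta_{\min}=d_{\min}\omega^*$, $\omega^*=\frac{2\Omega}{2n-1}$ and $SRF=\frac{\pi}{\Omega d_{\min}}$, I would arrive at the claimed bound (\ref{supportlimithm0equ2}); the explicit constant $C(n)$ is obtained by collecting the powers of $2$, the factors $(2n-1)^{2n-1}$ coming from $1/\omega^*$, and $\zeta(n)$, and then invoking the Stirling-type estimates of the appendix (this is what produces the $\sqrt{n-2}$ and $\sqrt{\pi}$ factors). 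Calibrating so that this bound is $<\frac{d_{\min}}{2}$ under (\ref{supportlimithm0equ0}) is exactly what fixes the constant $2.36$ and simultaneously closes the self-consistency of the neighborhood claim. Finally, for $n=2$ I would repeat the computation with $\#S=2$ and $\omega^*=\frac{2\Omega}{3}$, replacing the $\frac{2}{\pi}$-type bounds by the exact identity $\big|e^{i\alpha}-e^{i\beta}\big|=2|\sin(\frac{\alpha-\beta}{2})|$ (as in the $n=2$ case of Theorem~\ref{thm:upperboundnumberlimithm0}), which turns the estimate into the sharp form $d_{\min}\ge\frac{3}{\Omega}\arcsin\!\big(2(\frac{\sigma}{m_{\min}})^{1/3}\big)$.

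I expect the main obstacle to be exactly the per-source lower bound on $\prod_{q\in S}\big|e^{iy_j\omega^*}-e^{iq\omega^*}\big|$ and its entanglement with the neighborhood structure. In the number-detection proof it sufficed to control a single extremal index via the $\max_j$/$\min_j$ trick and Lemma~\ref{lem:multiproductlowerbound1}, which is agnostic to where the $\hat y_l$ sit; here a bound is needed for every $j$, and this is only possible once the recovered nodes are known to be separated from $y_j$, i.e. once the neighborhood structure is established. Breaking this circularity cleanly (first fixing the count and a rough localization, then bootstrapping to the sharp per-node estimate) and tracking the constants accurately enough to land on $C(n)$ and the threshold $2.36$ is the delicate part; the rest is the same Vandermonde-inversion calculus already used above.
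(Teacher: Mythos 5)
Your overall skeleton matches the paper's: shift to $[0,2\Omega]$ with noise $2\sigma$, apply (\ref{equ:locaintenrelation4}) with $\omega^*=\frac{2\Omega}{2n-1}$, split the product over $S$ into true-source and recovered-node factors, and bound the true-source part by Lemma \ref{lem:multiproductlowerbound0}. The genuine gap is in how you break the circularity you yourself flag at the end. The paper does not use a covering argument at all: it feeds the single inequality $\bnorm{\eta_{n,n}(e^{i\theta_1},\dots,e^{i\theta_n},e^{i\hat\theta_1},\dots,e^{i\hat\theta_n})}_{\infty}<(2/\pi)^n\epsilon$ into Lemma \ref{lem:stablemultiproduct1} (Corollary 9 of the earlier work), an inverse-stability result for Vandermonde-type products which \emph{simultaneously} produces the pairing $|\hat\theta_j-\theta_j|<\theta_{\min}/2$ for every $j$ and the quantitative rate $\frac{2^{n-1}\epsilon}{(n-2)!\,\theta_{\min}^{n-1}}$. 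That lemma is the essential imported ingredient, and nothing in your proposal replaces it.

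Your substitute fails quantitatively. In the empty-interval scenario you apply the master estimate at the fixed source $y_{j_0}$, but the only lower bound then available for each of the $n$ recovered-node factors $|e^{iy_{j_0}\omega^*}-e^{i\hat y_l\omega^*}|$ is the trivial one $\geq\frac{2}{\pi}\cdot\frac{\theta_{\min}}{2}$, giving $(\theta_{\min}/\pi)^{n}$ in total. This loses the factorial factor $\xi(\cdot)$ of (\ref{zetaxiformula1}) that Lemma \ref{lem:multiproductlowerbound1} supplies in the number-detection proof — and that lemma only guarantees the factorial gain for \emph{some} index $j$ (a pigeonhole over the spread-out true sources), not for your prescribed $j_0$. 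Tracking constants, the contradiction you need becomes $\zeta(n)\gtrsim 2^{n+1}\bigl(\frac{2n-1}{4.72e}\bigr)^{2n-1}$, whose right-hand side outgrows $\zeta(n)$ already for moderate $n$; equivalently, your covering argument forces a separation of order $\frac{\sqrt{n}}{\Omega}\bigl(\frac{\sigma}{m_{\min}}\bigr)^{\frac{1}{2n-1}}$ rather than the $n$-independent constant $\frac{2.36e\pi}{\Omega}$ in (\ref{supportlimithm0equ0}). A similar (milder) loss occurs in your second stage, where replacing the paper's $\zeta(n)(n-2)!$ by $\zeta(n)^2$ with an extra $2^{n-1}$ overshoots the stated $C(n)$. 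The $n=2$ endgame you sketch does coincide with the paper's (a contradiction argument with exact $\sin$ identities at $\omega^*=\frac{2\Omega}{3}$), but for general $n$ the proof cannot close without Lemma \ref{lem:stablemultiproduct1} or an equivalent inductive pairing argument.
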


Theorem \ref{thm:upperboundsupportlimithm0} gives an upper bound for the computational resolution limit $\mathcal D_{supp}(1, n)$ that is better than the one in \cite{liu2021theorylse}. It shows that surpassing the Rayleigh limit in the location recovery is possible when $\frac{m_{\min}}{\sigma}\geqs (2.36e)^{2n-1}$. This upper bound is shown to be sharp by a lower bound provided in \cite{liu2022rslpositive}, by which we can conclude now that 
\[
\frac{2e^{-1}}{\Omega }\Big(\frac{\sigma}{m_{\min}}\Big)^{\frac{1}{2n-1}}< \mathcal D_{supp}(1, n) \leqs \frac{2.36e\pi}{\Omega }\Big(\frac{\sigma}{m_{\min}}\Big)^{\frac{1}{2n-1}}.
\]
It is also easy to generalize the estimates in Theorem \ref{thm:upperboundsupportlimithm0} to high-dimensional spaces by methods in \cite{liu2021mathematicalhighd, liu2023improved}, whereby we can obtain that 
\[
\frac{2e^{-1}}{\Omega }\Big(\frac{\sigma}{m_{\min}}\Big)^{\frac{1}{2n-1}}< \mathcal D_{supp}(k, n) \leqs \frac{C_2(k,n)}{\Omega }\Big(\frac{\sigma}{m_{\min}}\Big)^{\frac{1}{2n-1}},
\]
for a constant $C_2(k,n)$.  

Especially, for the case when $n=2$, our estimate demonstrates that when the signal-to-noise ratio $\mathrm{SNR}>12.5$, then the resolution is definitely better than the Rayleigh limit, and the "super-resolution in location recovery" can be achieved. 


We now present the proof of Theorem \ref{thm:upperboundsupportlimithm0}. It follows in a straightforward manner after employing the location-amplitude identities.

\medskip
We first recall the following auxiliary lemma. 
\begin{lem}\label{lem:stablemultiproduct1}
	For $-\frac{\pi}{2}\leqs \theta_1<\theta_2<\cdots<\theta_k \leqs \frac{\pi}{2}$ and $ \what \theta_1, \what \theta_2, \cdots, \what \theta_k \in \left[-\frac{\pi}{2}, \frac{\pi}{2}\right]$, if
	\begin{align*}
	\bnorm{\eta_{k,k}(e^{i \theta_1},\cdots,e^{i \theta_k}, e^{i \what \theta_1},\cdots, e^{i \what \theta_k})}_{\infty}< \left(\frac{2}{\pi}\right)^{k}\epsilon, 
	\text{ and } \theta_{\min} =\min_{q\neq j}\babs{\theta_q-\theta_j}\geqs  \Big(\frac{4\epsilon}{\lambda(k)}\Big)^{\frac{1}{k}},
	\end{align*}
	where $\eta_{k,k}$ is defined by (\ref{notation:eta}) and $\lambda(k)$ is given by \begin{equation}\label{equ:lambda1}
	\lambda(k)=\left\{
	\begin{array}{ll}
	1,  & k=2,\\
	\xi(k-2),& k\geqs 3,
	\end{array} 
	\right.
	\end{equation}
	with $\xi(\cdot)$ being defined in (\ref{zetaxiformula1}), 
	then after reordering $\what \theta_j$'s, we have
	\begin{equation}\label{equ:satblemultiproductcor1}
	\left|\what \theta_j -\theta_j\right|< \frac{\theta_{\min}}{2}  \text{ and }
	\left|\what \theta_j -\theta_j\right|\leqs \frac{2^{k-1}\epsilon}{(k-2)!(\theta_{\min})^{k-1}}, \quad j=1,\cdots, k.
	\end{equation}
\end{lem}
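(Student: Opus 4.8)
The plan is to pass from chord distances to ordinary distances, then split the argument into a qualitative matching step (the first conclusion) and a quantitative perturbation step (the second conclusion). First I would use the elementary bound $|e^{i\theta}-e^{i\phi}|\geq \frac{2}{\pi}|\theta-\phi|$, valid on $[-\frac{\pi}{2},\frac{\pi}{2}]$ and already recorded in (\ref{eq-theta}), to turn the hypothesis $\bnorm{\eta_{k,k}(\cdots)}_{\infty}<\left(\frac{2}{\pi}\right)^{k}\epsilon$ into the cleaner family of inequalities $\prod_{l=1}^{k}|\theta_j-\hat\theta_l|<\epsilon$, holding for every $j=1,\dots,k$. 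This is the only place the factor $\left(\frac{2}{\pi}\right)^{k}$ is consumed, and from here the whole argument takes place with real numbers.

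The main step is to show that each interval $J_j:=(\theta_j-\frac{\theta_{\min}}{2},\theta_j+\frac{\theta_{\min}}{2})$ contains at least one $\hat\theta_l$. Since the $\theta_j$ are $\theta_{\min}$-separated, the $J_j$ are pairwise disjoint, so a recovered node lies in at most one of them; hence, once every $J_j$ is known to be nonempty, selecting one recovered node in each produces a bijection, and after reordering we obtain $|\hat\theta_j-\theta_j|<\frac{\theta_{\min}}{2}$, the first half of (\ref{equ:satblemultiproductcor1}). The crux is therefore the nonemptiness, which I would prove by contradiction: if some $J_{j_0}$ were empty, then all $k$ recovered nodes lie at distance $\geq\frac{\theta_{\min}}{2}$ from $\theta_{j_0}$. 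The naive bound $\prod_l|\theta_{j_0}-\hat\theta_l|\geq(\frac{\theta_{\min}}{2})^{k}$ is already enough for $k\leq 4$ but fails for larger $k$, so the real point is that the recovered nodes cannot merely cluster beside $\theta_{j_0}$: the inequalities $\prod_l|\theta_j-\hat\theta_l|<\epsilon$ at the remaining (far) true nodes force the $\hat\theta_l$ to stay spread out, and a spread-out family at distance $\geq\frac{\theta_{\min}}{2}$ from $\theta_{j_0}$ must satisfy $\prod_l|\theta_{j_0}-\hat\theta_l|\geq\frac{\lambda(k)}{4}\theta_{\min}^{k}$. This is precisely a product lower bound in the spirit of Lemma \ref{lem:multiproductlowerbound1}, with the index shifted (removing $\theta_{j_0}$ and the two nearest reconstructed nodes leaves a $(k-1)$-versus-$(k-2)$ configuration, which is why $\lambda(k)=\xi(k-2)$ appears in (\ref{equ:lambda1})). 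Since the separation hypothesis $\theta_{\min}\geq(\frac{4\epsilon}{\lambda(k)})^{1/k}$ rearranges to $\frac{\lambda(k)}{4}\theta_{\min}^{k}\geq\epsilon$, this contradicts $\prod_l|\theta_{j_0}-\hat\theta_l|<\epsilon$. I expect this nonemptiness estimate to be the hard part, as it is where the spread of the reconstructed nodes must be controlled and where the constant $\lambda(k)$ is calibrated.

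Finally, for the quantitative conclusion, with the bijection in hand (so $\hat\theta_l\in J_l$ after reordering) I would isolate the matched factor in $\prod_{l=1}^{k}|\theta_j-\hat\theta_l|<\epsilon$, writing $|\theta_j-\hat\theta_j|\cdot\prod_{l\neq j}|\theta_j-\hat\theta_l|<\epsilon$, and lower-bound the remaining product. Because $\hat\theta_l$ lies within $\frac{\theta_{\min}}{2}$ of $\theta_l$ while the ordered true nodes have gaps $\geq\theta_{\min}$, each factor obeys $|\theta_j-\hat\theta_l|\geq|\theta_j-\theta_l|-\frac{\theta_{\min}}{2}\geq(|l-j|-\tfrac12)\theta_{\min}$, so that $\prod_{l\neq j}|\theta_j-\hat\theta_l|\geq 2^{-(k-1)}\theta_{\min}^{k-1}\prod_{l\neq j}(2|l-j|-1)$. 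An elementary check that $\prod_{l\neq j}(2|l-j|-1)\geq(k-2)!$, with the minimum attained when $\theta_j$ is central, then gives $\prod_{l\neq j}|\theta_j-\hat\theta_l|\geq\frac{(k-2)!}{2^{k-1}}\theta_{\min}^{k-1}$; substituting back yields $|\hat\theta_j-\theta_j|\leq\frac{2^{k-1}\epsilon}{(k-2)!\,\theta_{\min}^{k-1}}$, which is the second assertion of (\ref{equ:satblemultiproductcor1}). The only routine computations are this gap-product estimate and the factorial inequality, so the genuine difficulty remains concentrated in the nonemptiness argument of the second step.
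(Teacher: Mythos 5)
The paper does not actually prove this lemma --- it simply cites Corollary 9 of \cite{liu2021theorylse} --- so your attempt has to stand on its own. Your first and third steps are sound: the reduction from chord to arc distances via (\ref{eq-theta}) is exactly the right use of the factor $(2/\pi)^k$, and once the matching $|\hat\theta_l-\theta_l|<\theta_{\min}/2$ is available, the factorization $|\theta_j-\hat\theta_j|\cdot\prod_{l\neq j}|\theta_j-\hat\theta_l|<\epsilon$ together with $|\theta_j-\hat\theta_l|\geq(|l-j|-\tfrac{1}{2})\theta_{\min}$ and the elementary inequality $\prod_{l\neq j}(2|l-j|-1)\geq(k-2)!$ does deliver the quantitative half of (\ref{equ:satblemultiproductcor1}).

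The genuine gap is in the nonemptiness step, which you flag as the hard part but do not carry out. The claim you need --- that if every $\hat\theta_l$ is at distance at least $\theta_{\min}/2$ from $\theta_{j_0}$ while all $k$ products $\prod_l|\theta_j-\hat\theta_l|$ are below $\epsilon$, then $\prod_l|\theta_{j_0}-\hat\theta_l|\geq\frac{\lambda(k)}{4}\theta_{\min}^k$ --- is essentially the whole content of the lemma, and it is asserted rather than proved; ``the $\hat\theta_l$ must stay spread out'' is never quantified from the hypotheses. The reduction you sketch does not close the loop: applying Lemma \ref{lem:multiproductlowerbound1} to the $(k-1)$ true nodes $\{\theta_j\}_{j\neq j_0}$ against the $k-2$ recovered nodes left after deleting the two closest to $\theta_{j_0}$ yields a lower bound on the product at some \emph{other} true node $\theta_{j_1}$, and combining that with the hypothesis $\prod_l|\theta_{j_1}-\hat\theta_l|<\epsilon$ only shows that one of the two deleted nodes lies within $\theta_{\min}/2$ of $\theta_{j_1}$ --- which is perfectly consistent with $J_{j_0}$ being empty and gives neither a contradiction nor a lower bound on the product at $\theta_{j_0}$. (You are right that the naive bound $(\theta_{\min}/2)^k$ fails: already at $k=5$ one has $\lambda(5)=\tfrac{1}{4}>2^{-3}$.) To finish one must control where \emph{all} recovered nodes sit relative to \emph{all} true nodes simultaneously, e.g.\ by an inductive peeling argument, and that is exactly what the cited Corollary 9 of \cite{liu2021theorylse} supplies; as written, your proposal reduces the lemma to an unproved claim of essentially the same depth.
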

\begin{proof}
See Corollary 9 in \cite{liu2021theorylse}. 
\end{proof}

Now we start the main proof. 
\begin{proof} Since $\what \mu=\sum_{j=1}^{n}\what a_j \delta_{\what y_j}, \what y_j \in B_{\frac{(n-1) \pi}{2\Omega}}^1(0)$ is a $\sigma$-admissible measure of $\vect Y$, from the model (\ref{equ:highdmodelsetting1}) we have 
\[
\mathcal F[\what \mu](\omega)= \mathcal F[\mu](\omega)+\vect W_1(\omega),\quad  \omega \in [-\Omega,\Omega],  
\]
for some $\vect W_1$ with $|\vect W_1(\omega)|<2\sigma, \omega \in [-\Omega,\Omega] $. This implies that $\what \rho = \sum_{j=1}^{n}e^{-\what y_j \Omega}\what a_{j} \delta_{\what y_j}$ and $\rho = \sum_{j=1}^ne^{-y_j \Omega} a_{j} \delta_{y_j}$ satisfy 
\begin{equation}\label{equ:proofsupport0}
\mathcal F[\what \rho](\omega)= \mathcal F[\rho](\omega)+\vect W_2(\omega),\quad  \omega \in [0,2\Omega],
\end{equation}
for some $\vect W_2$ with $|\vect W_2(\omega)|<2\sigma, \omega \in [0,2\Omega]$. For any $y_t$, let $\what y_{t'}$ be the one in $\what y_j$'s that is the closest to $y_{t}$ and define $S_t$ as 
 \[
S_t:= \left\{y_1,\cdots, y_{t-1}, y_{t+1}, \cdots ,y_n, \what y_1,\cdots, \what y_{t'-1}, \what y_{t'+1},\cdots,  \what y_{n} \right\}.
\]
Then $\#S_t = 2n-2$. Let $\omega^*=\frac{2\Omega}{2n-1}$. Since (\ref{equ:proofsupport0}) holds, by (\ref{equ:locaintenrelation4}) we have 
\begin{equation}\label{equ:proofsupport2}
\prod_{q=1,\cdots, n, q\neq t} \left|e^{i y_{t}\omega^*} - e^{i y_{q}\omega^*} \right|\prod_{q=1,\cdots, n}\left|e^{iy_{t}\omega^*} - e^{i\what y_q\omega^*}\right|< \frac{2^{2n}\sigma}{m_{\min}}, \ t=1,\cdots, n. 
\end{equation}
Therefore, it follows that 
\[
\min_{t=1,\cdots, n}\prod_{q=1,\cdots, n, q\neq t} \left|e^{i y_{t}\omega^*} - e^{i y_{q}\omega^*} \right|\max_{t=1,\cdots, n}\prod_{q=1,\cdots, n}\left|e^{iy_{t}\omega^*} - e^{i\what y_q\omega^*}\right|< \frac{2^{2n}\sigma}{m_{\min}}.
\]
Denote $y_q \omega^*, \what y_q\omega^*$ by respectively $\theta_q, \what \theta_q$ and $\theta_{\min} = \min_{p\neq q}|\theta_p-\theta_q|$. Since $y_j$'s in $B_{\frac{(n-1) \pi}{2\Omega}}^1(0)$ and $\omega^* = \frac{2\Omega}{2n-1}$, we have $\theta_j, \what \theta_{j}$'s in $\left[-\frac{\pi}{2}, \frac{\pi}{2}\right]$. By Lemma \ref{lem:multiproductlowerbound0}, we further get 
\begin{equation}\label{equ:proofsupport1}
\max_{j=1, \cdots, n}\prod_{q=1,\cdots,n}\left|e^{iy_{j}\omega^*} - e^{i \what y_q\omega^*}\right|<\frac{1}{\zeta(n)}\left(\frac{\pi}{2\theta_{\min}}\right)^{n-1} \frac{2^{2n}\sigma}{m_{\min}}. 
\end{equation}
We then utilize Lemma \ref{lem:stablemultiproduct1} to estimate the recovery of the locations. For this purpose, let  $\epsilon = \frac{\pi^{2n-1}}{\zeta(n)(\theta_{\min})^{n-1}} \frac{2\sigma}{m_{\min}}$. Then (\ref{equ:proofsupport1}) is equivalent to 
$$\bnorm{\eta_{n ,n}(e^{i \theta_1},\cdots,e^{i \theta_n}, e^{i \what \theta_1},\cdots, e^{i \what \theta_n})}_{\infty}<\left(\frac{2}{\pi}\right)^n\epsilon.$$
 We thus only need to check the following condition:
\begin{equation}\label{upperboundsupportlimithm1equ4}
\theta_{\min}\geqs \Big(\frac{4\epsilon}{\lambda(n)}\Big)^{\frac{1}{n}}, \quad \mbox{or equivalently}\,\,\, (\theta_{\min})^n \geqs \frac{4\epsilon}{\lambda(n)}.
\end{equation}
Indeed, by the separation condition (\ref{supportlimithm0equ0}), 
\begin{equation}\label{upperboundsupportlimithm1equ-1}
\theta_{\min}= d_{\min}\omega^*\geqs  \frac{2.36\pi e}{n-\frac{1}{2}}\Big(\frac{\sigma}{m_{\min}}\Big)^{\frac{1}{2n-1}}\geqs  \pi \Big(\frac{4}{\lambda(n)\zeta(n)}\frac{2\sigma}{m_{\min}}\Big)^{\frac{1}{2n-1}}, \end{equation}
where we have used Lemma \ref{lem:upperboundsupportcalculate1} in the last inequality.  
Then
\[
(\theta_{\min})^{2n-1}\geqs \frac{4\pi^{2n-1}}{\lambda(n)\zeta(n)}\frac{2\sigma}{m_{\min}},
\]
whence we get (\ref{upperboundsupportlimithm1equ4}). Therefore, we can apply Lemma \ref{lem:stablemultiproduct1} to get that, after reordering $\what \theta_j$'s,
\begin{equation} \label{equ:upperboundsupportlimithm1equ7}
\begin{aligned}
&\left|\what \theta_{j}-\theta_j\right|< \frac{\theta_{\min}}{2},\\
&\left|\what \theta_{j}-\theta_j\right|< \frac{2^{n}\pi^{2n-1}}{\zeta(n)(n-2)!(\theta_{\min})^{2n-2}}\frac{\sigma}{m_{\min}} , 1\leqs j\leqs n.
\end{aligned}
\end{equation}

Finally, we estimate $\left|\what y_j - y_j\right|$.  Since $\left|\what \theta_{j}-\theta_j\right|< \frac{\theta_{\min}}{2}$, it is clear that 
$\left|\what y_j-y_j\right|< \frac{d_{\min}}{2}.$
Thus $\what \mu$ is within the $\frac{d_{\min}}{2}$-neighborhood of $\mu$.
On the other hand, 
\[
\left|\what y_j-y_j\right|= \frac{2n-1}{2\Omega}\left|\what \theta_j -\theta_j\right|, \quad j =1, \cdots, n. 
\]	
Using (\ref{equ:upperboundsupportlimithm1equ7}) and Lemma \ref{lem:upperboundsupportcalculate2}, a direct calculation shows that 
\begin{align}\label{equ:proofsupport3}
\left|\what y_j-y_j\right|< \frac{C(n)}{\Omega} \left(\frac{\pi}{\Omega d_{\min}}\right)^{2n-2} \frac{\sigma}{m_{\min}}, 
\end{align}
where $C(n)=\sqrt{n-0.5}\ 2^{2n-\frac{3}{2}}e^{2n}4.5^{-1}$. 

The last part is to prove the case when $n=2$. When $n=2$, by (\ref{equ:proofsupport2}) we have 
\begin{align}\label{equ:proofsupport4}
\left|e^{i y_{1}\omega^*} - e^{i  \what y_{1}\omega^*} \right| \left|e^{iy_{1}\omega^*} - e^{i\what y_2\omega^*}\right| \left|e^{iy_{1}\omega^*} - e^{iy_2\omega^*}\right|  < \frac{2^{4}\sigma}{m_{\min}}.
\end{align}
Denote $\omega^*|y_1-y_2|=\theta_{\min}$. Reordering $\hat y_j$'s so that $|\what y_1- y_1|\leqs |\what y_2- y_1|$. Thus, if $|\what y_{1}-y_1|\omega^*\geqs \frac{\theta_{\min}}{2}$, we have $|\what y_{2}-y_1|\omega^*\geqs \frac{\theta_{\min}}{2}$. Recall also that $\what y_j\in B_{\frac{(n-1) \pi}{2\Omega}}^1(0), j=1,2$. Then (\ref{equ:proofsupport4}) yields
\[
\left(2\sin\left(\frac{\theta_{\min}}{4}\right)\right)^2 2\sin\left(\frac{\theta_{\min}}{2}\right) < \frac{2^4\sigma}{m_{\min}},
\]
which gives
\[
\sin^3\left(\frac{\theta_{\min}}{2}\right)< \frac{2^3\sigma}{m_{\min}}.
\]
It then follows that 
\[
\theta_{\min}< 2\arcsin\left(\frac{8\sigma}{m_{\min}}\right)^{\frac{1}{3}}
\]
and 
\[
d_{\min}=\frac{\theta_{\min}}{\omega^*}< \frac{3}{\Omega}\arcsin\left(2\left(\frac{\sigma}{m_{\min}}\right)^{\frac{1}{3}}\right),
\]
where we have set $\omega^* = \frac{2\Omega}{3}$. Therefore, if 
\[
d_{\min}\geqs \frac{3}{\Omega}\arcsin\left(2\left(\frac{\sigma}{m_{\min}}\right)^{\frac{1}{3}}\right),
\]
then we must have $|\what y_{1}-y_1|\omega^*<\frac{\theta_{\min}}{2}$ and consequently,  $|\what y_{1}-y_1|< \frac{d_{\min}}{2}$. In the same manner, we also have $|\what y_{2}-y_2|< \frac{d_{\min}}{2}$. This completes the proof. 
\end{proof}

\subsection{Stability of amplitude reconstruction}\label{section:amplituderecovery}
We now consider the stability of the amplitude reconstruction. Note that for the off-the-grid case, it takes several tens pages in \cite{batenkov2019super} to prove the stability of the reconstruction of each amplitude $a_j$. Here we can take two pages to have a stronger understanding for the amplitude reconstruction. 

\begin{thm}\label{thm:upperboundamplitudelimithm0}
Let $n\geqs 2$ and let the measurement $\vect Y$ be generated from any one-dimensional source $\mu=\sum_{j=1}^{n}a_j \delta_{y_j}$ supported on $B_{\frac{(n-1) \pi}{2\Omega}}^1(0)$ and satisfying the separation condition 
\begin{equation}\label{equ:sepaconditioninamplitude}
	d_{\min}=\min_{p\neq j}\left|y_p-y_j\right|\geqs \frac{2.36e\pi}{\Omega }\Big(\frac{\sigma}{m_{\min}}\Big)^{\frac{1}{2n-1}}.
 \end{equation}
For any $\sigma$-admissible measure of $\vect Y$, $\what \mu = \sum_{j=1}^n\what a_j \delta_{\what y_j}, \what y_j \in B_{\frac{(n-1) \pi}{2\Omega}}^1(0)$, after reordering the $\what y_j$'s, we have 
 \begin{equation}\label{equ:amplitudeequ1}
\babs{\what y_j - y_j} < \frac{d_{\min}}{9}, \quad \babs{\what a_j - a_j} < C_1(n)\mathrm{SRF}^{2n-1}\sigma,
 \end{equation}
 for a certain constant $C_1(n)$. Moreover, if $\what y_j = y_j$, we have 
 \begin{equation}\label{equ:amplitudeequ2}
     \babs{\what a_j - a_j} < C_2(n)\mathrm{SRF}^{2n-2}\sigma,
 \end{equation}
for a certain constant $C_2(n)$.
\end{thm}
\begin{proof}
By Theorem \ref{thm:upperboundsupportlimithm0}, the separation condition (\ref{equ:sepaconditioninamplitude}) implies 
\begin{equation}
	\Big|\what y_j-y_j\Big|<\frac{\sqrt{n-0.5}\ 2^{2n-\frac{3}{2}}e^{2n}4.5^{-1}}{\Omega}\left(\frac{\pi}{\Omega d_{\min}}\right)^{2n-2}\frac{\sigma}{m_{\min}}, \quad 1\leqs j\leqs n.
\end{equation}
Together with (\ref{equ:sepaconditioninamplitude}), this gives $|\what y_j-y_j|<\frac{d_{\min}}{9}, j=1,\cdots, n$. Hence, among $\{\what y_q\}_{q=1}^n$, $\what y_j$ is the closest point to $y_j$. We write $y_{j} = \what y_j+\epsilon_j$ with $0\leqs 
\babs{\epsilon_{j}}< \frac{d_{\min}}{9}$. 
Since $\what \mu=\sum_{j=1}^{n}\what a_j \delta_{\what y_j}, \what y_j \in B_{\frac{(n-1) \pi}{2\Omega}}^1(0)$ is a $\sigma$-admissible measure of $\vect Y$, from the model (\ref{equ:highdmodelsetting1}) we have 
\[
\mathcal F[\what \mu](\omega)= \mathcal F[\mu](\omega)+\vect W_1(\omega),\quad  \omega \in [0,\Omega], 
\]
for some $\vect W_1$ with $|\vect W_1(\omega)|<2\sigma, \omega \in [0,\Omega] $. For each $j\in \{1,\cdots, n\}$, denote by $S_j$ the set containing all $y_p$'s and $\what y_p$'s except $y_j, \what y_{j}$. By $\babs{\what y_p - y_p}<\frac{d_{\min}}{9}, 1\leqs p\leqs n$, for all $q\in S_j$ we have $q\neq y_j$ and $q\neq \what y_j$. Let $\omega^* = \frac{\Omega}{2n-1}$. As $y_p$'s and $\what y_p$'s are in $B_{\frac{(n-1) \pi}{2\Omega}}^1(0)$, we also have $e^{iq\omega^*}\neq e^{iy_j\omega^*}, e^{iq\omega^*}\neq e^{i\what y_j\omega^*}, q\in S_j$. Thus by (\ref{equ:locaintenrelation3}) we can have 
\begin{align}\label{equ:proofamplitude0}
\left|\what {a}_{j}\prod_{q\in S_j}\frac{e^{i\what y_{j}\omega^*}-e^{i q\omega^*}}{e^{i y_j\omega^*}- e^{i q\omega^*}}-a_j\right|
<  \frac{2^{2n-1}\sigma}{\prod_{q\in S_j}\left|e^{iy_j\omega^*} - e^{iq\omega^*}\right|}.
\end{align}
Equivalently, we have
\begin{align}\label{equ:proofamplitude0.5}
\left|\what {a}_{j}-a_j\prod_{q\in S_j}\frac{e^{iy_{j}\omega^*}-e^{i q\omega^*}}{e^{i \what y_j\omega^*}- e^{i q\omega^*}}\right|
<  \frac{2^{2n-1}\sigma}{\prod_{q\in S_j}\left|e^{i\what y_j\omega^*} - e^{iq\omega^*}\right|}.
\end{align}
We rewrite its LHS as
\begin{align}\label{equ:proofamplitude1}
\left|\what {a}_{j}-a_j\prod_{q\in S_j}\left(\frac{e^{i y_{j}\omega^*}- e^{i \what y_{j}\omega^*}}{e^{i \what y_j\omega^*}- e^{i q\omega^*}}+1\right)\right|,
\end{align}
and expand that
\begin{equation}\label{equ:proofamplitude4}
\prod_{q\in S_j}\left(\frac{e^{i y_{j}\omega^*}- e^{i \what y_{j}\omega^*}}{e^{i \what y_j\omega^*}- e^{i q\omega^*}}+1\right) = 1+ (e^{i y_{j}\omega^*}- e^{i \what y_{j}\omega^*})g(\epsilon_j, S_j, \what y_j, y_j, \omega^*)
\end{equation}
 where
 \begin{equation}\label{equ:proofamplitude3}
 \babs{g(\epsilon_j, S_j, \what y_j, y_j, \omega^*)} \leqs \frac{C_3(n)}{d_{\min}}
 \end{equation}
 for a certain constant $C_{3}(n)$. Thus combining (\ref{equ:proofamplitude0.5}), (\ref{equ:proofamplitude1}), and (\ref{equ:proofamplitude4}) yields that 
 \begin{equation}\label{equ:proofamplitude5}
\left|\what {a}_{j}-a_j\right| < \left|a_j (e^{i y_{j}\omega^*}- e^{i \what y_{j}\omega^*}) g(\epsilon_j, S_j, \what y_j, y_j, \omega^*)\right|
+  \frac{2^{2n-1}\sigma}{\prod_{q\in S_j}\left|e^{i\what y_j\omega^*} - e^{iq\omega^*}\right|}.
\end{equation}
Now we estimate the two terms in the RHS of the above equation. First, by (\ref{equ:locaintenrelation4}), we have  
 \begin{equation}\label{equ:proofamplitude6}
\left|a_j (e^{i y_{j}\omega^*}- e^{i \what y_{j}\omega^*})\right| < \frac{2^{2n-1}\sigma}{\prod_{q\in S_j}\left|e^{iy_j\omega^*} - e^{iq\omega^*}\right|}.  
\end{equation}
Second, based on the estimate $|\what y_p -y_p|< \frac{d_{\min}}{9}, p=1,\cdots, n$,  it is easy to prove that
\begin{equation}\label{equ:proofamplitude2}
\frac{\sigma}{\prod_{q\in S_j}\left|e^{i\what y_j\omega^*} - e^{iq\omega^*}\right|}\leqs \frac{C_4(n)\sigma}{d_{\min}^{2n-2}}
\end{equation}
holds for some constant $C_4(n)$. Combining estimates (\ref{equ:proofamplitude3}), (\ref{equ:proofamplitude5}), (\ref{equ:proofamplitude6}), and (\ref{equ:proofamplitude2}) yields 
\[
\left|\what {a}_{j}-a_j\right|< \frac{C_5(n)}{d_{\min}^{2n-1}}\sigma 
\]
for some constant $C_5(n)$. Now we consider the case when $\what y_j = y_j$. This time, by (\ref{equ:proofamplitude0}), we have 
\[
\babs{\what a_j - a_j}<\frac{2^{\#S_j}\sigma}{\prod_{q\in S_j}\left|e^{iy_j\omega^*} - e^{iq\omega^*}\right|}.
\]
Together with (\ref{equ:proofamplitude2}), we demonstrate (\ref{equ:amplitudeequ2}) and complete the proof. 
\end{proof}

\subsection{Stability of a sparsity-promoting algorithm}
Nowadays, sparsity-promoting algorithms are popular methods in image processing, signal processing, and many other fields. As a direct consequence of results in the above sections, we derive a sharp stability result for the following $l_0$-minimization problem in the one-dimensional super-resolution:
\begin{equation}\label{equ:l0minimization}
\min_{\text{$\rho$ supported on $\mathcal O$, $\rho$ is a discrete measure}} \bnorm{\rho}_{0} \quad \text{subject to} \quad |\mathcal F\rho(\omega) -\vect Y(\omega)|< \sigma, \quad \omega \in[-\Omega,\Omega], 
\end{equation}	
where $||\rho||_{0}$ is the number of Dirac masses representing the discrete measure $\rho$. 

\begin{thm}\label{thm:sparspromstabilitythm0}
	Let $n\geqs 2$ and $\sigma \leqs m_{\min}$. Let the measurement $\vect Y$ in (\ref{equ:highdmodelsetting1}) be generated by a one-dimensional source $\mu=\sum_{j=1}^{n}a_j \delta_{y_j}, a_j \in \mathbb C, y_j \in B_{\frac{(n-1) \pi}{2\Omega}}^1(0)$. Assume that
	\begin{equation}\label{equ:sparspromstabilitythm0equ1}
	d_{\min}=\min_{p\neq j}\left| y_p-y_j\right|\geqs \frac{2.36 e\pi }{\Omega }\Big(\frac{\sigma}{m_{\min}}\Big)^{\frac{1}{2n-1}}. \end{equation}
	Let $\mathcal O$ in the minimization problem (\ref{equ:l0minimization}) be (or be included in) $B_{\frac{(n-1) \pi}{2\Omega}}^1(0)$ , then the solution to (\ref{equ:l0minimization}) contains exactly $n$ point sources. For any solution $\what \mu=\sum_{j=1}^{n}\what a_{j}\delta_{\what y_j}$, it is in a $\frac{d_{\min}}{9}$-neighborhood of $\mu$. Moreover, after reordering the ${\what y}_j$'s, we have 
	\begin{equation}
	\left|\what y_j-y_j\right|< \frac{C_1(n)}{\Omega}\mathrm{SRF}^{2n-2}\frac{\sigma}{m_{\min}}, \quad \babs{\what a_j - a_j} < C_2(n)\mathrm{SRF}^{2n-1}\sigma, \quad 1\leqs j\leqs n,
	\end{equation}
for certain constants $C_1(n)$ and $C_2(n)$. 
\end{thm}

Theorem \ref{thm:sparspromstabilitythm0} reveals that sparsity promoting over admissible solutions can resolve the source locations to the resolution limit level. Particularly, under the separation condition (\ref{equ:sparspromstabilitythm0equ1}), any tractable sparsity-promoting algorithms (such as total variation minimization algorithms \cite{candes2014towards}) rendering the sparsest solution
could stably reconstruct all the source locations and amplitudes.

\section{Two-point resolution in the multi-dimensional super-resolution}\label{section:twopointlimit}
Now we have understood the stability of super-resolving multiple point sources. Specifically, we've shown that with a $\mathrm SNR$ greater than 4, super-resolution is definitively achievable for resolving two point sources. However, we are not merely interested in estimations; we aim to determine the precise resolution limit for distinguishing two sources. In this section, we will develop the exact formula for this two-point resolution limit. We will particularly address the multi-dimensional imaging problem as described by (\ref{equ:highdmodelsetting1}), focusing on super-resolving two point sources, represented as $\mu=\sum_{j=1}^{2}a_{j}\delta_{\vect y_j}, \vect y_j \in \mathbb R^k$.

\subsection{Two-point resolution for resolving sources with identical amplitudes}

Inspired by the classic diffraction limit problem, we derived the following lemma for the resolution limit when resolving two point sources with identical amplitudes. 

\begin{lem} \label{thm:twopointresolution0}
Let $\frac{\sigma}{m_{\min}} \leqs \frac{1}{2}$. For all measures $\mu =\sum_{j=1}^{2}a_j\delta_{\vect y_j}, \vect y_j\in \mathbb R^{k}$ with $m_{\min}=a_1=a_2>0$, if 
\begin{equation}\label{equ:twopointresoformula1}
 \min_{p\neq j} \bnorm{\vect y_j-\vect y_p}_2 \geqs  \frac{4\arcsin\left(\left(\frac{\sigma}{m_{\min}}\right)^{\frac{1}{2}}\right)}{\Omega},
\end{equation}
then there does not exist any $\sigma$-admissible measure of \, $\mathbf Y$ with less than two supports. On the other hand, when (\ref{equ:twopointresoformula1}) fails to hold, there exists a $\sigma$-admissible measure of some $\vect Y$ with only one point source. When $\frac{\sigma}{m_{\min}} >\frac{1}{2}$, no matter what the separation distance is, there are always some $\sigma$-admissible measures of some $\vect Y$ with only one point source.
\end{lem}
\begin{proof}
\textbf{Step 1.} We first prove the one-dimensional case. Let $\mu = \sum_{j=1}^2a_j \delta_{y_j}$ and $\what \mu = a \delta_{\what y}$. A crucial relation is
\begin{equation}\label{equ:prooftwopointlimit1}
\babs{\mathcal F[\what \mu](\omega)- \mathcal F[\mu](\omega) }< 2\sigma, \ \omega \in [-\Omega, \Omega].
\end{equation}
Note that if (\ref{equ:prooftwopointlimit1}) holds, $\what \mu$ can be a $\sigma$-admissible measure of some $\vect Y$ generated by model (\ref{equ:highdmodelsetting1}). This time, resolving two point sources is impossible. Conversely, if (\ref{equ:prooftwopointlimit1}) does not hold, $\what \mu$ cannot be any $\sigma$-admissible measure of some $\vect Y$ generated by $\mu$ as in model (\ref{equ:highdmodelsetting1}). Let $\mathcal R$  be the constant such that (\ref{equ:prooftwopointlimit1}) holds when $|y_1-y_2|< \mathcal R$ and fails to hold in the opposite case. Based on the above discussions, proving Lemma \ref{thm:twopointresolution0} is to show $\mathcal R = \frac{4\arcsin\left(\left(\frac{\sigma}{m_{\min}}\right)^{\frac{1}{2}}\right)}{\Omega}$.

Instead of considering all the $\omega \in [-\Omega,\Omega]$ directly, we consider
\begin{equation}\label{equ:prooftwopointlimit2}
\babs{\mathcal F[\what \mu](\omega)-\mathcal F[\mu](\omega)}< 2\sigma, \ \omega \in [0, \Omega].
\end{equation}
In the sequel, we intend to find $\mathcal R$ so that (\ref{equ:prooftwopointlimit2}) holds when $|y_1-y_2|< \mathcal R$ and does not hold in the opposite case. Afterward, we will show that (\ref{equ:prooftwopointlimit1}) holds as well under some circumstances when $|y_1-y_2|< \mathcal R$.  

\textbf{Step 2.} Note that for the general source locations $y_1, y_2$, shifting them by $x$ and get that
\[
\babs{\mathcal F[\what \mu](\omega)e^{ix\omega}- \mathcal F[\mu](\omega)e^{ix\omega} }< 2\sigma, \ \omega \in [-\Omega, \Omega],
\]
we can transform the problem into the case when $y_1=-y_2$. Thus we consider that the true source is $\mu = m_{\min} \delta_{y_1}+ m_{\min} \delta_{y_2}$ with $y_1>0, y_1 = -y_2$. The measure $\what \mu$ is $a\delta_{\what y}$ with $a$ and $\what y$ to be determined.  

Then the existence of $\what \mu$ satisfying (\ref{equ:prooftwopointlimit2}) is equivalent to solving the condition on $y_1$ so that
\[
\min_{a\in \mathbb C, \hat y\in \mathbb R}\babs{ae^{i\what 
 y\omega}-2m_{\min}\cos\left(y_1\omega\right)}<2\sigma, \ \omega \in [0, \Omega],
\]
where we have used 
\begin{align*}
ae^{i \what y\omega}-m_{\min}(e^{iy_1\omega}+e^{i y_2\omega})= ae^{i\what 
 y\omega}-2m_{\min}\cos\left(y_1\omega\right). 
\end{align*}
We denote $d_{\min}:=|y_1-y_2|$ and first consider the case when $0<d_{\min}<\frac{\pi}{\Omega}$. Note that for two non-negative values $x,y$, we have 
\[
\left|xe^{i\theta}-y\right|^2 = (x\cos\theta-y)^2+x^2\sin^2\theta = x^2+y^2-2xy\cos\theta \geqs (x-y)^2 
\]
and the equality is attained when $\theta=0$. Since $0<d_{\min}\leqs \frac{\pi}{\Omega}$ $\left(0<y_1\leqs \frac{\pi}{2\Omega}\right)$, we have $\cos(y_1\omega)\geqs 0, \omega \in [0,\Omega]$. Thus for every $\omega$, 
\[
\min_{a\in \mathbb C, \hat y\in \mathbb R}\babs{ae^{i\what 
 y\omega}-2m_{\min}\cos\left(y_1\omega\right)}\geqs \left||a|-2m_{\min}\cos\left(y_1\omega\right)\right|
\]
and the minimum is attained when $\what y=0$ and $a$ is a positive number. We now try to find the condition on $y_1$ so that there exists $a\in \mathbb R^+$ satisfying
\[
\left|a-2m_{\min}\cos\left(y_1\omega\right)\right|<2\sigma, \quad \omega\in [0,\Omega]. 
\]
This is equivalent to
\begin{equation}\label{equ:prooftwopointlimit3}
\max_{ \omega, \omega' \in [0,\Omega]}\left|2m_{\min}\left(\cos\left(y_1\omega\right)- \cos\left(y_1\omega'\right)\right)\right|<4\sigma,
\end{equation}
since the existence of $\omega_1, \omega_2 \in [0,\Omega]$ so that   
\[
\left|2m_{\min}\left(\cos\left(y_1\omega_1\right)- \cos\left(y_1\omega_2\right)\right)\right|\geqs 4\sigma,
\]
results in 
\[
\max_{\omega \in [0,\Omega]} \left|a-2m_{\min}\cos\left(y_1\omega\right)\right|\geqs 2\sigma,\quad \forall a\in \mathbb R^+.
\]

If $d_{\min}=2y_1\leqs \frac{\pi}{\Omega}$, then $0\leqs y_1\omega\leqs \frac{\pi}{2}, \omega \in [0,\Omega]$. Then problem (\ref{equ:prooftwopointlimit3}) becomes 
\[
2m_{\min}\left|1-\cos\left(\frac{d_{\min}}{2}\Omega\right)\right|< 4\sigma. 
\]
Thus, $4\sin^2\left(\frac{d_{\min}\Omega}{4}\right)<\frac{4\sigma}{m_{\min}}$, and equivalently
\[
d_{\min} < \frac{4\arcsin\left(\left(\frac{\sigma}{m_{\min}}\right)^{\frac{1}{2}}\right)}{\Omega}.
\]
Note that when $d_{\min} < \frac{4\arcsin\left(\left(\frac{\sigma}{m_{\min}}\right)^{\frac{1}{2}}\right)}{\Omega}$, choosing $a= \frac{m_{\min}+m_{\min}\cos(y_1\Omega)}{2}$ and $\what y=0$ makes 
\[
\babs{\mathcal F[\what \mu](\omega)-\mathcal F[\mu](\omega)}< 2\sigma, \ \omega \in [0, \Omega].
\]
As $\mathcal F[\what \mu](-\omega)-\mathcal F[\mu](-\omega) = \mathcal F[\what \mu](\omega)-\mathcal F[\mu](\omega)$ this time, the solution makes 
\[
\babs{\mathcal F[\what \mu](\omega)-\mathcal F[\mu](\omega)}< 2\sigma, \ \omega \in [-\Omega, \Omega].
\]
Thus this is exactly the resolution limit $\mathcal R$ when $0<d_{\min}\leqs \frac{\pi}{\Omega}$ for the case when $\frac{\sigma}{m_{\min}}\leqs \frac{1}{2}$. On the other hand, for the case when $d_{\min}> \frac{\pi}{\Omega}$, by $\frac{4\arcsin\left(\left(\frac{\sigma}{m_{\min}}\right)^{\frac{1}{2}}\right)}{\Omega}\leqs \frac{\pi}{\Omega}$ when $\frac{\sigma}{m_{\min}}\leqs \frac{1}{2}$ and above discussions, we have 
\[
\min_{a\in \mathbb C, \hat y\in \mathbb R} \max_{\omega\in[0,\Omega]}\babs{ae^{i\what 
 y\omega}-2m_{\min}\cos\left(y_1\omega\right)}\geqs 2\sigma. 
\]
Thus there does not exist any $\sigma$-admissible measure of \, $\mathbf Y$ with less than two supports. This proves the statements for the case when $\frac{\sigma}{m_{\min}}\leqs\frac{1}{2}$ in the lemma.

Now, we consider the case when $\frac{\sigma}{m_{\min}}>\frac{1}{2}$. 
We choose a specific case where $a=m_{\min}$ and $\what y =y_1$. Then 
\[
\babs{m_{\min}e^{iy_1\omega}-m_{\min}(e^{iy_1\omega}+e^{i y_2\omega})} = \babs{m_{\min}e^{iy_2\omega}} < 2\sigma
\]
gives
\[
\babs{\mathcal F[\what \mu](\omega)-\mathcal F[\mu](\omega)}<2\sigma, \ \omega \in \mathbb R.
\]
Thus the case when $\frac{\sigma}{m_{\min}}>\frac{1}{2}$ is meaningless. There are always some $\sigma$-admissible measures for some images with only one point source. 

\textbf{Step 3.} Now we consider the case when the  sources $\vect y_j$'s are in $\mathbb R^k$. We still consider the crucial relation that
\begin{equation}\label{equ:prooftwopointlimit4}
\babs{\mathcal F[\what \mu](\vect \omega)- \mathcal F[\mu](\vect \omega) }< 2\sigma, \ \bnorm{\vect \omega}_{2}\leqs \Omega. 
\end{equation}
By a similar argument as the one in step 1, we need to compute constant $\mathcal R$ such that (\ref{equ:prooftwopointlimit4}) holds when $\bnorm{\vect y_1-\vect y_2}_2< \mathcal R$ and fails to hold in the opposite case. Note that by choosing suitable axes or transforming the problem, we can make $\vect y_1= (y_1, 0, \cdots, 0)^\top, \vect y_2= (y_2, 0, \cdots, 0)^\top$. Consider $\what \mu = a \delta_{\mathbf{\what y}}, \mathbf{\what y} \in \mathbb R^k$ with $a$ and $\mathbf{\what y}$ to be determined. We now have 
\[
\mathcal F[\what \mu](\vect \omega)- \mathcal F[\mu](\vect \omega) = ae^{i \mathbf{\what y} \cdot \vect \omega} - \sum_{j=1}^{2}a_j e^{i \vect {y}_j \cdot \vect{\omega}}= ae^{i \mathbf{\what y}_{2:k}\cdot \vect \omega_{2:k}} e^{\mathbf{\what y}_1 \vect \omega_1} - \sum_{j=1}^{2}a_j e^{i {y}_j \vect{\omega}_1},
\]
where $\vect x_1$ is the first vector element and $\vect x_{2:k}$ is the vector consisting of the $2$-nd to the $k$-th elements of $\vect x$. Thus analyzing when (\ref{equ:prooftwopointlimit4}) holds can be reduced to the one-dimensional case and it is not hard to see that the result for the one-dimensional space still holds for multi-dimensional spaces.

\end{proof}

\subsection{Resolution limit for detecting twp positive sources}
Now, we compute the computational resolution limit for detecting two positive sources. We have the following theorem showing that the resolution limit is the one in Lemma \ref{thm:twopointresolution0}. 
\begin{thm}\label{thm:positivetwopointresolution0}
For $\frac{\sigma}{m_{\min}} \leqs \frac{1}{2}$, the computational resolution limit $\mathcal D_{num}^+(k,2)$ for resolving two positive sources in $\mathbb R^k$ is given by
\[
 \mathcal D_{num}^+(k,2) =  \frac{4\arcsin\left(\left(\frac{\sigma}{m_{\min}}\right)^{\frac{1}{2}}\right)}{\Omega}.
\]
It can be attained if $a_1=a_2$. When $\frac{\sigma}{m_{\min}} >\frac{1}{2}$, no matter what the separation distance is, there are always some $\sigma$-admissible measures of some $\vect Y$ with only one point source. 
\end{thm}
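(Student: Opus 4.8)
The plan is to reduce the $k$-dimensional problem to one dimension exactly as in Step~3 of the proof of Theorem \ref{thm:twopointresolution0} (projecting onto the line through $\vect y_1,\vect y_2$ and absorbing the transverse phase into the free parameters of the approximating source), and then to prove the two inequalities $\mathcal D_{k,num}^+\ge \mathcal R$ and $\mathcal D_{k,num}^+\le \mathcal R$ separately, where $\mathcal R=\frac{4\arcsin\left(\left(\frac{\sigma}{m_{\min}}\right)^{1/2}\right)}{\Omega}$ is the value from Theorem \ref{thm:twopointresolution0}. The lower bound is immediate: for equal amplitudes $a_1=a_2=m_{\min}$, Theorem \ref{thm:twopointresolution0} already produces, for every separation below $\mathcal R$, a \emph{positive} single source $a\delta_{\hat y}$ with $a=\tfrac{m_{\min}+m_{\min}\cos(y_1\Omega)}{2}>0$ that is $\sigma$-admissible. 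Hence for every $d_0<\mathcal R$ there is a positive two-source configuration with separation $d_0$ that cannot be resolved, so $\mathcal D_{k,num}^+\ge \mathcal R$, with the value attained at $a_1=a_2$.

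For the upper bound I would fix a positive configuration, place it symmetrically by shift invariance as $y_1=-y_2=d_{\min}/2$ with $a_1\ge a_2=m_{\min}$, and reformulate ``$a\delta_{\hat y}$ is $\sigma$-admissible'' as the requirement that the curve
\[
w(\omega)=e^{-i\hat y\omega}\mathcal F[\mu](\omega)=a_1e^{i(y_1-\hat y)\omega}+a_2e^{i(y_2-\hat y)\omega},\qquad \omega\in[-\Omega,\Omega],
\]
lie inside the disk centred at the real point $a$ of radius $2\sigma$. The case $\frac{\sigma}{m_{\min}}>\frac12$ is settled first: taking $\hat\mu=a_2\delta_{y_2}$ (the larger amplitude placed at its own node) leaves a residual $-a_1e^{iy_1\omega}$ of constant modulus $m_{\min}<2\sigma$, so a positive single source is always admissible, whatever $d_{\min}$ is. For $\frac{\sigma}{m_{\min}}\le\frac12$ and $d_{\min}\le \pi/\Omega$ (the only regime where the threshold can be active, since $\mathcal R\le \pi/\Omega$) I would bound from below the radius $r$ of the smallest disk enclosing $w$. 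Writing $P=w(0)=a_1+a_2$ (the maximal real part) and using $w(0),w(\pm\Omega)$, one obtains a lower bound on $r$ that must then be optimized over the fake location $\hat y$; the target is
\[
\min_{\hat y}\,r\ \ge\ 2m_{\min}\sin^2\!\Big(\tfrac{d_{\min}\Omega}{4}\Big),
\]
with equality at $a_1=a_2$ and $\hat y=0$. Since $2m_{\min}\sin^2(d_{\min}\Omega/4)\ge 2\sigma$ precisely when $d_{\min}\ge \mathcal R$, this forces $r\ge 2\sigma$, i.e. no positive single source is $\sigma$-admissible, giving $\mathcal D_{k,num}^+\le \mathcal R$.

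The hard part is the extremality inequality in the last display, i.e. showing that among all positive amplitude pairs with prescribed minimum $m_{\min}$ the balanced pair $a_1=a_2$ is the easiest to confuse. The naive estimate using only the spread of $\operatorname{Re}w$ (equivalently only the frequencies $0$ and $\Omega$) is \emph{insufficient}: it yields $\min_{\hat y}r\ge \tfrac12\big((a_1+a_2)-|\mathcal F[\mu](\Omega)|\big)$, which for $a_1\ne a_2$ is strictly below $2m_{\min}\sin^2(d_{\min}\Omega/4)$ and can dip under $2\sigma$ for separations just under $\pi/\Omega$. The imaginary part $\operatorname{Im}w(\omega)=a_1\sin((y_1-\hat y)\omega)+a_2\sin((y_2-\hat y)\omega)$ must therefore be exploited: the shift $\hat y$ that flattens $\operatorname{Re}w$ necessarily inflates $\operatorname{Im}w$, and quantifying this trade-off — for instance through the enclosing radius of $w(0),w(\Omega),w(-\Omega)$ via $r\ge \frac{P-X}{2}+\frac{Y^2}{2(P-X)}$ with $X+iY=w(\Omega)$, minimized over $\hat y$ — is where the computation concentrates. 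I expect the minimum to be attained at $\hat y=0$ for the balanced pair and to grow with the imbalance $a_1-a_2$, so that the general case reduces to the already-settled equal-amplitude diffraction limit of Theorem \ref{thm:twopointresolution0}. The residual range $d_{\min}>\pi/\Omega$, where $\operatorname{Re}w$ is no longer sign-definite, only makes $w$ spread further and is absorbed by the same enclosing-disk estimate, while the return to $\mathbb R^k$ is verbatim as in Theorem \ref{thm:twopointresolution0}.
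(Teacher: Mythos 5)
Your reduction to dimension one, your lower bound via the explicit positive single source $a=\tfrac{m_{\min}+m_{\min}\cos(y_1\Omega)}{2}$ supplied by Theorem \ref{thm:twopointresolution0}, and your dismissal of the regime $\frac{\sigma}{m_{\min}}>\frac12$ are all consistent with the paper (though in that last step you should place the single source at the node of the \emph{larger} amplitude, $\hat\mu=a_1\delta_{y_1}$, so the residual has modulus $a_2=m_{\min}<2\sigma$; with $\hat\mu=a_2\delta_{y_2}$ the residual has modulus $a_1$, which need not be below $2\sigma$). The genuine gap is in the upper bound, which you never complete: you defer it to an enclosing-disk computation over $w(0),w(\pm\Omega)$ that you only say you ``expect'' to work, and your stated reason for needing it is a false inequality. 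You claim that the two-frequency estimate $\min_{\hat y}r\ge\frac12\left((a_1+a_2)-|\mathcal F[\mu](\Omega)|\right)$ falls strictly below $2m_{\min}\sin^2(d_{\min}\Omega/4)$ when $a_1\ne a_2$. The opposite holds: writing $\theta=\frac{d_{\min}\Omega}{2}\in\left(0,\frac{\pi}{2}\right]$ and $a_1\ge a_2=m_{\min}$,
\[
|\mathcal F[\mu](\Omega)|=\left|(a_1-a_2)e^{i\theta}+2a_2\cos\theta\right|\le (a_1-a_2)+2a_2\cos\theta,
\]
so that
\[
(a_1+a_2)-|\mathcal F[\mu](\Omega)|\ \ge\ 2a_2(1-\cos\theta)\ =\ 4m_{\min}\sin^2\left(\frac{d_{\min}\Omega}{4}\right),
\]
with equality (for $0<\theta<\frac{\pi}{2}$) exactly in the balanced case. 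Hence the two samples $w(0)$ and $w(\Omega)$ alone already force any enclosing disk to have radius at least $2m_{\min}\sin^2(d_{\min}\Omega/4)$, independently of $\hat y$, and unbalancing the amplitudes only \emph{increases} the spread; no analysis of $\operatorname{Im}w$ or of a third sample point is needed. (Sanity check: $a_1=10$, $a_2=1$, $\theta=\pi/3$ gives $\frac12(11-\sqrt{91})\approx 0.73>2\sin^2(\pi/6)=0.5$.)

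This corrected version is in substance the paper's proof: it imposes the necessary condition $|\vect w_2(0)|+|\vect w_2(\Omega)|<4\sigma$ and shows, by a case analysis in $a$ relative to $2m_{\min}\cos(y_1\Omega)+hm_{\min}$ and $2m_{\min}+hm_{\min}$ (with $\alpha=1+h$), that
\[
\min_{a>0,\ \alpha\ge1,\ \hat y\in\mathbb R}\left(|\vect w_2(\Omega)|+|\vect w_2(0)|\right)=2m_{\min}\left(1-\cos(y_1\Omega)\right),
\]
attained at $\alpha=1$, which yields $d_{\min}<\frac{4}{\Omega}\arcsin\left(\left(\frac{\sigma}{m_{\min}}\right)^{\frac12}\right)$ whenever a positive single-source admissible measure exists. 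Once you reverse the direction of your comparison, your ``naive'' estimate closes the argument and the three-point computation you flag as the hard part becomes unnecessary.
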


 When $\frac{\sigma}{m_{\min}}<\frac{1}{2}$, the two-point resolution $\mathcal D_{k, num}^+(k,2)$ is already less than the Rayleigh limit $\frac{\pi}{\Omega}$, which far exceeds all expectations. This indicates that, in contrast to what was commonly supposed, super-resolution from a single snapshot is in fact very possible. 

\begin{remark}
Although $\mathcal D_{num}^{+}(k,2)$ is defined for measures on $B_{\frac{(n-1) \pi}{2\Omega}}^k(\mathbf{0})$, similar arguments as those in the proof of Lemma \ref{thm:twopointresolution0} can generalize the resolution estimate to measures on $\mathbb R^k$. 
\end{remark}

\medskip
Now we introduce the proof.
\begin{proof}
\textbf{Step 1.}
We only need to consider the case when $\frac{\sigma}{m_{\min}} \leqs \frac{1}{2}$, as the case when $\frac{\sigma}{m_{\min}} >\frac{1}{2}$ is trivial. Also, we only consider the one-dimensional case since the treatment for multi-dimensional spaces is similar to the one in the proof of Theorem \ref{thm:twopointresolution0}. 

Let $\mu = \sum_{j=1}^2a_j \delta_{y_j}$ and $\what \mu = a \delta_{\what y}$. Similarly to step 1 in the proof of Theorem \ref{thm:twopointresolution0}, the computational resolution limit $\mathcal D_{num}^+(k, 2)$ should be the constant such that the following 
\begin{equation}\label{equ:proofpositivetwopointlimit0}
\babs{\mathcal F[\what \mu](\omega)- \mathcal F[\mu](\omega)}< 2\sigma, \ \omega \in [-\Omega, \Omega],
\end{equation}
holds when $|y_1-y_2|< \mathcal D_{k, num}^+$ and fails to hold in the opposite case. Choosing a suitable axis, we assume that the true source is $\mu = m_{\min}\alpha \delta_{y_1}+ m_{\min} \delta_{y_2}$ with $y_1 = -y_2$, $\alpha \geqs 1$.  We consider $\what \mu = a\delta_{\what y}$ with $a>0$ and $\what y$ to be determined. We shall prove that if 
\[
\left|y_1-y_2\right|\geqs \frac{4\arcsin\left(\left(\frac{\sigma}{m_{\min}}\right)^{\frac{1}{2}}\right)}{\Omega},
\]
then (\ref{equ:proofpositivetwopointlimit0}) doesn't hold for any $\what \mu$ consisting of only one positive source. In the opposite case, Theorem \ref{thm:twopointresolution0} already ensures the existence of such $\what \mu, \mu$ making (\ref{equ:proofpositivetwopointlimit0}) holds. By the above two results, we prove the theorem.  

In the following proof, we will find a necessary condition for  
\[\min_{a>0, \alpha\geqs 1, \what y\in \mathbb R}\babs{\mathcal F[\what \mu](\omega)- \mathcal F[\mu](\omega)}<2\sigma, \omega \in [-\Omega,\Omega].
\]
By the definition of $\mathcal D_{num}^+(k, 2)$ in Definition \ref{defi:highdcomputresolimit}, we only consider the case when $d_{\min}\leqs \frac{\pi}{\Omega}$.  

\textbf{Step 2.} We analyze a necessary condition, that is, 
\begin{align}\label{equ:proofpositivetwopointlimit1.2}
\min_{a>0, \alpha\geqs 1, \what y\in \mathbb R}\babs{\mathcal F[\what \mu](\Omega)- \mathcal F[\mu](\Omega)}+\babs{\mathcal F[\what \mu](0)- \mathcal F[\mu](0)}<4\sigma.
\end{align}
Since from (\ref{equ:proofpositivetwopointlimit0}) and the assumption for $\mu$, 
\begin{equation}\label{equ:proofpositivetwopointlimit1.1}
\mathcal F[\what \mu](\omega)- \mathcal F[\mu](\omega) = ae^{i \what y\omega}-m_{\min}\left(\alpha e^{iy_1\omega}+ e^{-iy_1\omega}\right),
\end{equation}
we thus consider
\begin{align}\label{equ:proofpositivetwopointlimit1.3}
\min_{a>0, \alpha\geqs 1, \what y\in \mathbb R}\left|ae^{i\what y \Omega} -m_{\min}\left(\alpha e^{iy_1 \Omega}+ e^{-i y_1\Omega}\right)\right|+\left|a -m_{\min}\left(\alpha+1\right)\right|.
\end{align}
Let $\alpha = 1+h, h\geqs 0$, and rewrite the above formula as 
\[
\min_{a>0, h\geqs 0, \what y\in \mathbb R}\left|ae^{i\what y\Omega} -h m_{\min} e^{iy_1 \Omega}- 2m_{\min} \cos(y_1 \Omega)\right|+|a-(2m_{\min}+hm_{\min})|.
\]
A key observation is that if $2m_{\min}\cos(y_1\Omega)+hm_{\min}\leqs a \leqs 2m_{\min}+hm_{\min}$, then we have 
\begin{align}
&\min_{a>0, h\geqs 0, \what y\in \mathbb R}\left|ae^{i\what y\Omega} -h m_{\min} e^{iy_1 \Omega}- 2m_{\min} \cos(y_1 \Omega)\right|+|a-(2m_{\min}+hm_{\min})|\nonumber \\
\geqs &\min_{a>0, h\geqs 0, \what y\in \mathbb R, \what x\in \mathbb R}\left|ae^{i\what y\Omega} -h m_{\min} e^{i\what x \Omega}- 2m_{\min} \cos(y_1 \Omega)\right|+|a-(2m_{\min}+hm_{\min})|\nonumber \\
= &\min_{a> 0, h\geqs 0}\left|a -h m_{\min}- 2m_{\min} \cos(y_1 \Omega)\right|+|a-(2m_{\min}+hm_{\min})|\nonumber \\
= &\min_{2m_{\min}\cos(y_1\Omega)\leqs b \leqs 2 m_{\min}}\left|b- 2m_{\min} \cos(y_1 \Omega)\right|+|2m_{\min}-b|  \nonumber\\
= & 2m_{\min} - 2m_{\min} \cos(y_1 \Omega),
\label{equ:proofpositivetwopointlimit7}
\end{align}
where the second equality is because $a\geqs 2m_{\min}\cos(y_1\Omega)+hm_{\min}$, $hm_{\min}\geqs 0$ and $2m_{\min}\cos(y_1\Omega)= 2m_{\min}\cos(\frac{d_{\min}}{2}\Omega)\geqs 0$ for $d_{\min}\leqs \frac{\pi}{\Omega}$. 

On the other hand, letting $h=0, \what y =0$, we obtain that
\begin{align*}
&\min_{a>0}\left|ae^{i\what y\Omega} -h m_{\min} e^{iy_1 \Omega}- 2m_{\min} \cos(y_1 \Omega)\right|+|a-(2m_{\min}+hm_{\min})|\\
=&\min_{a>0}\left|a - 2m_{\min} \cos(y_1 \Omega)\right|+|a-2m_{\min}|\\
=&\min_{a>0}(a - 2m_{\min} \cos(y_1 \Omega))+2m_{\min}-a \quad \big(\text{choose $2m_{\min} \cos(y_1 \Omega)\leqs a\leqs 2m_{\min}$}\big)\\ 
=&2m_{\min}-2m_{\min} \cos(y_1 \Omega).
\end{align*}
Together with (\ref{equ:proofpositivetwopointlimit7}), this yields 
\begin{align*}
&\min_{a>0, h\geqs 0, \what y\in \mathbb R}\left|a e^{i\what y\Omega} -h m_{\min} e^{iy_1 \Omega}- 2m_{\min} \cos(y_1 \Omega)\right|+|a-(2m_{\min}+hm_{\min})|\\
=&2m_{\min}-2m_{\min} \cos(y_1 \Omega)
\end{align*}
in the case when $2m_{\min}\cos(y_1\Omega)+hm_{\min}\leqs a \leqs 2m_{\min}+hm_{\min}$.

Now we consider the case when $a< 2m_{\min}\cos(y_1\Omega)+hm_{\min}$. In this case, we have 
\begin{align*}
&\min_{a>0, h\geqs 0, a< 2m_{\min}\cos(y_1\Omega)+hm_{\min}, \what y\in \mathbb R}\left|ae^{i\what y\Omega} -h m_{\min} e^{iy_1 \Omega}- 2m_{\min} \cos(y_1 \Omega)\right|+|a-(2m_{\min}+hm_{\min})|\\
\geqs & \min_{a>0, h\geqs 0, a< 2m_{\min}\cos(y_1\Omega)+hm_{\min}}|a-(2m_{\min}+hm_{\min})|\\
=& \min_{a>0, h\geqs 0, a< 2m_{\min}\cos(y_1\Omega)+hm_{\min}}2m_{\min}+hm_{\min}-a\\
>& 2m_{\min} -2m_{\min}\cos(y_1 \Omega).
\end{align*}

Last, we consider the case when $a>2m_{\min}+hm_{\min}$. This time, we have 
\begin{align*}
&\min_{a>0, h\geqs 0, a>2m_{\min}+hm_{\min}, \what y\in \mathbb R}\left|ae^{i\what y\Omega} -h m_{\min} e^{iy_1 \Omega}- 2m_{\min} \cos(y_1 \Omega)\right|+|a-(2m_{\min}+hm_{\min})|\\
\geqs & \min_{a>0, h\geqs 0, a>2m_{\min}+hm_{\min}, \what y \in \mathbb R}\left|ae^{i\what y\Omega} -h m_{\min} e^{iy_1 \Omega}- 2m_{\min} \cos(y_1 \Omega)\right|\\
\geqs & \min_{a>0, h\geqs 0, a>2m_{\min}+hm_{\min}}a -h m_{\min}- 2m_{\min} \cos(y_1 \Omega)\\
>& 2m_{\min} -2m_{\min}\cos(y_1 \Omega).
\end{align*}

Therefore, combining the above discussions yields
\begin{align*}
&\min_{a\geqs 0, h\geqs 0, \what y\in \mathbb R}\left|ae^{i\what y\Omega} -h m_{\min} e^{iy_1 \Omega}- 2m_{\min} \cos(y_1 \Omega)\right|+|a-(2m_{\min}+hm_{\min})|\\
=&2m_{\min}-2m_{\min} \cos(y_1 \Omega),
\end{align*}
and i.e.,
\begin{align*}
&\min_{a\geqs 0, \alpha\geqs 1, \what y\in \mathbb R}\left|a e^{\what y\Omega}-m_{\min}\left(\alpha e^{iy_1\Omega}+ e^{-i y_1 \Omega}\right)\right|+\left|a -m_{\min}\left(\alpha+1\right)\right|\\
=&2m_{\min}-2m_{\min} \cos(y_1 \Omega).
\end{align*}
Thus, (\ref{equ:proofpositivetwopointlimit1.2}) is equivalent to 
\[
2m_{\min}-2m_{\min} \cos(y_1 \Omega)<4\sigma.
\]
Similarly to the proof of Theorem \ref{thm:twopointresolution0}, this shows that
\[
d_{\min} < \frac{4\arcsin\left(\left(\frac{\sigma}{m_{\min}}\right)^{\frac{1}{2}}\right)}{\Omega},
\]
and completes the proof. 
\end{proof}

\subsection{Resolution limit for detecting two complex sources}
Now we consider super-resolving complex sources and have the following theorem. 
\begin{thm}\label{thm:computatwopointresolution0}
For $\frac{\sigma}{m_{\min}} \leqs \frac{1}{2}$, the computational resolution limit $\mathcal D_{num}(k, 2)$ for resolving two sources in $\mathbb R^k$ is given by 
\begin{equation}\label{equ:computatwopointresolu0}
 \mathcal D_{num}(k, 2) =  \frac{4\arcsin\left(\left(\frac{\sigma}{m_{\min}}\right)^{\frac{1}{2}}\right)}{\Omega}.
\end{equation}
It can be attained if $a_1=a_2$. When $\frac{\sigma}{m_{\min}} >\frac{1}{2}$, no matter what the separation distance is, there are always some $\sigma$-admissible measures of some $\vect Y$ with only one point source.
\end{thm}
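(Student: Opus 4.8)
The plan is to follow the template of Theorems~\ref{thm:twopointresolution0} and \ref{thm:positivetwopointresolution0}: reduce to one dimension, recast the question as the existence of a single complex node matching the data within $2\sigma$ in the Fourier domain, and then prove a matching pair of bounds. First I would reduce to $k=1$ exactly as in Step~3 of the proof of Theorem~\ref{thm:twopointresolution0}, by choosing coordinates so that $\vect y_1,\vect y_2$ lie on the first axis. Then $\mathcal D_{k,num}$ is the threshold in $d_{\min}$ such that, writing $\vect w(\omega):=ae^{i\hat y\omega}-\mathcal F[\mu](\omega)$ with $\mathcal F[\mu](\omega)=a_1e^{iy_1\omega}+a_2e^{iy_2\omega}$, there exist $a\in\mathbb C,\hat y\in\mathbb R$ with
\[
|\vect w(\omega)|<2\sigma,\qquad \omega\in[-\Omega,\Omega],
\]
precisely when $d_{\min}$ lies below the threshold, uniformly over all complex $a_1,a_2$ with $\min(|a_1|,|a_2|)=m_{\min}$.

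For the lower bound I would take $a_1=a_2=:c$, whence $\mathcal F[\mu](\omega)=c(e^{iy_1\omega}+e^{iy_2\omega})$; factoring $|c|=m_{\min}$ out of the residual reduces the existence problem verbatim to the one solved in Theorem~\ref{thm:twopointresolution0}, so a one-node admissible measure exists for every $d_{\min}$ below $\tfrac{4\arcsin((\sigma/m_{\min})^{1/2})}{\Omega}$, giving $\mathcal D_{k,num}\ge$ the stated value and simultaneously showing the value is attained at $a_1=a_2$. The regime $\sigma/m_{\min}>\tfrac12$ is handled separately: placing $\hat\mu$ at whichever node has the larger modulus leaves a residual of constant modulus $m_{\min}<2\sigma$, so a one-node admissible measure exists regardless of the separation.

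The heart of the matter is the upper bound, where I would use that a single node has constant Fourier modulus $|ae^{i\hat y\omega}|=|a|$. The reverse triangle inequality gives $|\vect w(\omega)|\ge\bigl||a|-|\mathcal F[\mu](\omega)|\bigr|$, so admissibility forces the modulus oscillation to be small, namely
\[
\max_{\omega\in[-\Omega,\Omega]}|\mathcal F[\mu](\omega)|-\min_{\omega\in[-\Omega,\Omega]}|\mathcal F[\mu](\omega)|<4\sigma.
\]
Using $|\mathcal F[\mu](\omega)|^2=|a_1|^2+|a_2|^2+2|a_1||a_2|\cos(d_{\min}\omega+\psi)$ with $\psi=\arg a_1-\arg a_2$, I would then minimize this oscillation over all configurations with $\min(|a_1|,|a_2|)=m_{\min}$ at fixed $d_{\min}\le\pi/\Omega$. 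The key claim is that the minimum equals $2m_{\min}\bigl(1-\cos(d_{\min}\Omega/2)\bigr)$ and is attained at $a_1=a_2$; granting it, every configuration obeys $2m_{\min}(1-\cos(d_{\min}\Omega/2))<4\sigma$, and the same half-angle manipulation as in Theorem~\ref{thm:twopointresolution0} yields $d_{\min}<\tfrac{4\arcsin((\sigma/m_{\min})^{1/2})}{\Omega}$, hence $\mathcal D_{k,num}\le$ the stated value.

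The main obstacle is this minimization over complex amplitude configurations. The argument $d_{\min}\omega+\psi$ sweeps an interval of length $2d_{\min}\Omega$ whose placement is set by the phase offset $\psi$, and because $t\mapsto\sqrt{c+t}$ is concave the oscillation depends on both the range of $\cos$ over that interval and its base level. I would argue that $\psi=0$ (centering the interval at $\cos=1$) simultaneously minimizes the $\cos$-range and maximizes the base level, so it minimizes the oscillation; and that differentiating in the larger of the two magnitudes shows the oscillation strictly increases with it, forcing both magnitudes to their common minimum $m_{\min}$. One must also verify the endpoint $d_{\min}=\pi/\Omega$ — where the interval covers a full period, the range is maximal, and the oscillation equals $2m_{\min}$ — matches the transition at $\sigma/m_{\min}=\tfrac12$, and that sharpness of the necessary condition is secured by the attainability construction of the second step.
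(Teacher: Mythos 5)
Your proposal is correct in substance but takes a genuinely different route from the paper on the hard (necessity) direction. The paper splits on the size of the phase offset $\beta$ relative to $y_1\Omega$: when $y_1\Omega\geq\beta$ it shifts the frequency variable by $\omega^*=\beta/y_1$ to absorb the phase and then runs a case-by-case minimization of the two-point quantity $|\vect w(-\Omega)|+|\vect w(0)|$; when $y_1\Omega<\beta$ that trick fails and the paper instead invokes the Hankel-matrix singular-value bound of Theorem~\ref{thm:onednumberdetectalgo1}. Your argument replaces all of this with a single observation — a one-node measure has constant Fourier modulus, so admissibility forces $\max|\mathcal F[\mu]|-\min|\mathcal F[\mu]|<4\sigma$ — and a scalar minimization of that oscillation over $(|a_1|,|a_2|,\psi)$. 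This is cleaner, handles all phase offsets uniformly, and in fact works directly in $\mathbb R^k$ without the coordinate reduction, since $|\mathcal F[\mu](\vect\omega)|$ depends on $\vect\omega$ only through $(\vect y_1-\vect y_2)\cdot\vect\omega$. What the paper's clumsier route buys is that its necessary conditions use only the samples at $\vect\omega=\vect 0$ and $\bnorm{\vect\omega}_2=\Omega$, which is what powers Theorem~\ref{thm:twopointresogeneralsetting} and Algorithm~\ref{algo:onedsinguvaluenumberalgo}; your oscillation argument uses interior frequencies (the maximizer of $\cos(d\omega+\psi)$ sits inside $(-\Omega,\Omega)$ for $\psi\neq 0$), so it would not transfer to the sparsely sampled model (\ref{equ:twopointgeneralmodelsetting1}). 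Your key claim is true and provable by elementary means, but note one wrinkle in your sketch: $\psi=0$ does \emph{not} uniquely minimize the $\cos$-range — the interval centered at $\pi$ gives the same range $1-\cos(d\Omega)$ — so the base-level (concavity of the square root) consideration is essential there, not merely reinforcing. A clean way to finish is to observe that the range of $\cos$ over any interval of length $2d\Omega\le 2\pi$ is at least $1-\cos(d\Omega)$, which forces $c_-\le\cos(d\Omega)$ whenever $c_+\le 1$; hence both the numerator $2|a_1||a_2|(c_+-c_-)$ is minimized and the denominator $\sqrt{\cdot}+\sqrt{\cdot}$ is maximized at $\psi=0$, and then $\sqrt{(A-B)^2+4ABu^2}\le(A-B)+2u\sqrt{AB}$ with $u=\cos(d\Omega/2)$ gives the oscillation lower bound $2B(1-u)\ge 2m_{\min}(1-\cos(d\Omega/2))$ for $A\ge B\ge m_{\min}$.
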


Theorem \ref{thm:computatwopointresolution0} demonstrates that when $\frac{\sigma}{m_{\min}}<\frac{1}{2}$, the two-point resolution for distinguishing general sources is already better than the Rayleigh limit. 

\medskip
We now prove the theorem.  
\begin{proof}
\textbf{Step 1.} We only need to analyze the case when $\frac{\sigma}{m_{\min}} \leqs \frac{1}{2}$, as the case when $\frac{\sigma}{m_{\min}} >\frac{1}{2}$ is trivial. Also, we only consider the one-dimensional case since the treatment for multi-dimensional spaces is similar to the one in the proof of Theorem \ref{thm:twopointresolution0}. 

Let $\mu = \sum_{j=1}^2a_j \delta_{y_j}$ and $\what \mu = a \delta_{\what y}$. Similarly to step 1 in the proof of Theorem \ref{thm:twopointresolution0}, the resolution limit $\mathcal D_{num}(k, 2)$ should be the constant such that the following estimate:
\begin{equation}\label{equ:proofcomplextwopointlimit0}
\babs{\mathcal F[\what \mu](\omega)- \mathcal F[\mu](\omega) }< 2\sigma, \ \omega \in [-\Omega, \Omega],
\end{equation}
holds when $|y_1-y_2|< \mathcal D_{num}(k,2)$ and fails to hold in the opposite case. 

\textbf{Step 2.} 
Without loss of generality, we assume the true source is $$\mu = m_{\min}\alpha e^{-i\beta} \delta_{y_1}+ m_{\min}e^{i\beta} \delta_{y_2}$$ with $y_1 = -y_2$, $0< y_1\leqs \frac{\pi}{2\Omega}$, $\alpha \geqs 1$ and $0\leqs \beta\leqs \frac{\pi}{2}$. It is not hard to see that the other cases can all be transformed to the above setting. We consider $\what \mu = ae^{i\gamma}\delta_{\what y}$ with $a>0$, $\gamma$ and $\what y$ to be determined. We shall prove that if 
\[
\left|y_1-y_2\right|\geqs \frac{4\arcsin\left(\left(\frac{\sigma}{m_{\min}}\right)^{\frac{1}{2}}\right)}{\Omega},
\]
then (\ref{equ:proofcomplextwopointlimit0}) does not hold for any $\what \mu$ consisting of only one source. In the opposite case, Theorem \ref{thm:twopointresolution0} already ensures the existence of such $\what \mu, \mu$ satisfying (\ref{equ:proofcomplextwopointlimit0}). By the two results, we prove the theorem. 

From (\ref{equ:proofcomplextwopointlimit0}), we have 
\begin{align*}
\mathcal F[\what \mu](\omega)- \mathcal F[\mu](\omega) &= ae^{i\gamma}e^{i \what y \omega}-m_{\min}\left(\alpha e^{-i\beta}e^{iy_1\omega}+ e^{i\beta}e^{-iy_1\omega}\right). 
\end{align*}
We rewrite it as 
\begin{equation}
 ae^{i\gamma}e^{i \what y \omega}-m_{\min}\left(\alpha e^{i(y_1 \omega-\beta)}+ e^{i(\beta-y_1 \omega)}\right).    
\end{equation}
and analyze it by considering the two cases: (1) $y_1\Omega\geqs\beta$; (2) $y_1\Omega<\beta$. \\

\textbf{Part 1:} ($y_1\Omega\geqs\beta$)\\
In the first case, when $y_1\Omega \geqs \beta$, we define $\omega^*=\frac{\beta}{y_1}\in[0,\Omega]$. Considering
\begin{align}
F[\what \mu](\omega+\omega^*)- \mathcal F[\mu](\omega+\omega^*)=&ae^{i\gamma}e^{i \what y (\omega+\omega^*)}-m_{\min}\left(\alpha e^{i(y_1 \omega-\beta+y_1\omega^*)}+ e^{i(\beta-y_1\omega^*-y_1 \omega)}\right)\nonumber \\
= &ae^{i\gamma+\what y\omega^*}e^{i \what y \omega}-m_{\min}\left(\alpha e^{i(y_1 \omega)}+ e^{i(-y_1 \omega)}\right),\label{equ:proofcomplextwopointlimit6}
\end{align}
(\ref{equ:proofcomplextwopointlimit0}) is equivalent to
\[
\left|ae^{i\gamma+\what y\omega^*}e^{i \what y \omega}-m_{\min}\left(\alpha e^{i(y_1 \omega)}+ e^{i(-y_1 \omega)}\right)\right|<2\sigma, \quad  \omega\in 
[-\Omega-\omega^*, \Omega-\omega^*].
\]
Note that this reduces the problem to a case similar to the one for positive sources. Since the interval $[-\Omega-\omega^*, \Omega-\omega^*]$ includes the interval $[-\Omega, 0]$, in the same fashion as the proof for positive sources, we consider the necessary condition that 
\begin{equation}\label{equ:proofcomplextwopointlimit9}
\min_{a>0, \alpha\geqs 1, \gamma \in \mathbb R,  \what y\in \mathbb R, 0\leqs \beta\leqs y_1\Omega}\babs{F[\what \mu](-\Omega+\omega^*)- \mathcal F[\mu](-\Omega+\omega^*)}+\babs{F[\what \mu](\omega^*)- \mathcal F[\mu](\omega^*)}<4\sigma.    
\end{equation}

Note that minimizing over $0\leqs \beta \leqs y_1\Omega$ is now equivalent to minimizing over $0\leqs \omega^*\leqs \Omega$.
We thus consider
\begin{align*}
\min_{a>0, \alpha\geqs 1, \gamma \in \mathbb R, \what y\in \mathbb R, \omega^*\in [0,\Omega]}\left|ae^{i\gamma+\what y\omega^*}e^{-i\what y \Omega} -m_{\min}\left(\alpha e^{-iy_1 \Omega}+ e^{i y_1\Omega}\right)\right|+\left|ae^{i\gamma+\what y\omega^*} -m_{\min}\left(\alpha+1\right)\right|.
\end{align*}
Let $\alpha = 1+h, h\geqs 0$, and rewrite the above formula as
\begin{align*}
&\min_{a>0, h\geqs 0, \gamma \in \mathbb R, \what y\in \mathbb R, \omega^*\in [0,\Omega]}\left|ae^{i\gamma+\what y\omega^*}e^{-i\what y \Omega} -h m_{\min} e^{-iy_1 \Omega}- 2m_{\min} \cos(y_1 \Omega)\right|\\
&\qquad \qquad \qquad \qquad \qquad+\left|ae^{i\gamma+\what y\omega^*}-(2m_{\min}+hm_{\min})\right|.
\end{align*}
A key observation is that if $2m_{\min}\cos(y_1\Omega)+hm_{\min}\leqs a\leqs 2m_{\min}+hm_{\min}$, we have 
\begin{align}
&\min_{a>0, h\geqs 0, \gamma \in \mathbb R, \what y\in \mathbb R, \omega^*\in [0,\Omega]}\left|ae^{i\gamma+\what y\omega^*}e^{-i\what y\Omega} -h m_{\min} e^{-iy_1 \Omega}- 2m_{\min} \cos(y_1 \Omega)\right|\nonumber \\
&\qquad \qquad \qquad \qquad \qquad +\left|ae^{i\gamma+\what y\omega^*}-(2m_{\min}+hm_{\min})\right|\nonumber \\
\geqs &\min_{a>0, h\geqs 0, \gamma \in \mathbb R, \what y\in \mathbb R, \what x\in \mathbb R}\left|ae^{-i\what y\Omega} -h m_{\min} e^{-i\what x \Omega}- 2m_{\min} \cos(y_1 \Omega)\right|+|ae^{i\gamma}-(2m_{\min}+hm_{\min})|\nonumber \\
= &\min_{a> 0, h\geqs 0}\left|a -h m_{\min}- 2m_{\min} \cos(y_1 \Omega)\right|+\left|a-(2m_{\min}+hm_{\min})\right|\nonumber \\
= &\min_{2m_{\min}\cos(y_1\Omega)\leqs b\leqs 2m_{\min}}\left|b- 2m_{\min} \cos(y_1 \Omega)\right|+|2m_{\min}-b|\nonumber \\
=&2m_{\min}-2m_{\min} \cos(y_1 \Omega), \label{equ:proofcomplextwopointlimit7}
\end{align}
where the second equality is because $2m_{\min}\cos(y_1\Omega)+hm_{\min}\leqs a\leqs 2m_{\min}+hm_{\min}$ and $2m_{\min}\cos(y_1\Omega)= 2m_{\min}\cos(\frac{d_{\min}}{2}\Omega)\geqs 0$ for $d_{\min}\leqs \frac{\pi}{\Omega}$. 

On the other hand, letting $h=0, \what y =0, \gamma =0, \omega^*=0$, we have 
\begin{align*}
&\min_{a>0}\left|ae^{i\gamma+\what y\omega^*}e^{-i\what y \Omega} -h m_{\min} e^{-iy_1 \Omega}- 2m_{\min} \cos(y_1 \Omega)\right|+\left|ae^{i\gamma+\what y\omega^*}-(2m_{\min}+hm_{\min})\right|.\\
=&\min_{a>0}\left|a - 2m_{\min} \cos(y_1 \Omega)\right|+|a-(2m_{\min}+hm_{\min})|\\
=&\min_{a>0}(a - 2m_{\min} \cos(y_1 \Omega))+2m_{\min}-a \quad \big(\text{choose $2m_{\min} \cos(y_1 \Omega)\leqs a\leqs 2m_{\min}$}\big)\\ 
=&2m_{\min}-2m_{\min} \cos(y_1 \Omega).
\end{align*}
Together with (\ref{equ:proofcomplextwopointlimit7}), this yields 
\begin{align*}
&\min_{a>0, h\geqs 0, \gamma \in \mathbb R, \what y\in \mathbb R, \omega^*\in [0,\Omega]}\left|ae^{i\gamma+\what y\omega^*}e^{-i\what y \Omega} -h m_{\min} e^{-iy_1 \Omega}- 2m_{\min} \cos(y_1 \Omega)\right|\\
&\qquad \qquad \qquad \qquad \qquad+\left|ae^{i\gamma+\what y\omega^*}-(2m_{\min}+hm_{\min})\right|\\
=&2m_{\min}-2m_{\min} \cos(y_1 \Omega),
\end{align*}
in the case when $2m_{\min}\cos(y_1\Omega)+hm_{\min}\leqs a\leqs 2m_{\min}+hm_{\min}$.

Now, we consider the case when $a< 2m_{\min}\cos(y_1\Omega)+hm_{\min}$. In this case, we have 
\begin{align*}
&\min_{a>0, h\geqs 0, \gamma \in \mathbb R, \what y\in \mathbb R, \omega^*\in [0,\Omega]}\left|ae^{i\gamma+\what y\omega^*}e^{-i\what y \Omega} -h m_{\min} e^{-iy_1 \Omega}- 2m_{\min} \cos(y_1 \Omega)\right|\\
&\qquad \qquad \qquad \qquad \qquad+\left|ae^{i\gamma+\what y\omega^*}-(2m_{\min}+hm_{\min})\right|\\
\geqs & \min_{a>0, h\geqs 0, \gamma \in \mathbb R, \omega^*\in [0,\Omega], a< 2m_{\min}\cos(y_1\Omega)+hm_{\min}}|ae^{i\gamma+\what y\omega^*}-(2m_{\min}+hm_{\min})|\\
\geqs& \min_{a>0, h\geqs 0, a< 2m_{\min}\cos(y_1\Omega)+hm_{\min}}|a-(2m_{\min}+hm_{\min})|\\
=& \min_{a>0, h\geqs 0, a< 2m_{\min}\cos(y_1\Omega)+hm_{\min}}2m_{\min}+hm_{\min}-a\\
>& 2m_{\min} -2m_{\min}\cos(y_1 \Omega).
\end{align*}

Finally, we consider the case when $a> 2m_{\min}+hm_{\min}$. In this case, we have 
\begin{align*}
&\min_{a>0, h\geqs 0, \gamma \in \mathbb R, \what y\in \mathbb R, \omega^*\in [0,\Omega]}\left|ae^{i\gamma+\what y\omega^*}e^{-i\what y \Omega} -h m_{\min} e^{-iy_1 \Omega}- 2m_{\min} \cos(y_1 \Omega)\right|\\
&\qquad \qquad \qquad \qquad \qquad+\left|ae^{i\gamma+\what y\omega^*}-(2m_{\min}+hm_{\min})\right|\\
\geqs & \min_{a>0, h\geqs 0, \gamma \in \mathbb R, \what y\in \mathbb R,  \omega^*\in [0,\Omega], a> 2m_{\min}+hm_{\min}}\left|ae^{i\gamma+\what y\omega^*}e^{-i\what y \Omega} -h m_{\min} e^{-iy_1 \Omega}- 2m_{\min} \cos(y_1 \Omega)\right|\\
\geqs& \min_{a>0, h\geqs 0, a> 2m_{\min}+hm_{\min}}a-(2m_{\min}\cos(y_1 \Omega)+hm_{\min})\\
>& 2m_{\min} -2m_{\min}\cos(y_1 \Omega).
\end{align*}

Therefore, combining all the above discussions, we arrive at
\begin{align*}
&\min_{a>0, h\geqs 0, \gamma \in \mathbb R, \what y\in \mathbb R, \omega^*\in [0,\Omega]}\left|ae^{i\gamma+\what y\omega^*}e^{-i\what y \Omega} -h m_{\min} e^{-iy_1 \Omega}- 2m_{\min} \cos(y_1 \Omega)\right|\\
&\qquad \qquad \qquad \qquad \qquad+\left|ae^{i\gamma+\what y\omega^*}-(2m_{\min}+hm_{\min})\right|\\
=&2m_{\min}-2m_{\min} \cos(y_1 \Omega),
\end{align*}
or equivalently,
\begin{align*}
&\min_{a>0, \alpha\geqs 1, \gamma \in \mathbb R, \what y\in \mathbb R, \omega^*\in [0,\Omega]}\left|ae^{i\gamma+\what y\omega^*}e^{-i\what y \Omega} -m_{\min}\left(\alpha e^{-iy_1 \Omega}+ e^{i y_1\Omega}\right)\right|+\left|ae^{i\gamma+\what y\omega^*} -m_{\min}\left(\alpha+1\right)\right|\\
=&2m_{\min}-2m_{\min} \cos(y_1 \Omega).
\end{align*}
Thus (\ref{equ:proofcomplextwopointlimit9}) is equivalent to 
\[
2m_{\min}-2m_{\min} \cos(y_1 \Omega)<4\sigma.
\]
Similar to the proof of Theorem \ref{thm:twopointresolution0}, this yields
\[
d_{\min} < \frac{4\arcsin\left(\left(\frac{\sigma}{m_{\min}}\right)^{\frac{1}{2}}\right)}{\Omega}.
\]

\bigskip
\textbf{Part 2:} ($y_1\Omega<\beta$)\\
In part 2, because $y_1\Omega<\beta$, the trick used in the former proof doesn't work now. We utilize another finding for the proof. Suppose $d_{\min}\geqs \frac{4\arcsin\left(\left(\frac{\sigma}{m_{\min}}\right)^{\frac{1}{2}}\right)}{\Omega}$ and there exist some measure $\what \mu=a\delta_{\what y}$ so that 
\[
|\mathcal F[\what \mu](\omega)-\vect Y(\omega)|<\sigma, \qquad \omega \in [-\Omega, \Omega].
\]
Then, this is in contradiction with (\ref{equ:numberalgo1}) and (\ref{equ:numberalgo2}) in Theorem \ref{thm:onednumberdetectalgo1}. Thus we have proved that 
\[
d_{\min}< \frac{4\arcsin\left(\left(\frac{\sigma}{m_{\min}}\right)^{\frac{1}{2}}\right)}{\Omega}. 
\]
Note that this new finding can also be used to prove the first part, but we keep the first part for a stronger understanding of the optimization problem and its underlying difficulty. The new finding comes from an optimal algorithm described in the next section. 
Now we have completed the proof. 
\end{proof}

\begin{remark}
We remark that the objectives of Section \ref{section:stabilitySR} and Section \ref{section:twopointlimit} are different. Section \ref{section:stabilitySR} provides estimates for the computational resolution limits in super-resolving $n$-sparse sources. In contrast, Section \ref{section:twopointlimit} focuses on deriving the exact formula for the computational resolution limit when super-resolving two point sources. Regarding the methods of proof, directly addressing the optimization problem (\ref{equ:proofpositivetwopointlimit1.2}) and (\ref{equ:proofcomplextwopointlimit9}) yields optimal estimates for two-point resolution. However, this approach does not extend to the case of super-resolving $n$-sources. Conversely, location-amplitude identities offer a robust framework for analyzing the resolution of super-resolving $n$-sparse sources. Moreover, according to Theorem \ref{thm:computatwopointresolution0}, the resolution estimate in Theorem \ref{thm:upperboundnumberlimithm0} is already very sharp, demonstrating the effectiveness of location-amplitude identities in delivering precise insights into super-resolution problems. 
\end{remark}

\subsection{Two-point resolution for  very general imaging models}
The two-point resolution estimate in previous sections  can actually be generalize to very general imaging problems as we shall discuss next.  We assume that the available measurement is 
\begin{equation}\label{equ:twopointgeneralmodelsetting1}
	\mathbf Y(\vect{\omega}) = \chi(\vect \omega)\left(\mathcal F[\mu] (\vect{\omega}) + \mathbf W(\vect{\omega})\right)= \sum_{j=1}^{n}a_j \chi(\vect \omega) e^{i \vect{y}_j\cdot \vect{\omega}} + \chi(\vect \omega) \mathbf W(\vect{\omega}), \ \vect \omega \in \mathbb R^k, \ ||\vect{\omega}||_2\leqs \Omega,
\end{equation}
where $\chi(\vect \omega)=0$ or $1$, $\chi(\vect 0)=1$ and $\chi(\vect \omega)=1, ||\vect \omega||_2 =\Omega$. Moreover, the noise $\vect W$ is assumed to be bounded as:
\[
|\vect W(\vect \omega)|<\sigma, \quad  \bnorm{\vect \omega}\leqs \Omega. 
\]

For the imaging model (\ref{equ:twopointgeneralmodelsetting1}), consider similar definitions to the previous ones for  $\sigma$-admissible measures and the computational resolution limit.  It is not hard to see that the estimates in the previous sections still hold and we have the following theorem. 

\begin{thm}\label{thm:twopointresogeneralsetting}
Consider the imaging model (\ref{equ:twopointgeneralmodelsetting1}). For $\frac{\sigma}{m_{\min}} \leqs \frac{1}{2}$, the resolution limits $\mathcal D_{k, num}^+, \mathcal D_{k, num}$ for resolving two sources in $\mathbb R^k$ are 
\begin{equation}\label{equ:twopointresogeneralcase}
\frac{4\arcsin\left(\left(\frac{\sigma}{m_{\min}}\right)^{\frac{1}{2}}\right)}{\Omega}.
\end{equation}
These resolution limits can be attained if $a_1=a_2$. When $\frac{\sigma}{m_{\min}} >\frac{1}{2}$, no matter what the separation distance is, there are always some $\sigma$-admissible measures of some $\vect Y$ corresponding to one point source.
\end{thm}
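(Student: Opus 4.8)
The plan is to exploit the single structural fact that passing from the full-aperture model (\ref{equ:twopointmodelsetting1}) to the masked model (\ref{equ:twopointgeneralmodelsetting1}) only \emph{removes} admissibility constraints, combined with the observation (already implicit in the three preceding proofs) that the two-point resolution is dictated entirely by the data at the centre $\vect\omega=\vect 0$ and on the boundary sphere $\bnorm{\vect\omega}_2=\Omega$. Under the natural analogue of Definition \ref{defi:highdsigmaadmissmeasure}, a measure $\hat\mu$ is $\sigma$-admissible for (\ref{equ:twopointgeneralmodelsetting1}) exactly when $|\chi(\vect\omega)\mathcal F[\hat\mu](\vect\omega)-\mathbf Y(\vect\omega)|<\sigma$ for all $\bnorm{\vect\omega}_2\leq\Omega$; wherever $\chi(\vect\omega)=0$ both sides vanish and the inequality is automatic, so the binding constraints live \emph{only} on the set $\{\chi=1\}$. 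Because $\chi(\vect 0)=1$ and $\chi\equiv 1$ on $\bnorm{\vect\omega}_2=\Omega$, this set always contains the centre and the entire boundary sphere. As in the previous proofs I would first align $\vect y_1,\vect y_2$ with the first coordinate axis, noting that $\vect 0$ and the points $(\pm\Omega,0,\dots,0)^{\top}$ still lie in $\{\chi=1\}$, so the reduction to one dimension goes through verbatim.

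For the sufficiency half---that each of $\mathcal R,\mathcal D_{k,num}^{+},\mathcal D_{k,num}$ is at most the value in (\ref{equ:twopointresogeneralcase})---I would simply reuse the explicit single-source witness of Theorem \ref{thm:twopointresolution0} constructed at the worst-case configuration $a_1=a_2$ (with $\hat y=0$ and amplitude $\frac{m_{\min}+m_{\min}\cos(y_1\Omega)}{2}$), which already satisfies $|\mathcal F[\hat\mu](\vect\omega)-\mathcal F[\mu](\vect\omega)|<2\sigma$ on the \emph{entire} ball $\bnorm{\vect\omega}_2\leq\Omega$. Multiplying by $\chi$ only weakens these inequalities, so the same $\hat\mu$ remains single-source $\sigma$-admissible for (\ref{equ:twopointgeneralmodelsetting1}), and resolution fails below the threshold. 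The degenerate regime $\frac{\sigma}{m_{\min}}>\frac12$ is handled identically by the witness $\hat\mu=m_{\min}\delta_{\vect y_1}$, whose residual $m_{\min}e^{i\vect y_2\cdot\vect\omega}$ has modulus $m_{\min}<2\sigma$ everywhere, a fortiori on $\{\chi=1\}$.

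For the necessity half---that each limit is at least (\ref{equ:twopointresogeneralcase}), i.e.\ resolution holds whenever $d_{\min}\geq\frac{4\arcsin((\sigma/m_{\min})^{1/2})}{\Omega}$---I must rule out any single-source $\sigma$-admissible measure for \emph{every} configuration. For positive sources this is immediate: the necessary condition extracted in the proof of Theorem \ref{thm:positivetwopointresolution0}, namely $2m_{\min}\bigl(1-\cos(y_1\Omega)\bigr)<4\sigma$ obtained by summing the residual moduli at $(\Omega,0,\dots,0)^{\top}$ and at $\vect 0$, uses only the boundary and the centre, both of which survive the mask; it is therefore unchanged and rearranges to $d_{\min}<$ the threshold, a contradiction. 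For general complex sources the cleanest route is to invoke the optimal algorithm of Theorem \ref{thm:onednumberdetectalgo1}, which (as stressed in Section \ref{section:twopointlimit}) queries only $\mathcal F[\mu](\vect 0)$ and a single boundary value, certifying two sources precisely when $d_{\min}\geq$ the threshold; since both queried frequencies lie in $\{\chi=1\}$, its inputs are furnished verbatim by $\mathbf Y$ and its guarantee transfers unchanged.

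The main obstacle is exactly this necessity half for \emph{complex} amplitudes. In Part~1 of the proof of Theorem \ref{thm:computatwopointresolution0} the phase $\beta$ was absorbed by the shift $\omega^{*}=\beta/y_1$, after which the two binding residuals were read off at the \emph{interior} frequencies $\omega^{*}-\Omega$ and $\omega^{*}$, which need not belong to $\{\chi=1\}$ once $\chi$ is an arbitrary mask. Thus the direct optimisation argument does not survive masking, and the complex case \emph{must} be routed through the centre/boundary-only argument of Part~2, i.e.\ through Theorem \ref{thm:onednumberdetectalgo1}. The crux of the proof is therefore to verify carefully that this algorithm never consults any interior aperture data---so that its optimality is genuinely determined by the two frequencies $\vect 0$ and $\bnorm{\vect\omega}_2=\Omega$ that $\chi$ is assumed to preserve---after which the equalities $\mathcal R=\mathcal D_{k,num}^{+}=\mathcal D_{k,num}=(\ref{equ:twopointresogeneralcase})$ follow by combining the two halves.
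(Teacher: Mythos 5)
Your proposal is correct and takes essentially the same approach as the paper, which gives no detailed proof but simply asserts that the earlier two-point arguments carry over because they only bind at the frequencies $\vect\omega=\vect 0$ and $\bnorm{\vect\omega}_2=\Omega$, which $\chi$ preserves. You are in fact more careful than the paper: your observation that Part 1 of the proof of Theorem \ref{thm:computatwopointresolution0} (which reads the residuals at the shifted interior frequencies $\omega^*$ and $\omega^*-\Omega$) does not survive the mask and must therefore be routed through the boundary-only argument of Theorem \ref{thm:onednumberdetectalgo1} is precisely the point the paper leaves implicit.
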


Compared to (\ref{equ:highdmodelsetting1}), the model (\ref{equ:twopointgeneralmodelsetting1}) is more general, for instance, super-resolution from discrete measurements can be modeled by (\ref{equ:twopointgeneralmodelsetting1}). Thus Theorem \ref{thm:twopointresogeneralsetting} can be applied directly to super-resolution in practice, line spectral estimation, and direction-of-arrival. Moreover, by the inverse filtering methods, our results can be applied to imaging problems with very general optical transfer functions, such as the one shown in Figure \ref{fig:opticaltranfun}.  We believe that this will inspire new understandings for the resolution of a number of imaging modalities. We remark that it is more appropriate to apply Theorem \ref{thm:twopointresogeneralsetting} to imaging problems where the noise level at $0$ and $||\vect \omega||_2=\Omega$ are close or comparable after modifying the model to  (\ref{equ:twopointgeneralmodelsetting1}). 

In fact, Theorem \ref{thm:twopointresogeneralsetting} reveals the fact that the two-point resolution is actually not that related to the continuous band of frequencies but rather mostly determined by the boundary points. In particular, in the one-dimensional case, if we have only measurements in $[-\Omega+\epsilon, \Omega-\epsilon]$ for $\epsilon>0$, then the resolution in (\ref{equ:twopointresogeneralcase}) does not hold anymore. In the multi-dimensional cases, similar conclusions hold as well.  Thus the condition $\bnorm{\vect \omega}_2=\Omega$ is nearly a necessary condition for Theorem \ref{thm:twopointresogeneralsetting} to hold. 

\begin{figure}[!h]
		\centering
		\includegraphics[width=9.6cm, height=6cm]{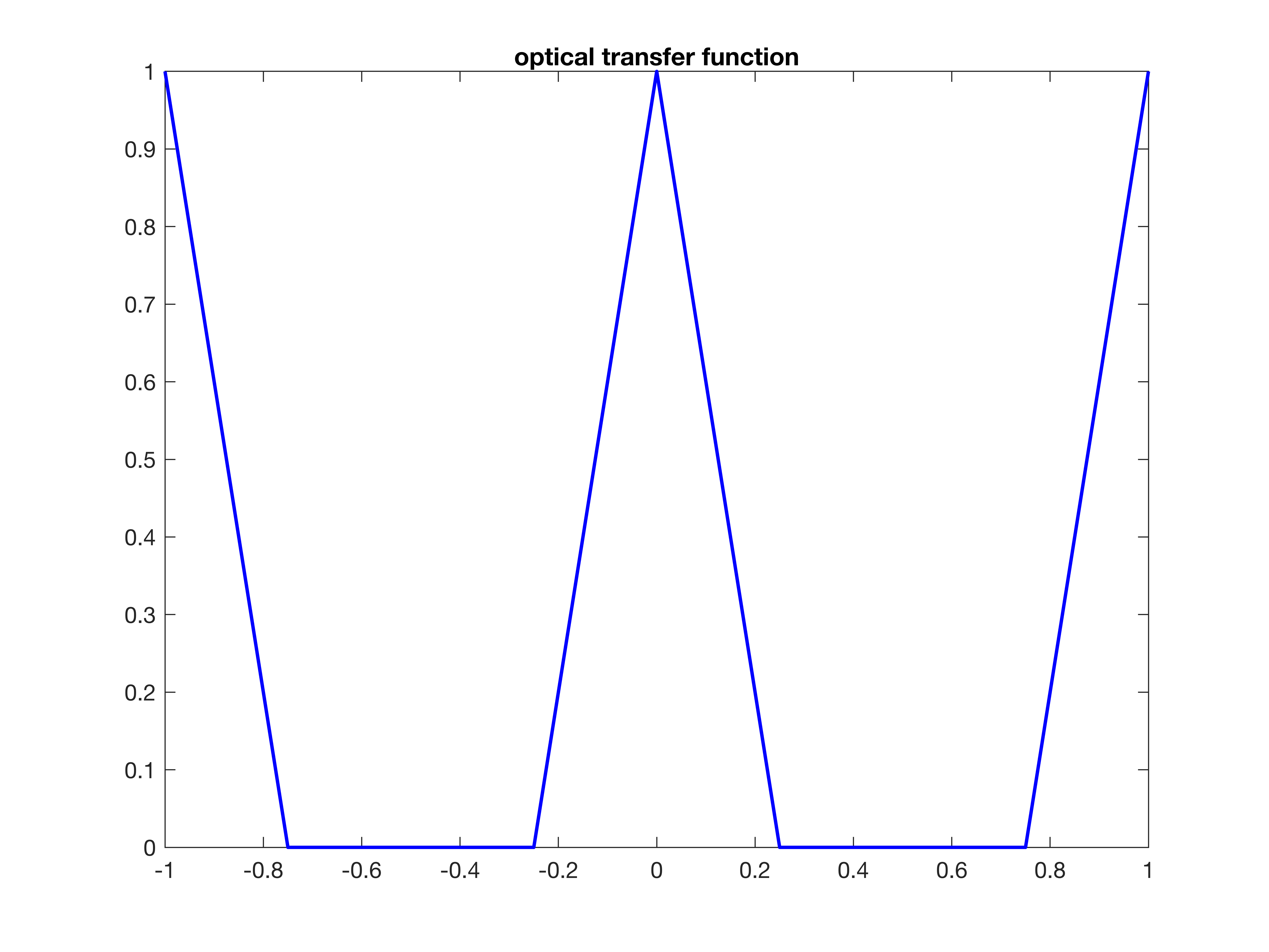}
		\caption{Optical transfer function.}
	\label{fig:opticaltranfun}
\end{figure}

\section{Optimal Algorithms}\label{section:optimalalgorithm1}
We now have the exact resolution limit for determining whether the image is generated by one or two sources. This is a new benchmark for super-resolution and model order detection algorithms. A natural question is whether we can find the optimal algorithm to distinguish between one and two sources in the image. Note that, according to our theoretical results, when the two sources are separated by more than 
\[
\babs{y_1-y_2}\geqs \frac{4\arcsin\left(\left(\frac{\sigma}{m_{\min}}\right)^{\frac{1}{2}}\right)}{\Omega},
\]
any algorithm targeting certain solutions in the set of admissible measures provides a solution with more than one source. But we still cannot confirm that there is more than one source inside. Only by considering the sparest solution in the set of admissible measures can we confirm this fact. However, since $l_0$ minimization is intractable, this direction is still unrealistic and we resort to other means. In \cite{liu2021theorylse}, a simple singular value thresholding-based algorithm was proposed to detect the source number. In this section, we consider a variant of it and theoretically demonstrate that the algorithm exactly attains the resolution limit. We remark that the optimality of our algorithm refers to achieving the two-point resolution limit in Theorems \ref{thm:positivetwopointresolution0} and \ref{thm:computatwopointresolution0} under the imaging model (\ref{equ:highdmodelsetting1}). Algorithms leveraging specific noise patterns may outperform this one.

\subsection{An optimal algorithm for detecting two sources in dimension one}
In \cite{liu2021theorylse}, the authors proposed a number detection algorithm called sweeping singular value thresholding number detection algorithm. It determines the number of sources by thresholding the singular value of a Hankel matrix formulated from the  measurement data. Here we consider a simple variant of it. 

To be more specific, we first assemble the following Hankel matrix from the measurements (\ref{equ:highdmodelsetting1}), that is, 
\begin{equation}\label{equ:hankelmatrix1}
\mathbf H=\left(\begin{array}{cc}
\mathbf Y(-\Omega)& \mathbf Y(0)\\
\mathbf Y(0)&\mathbf Y(\Omega)
\end{array}
\right).\end{equation}
We denote the singular value decomposition of $\mathbf H$ as  
\[\mathbf H=\what U\what \Sigma \what U^*,\]
where $\what\Sigma =\text{diag}(\what \sigma_1, \what \sigma_2)$ with the singular values $\what \sigma_1, \what \sigma_2$ ordered in a decreasing manner. We then determine the source number by a thresholding of the singular values. We derive the following Theorem \ref{thm:onednumberdetectalgo1} for the threshold and the resolution of the algorithm.

\begin{thm}\label{thm:onednumberdetectalgo1}
Consider $\mu=\sum_{j=1}^{2}a_j \delta_{y_j},y_j\in  B_{\frac{(n-1) \pi}{2\Omega}}^1(0)$ and the measurement $\vect Y$ in (\ref{equ:highdmodelsetting1}) that is generated from $\mu$. If the following separation condition is satisfied
	\begin{equation}\label{equ:numberalgo0}
	\babs{y_1-y_2}\geqs \frac{4\arcsin\left(\left(\frac{\sigma}{m_{\min}}\right)^{\frac{1}{2}}\right)}{\Omega},
	\end{equation}
 then we have
	\begin{equation}\label{equ:numberalgo1}
	\what\sigma_{2}> 2\sigma
 \end{equation}
for $\what \sigma_2$ being the minimum singular value of the matrix $\vect H$ in (\ref{equ:hankelmatrix1}). On the other hand, if there exists $\what \mu$ consisting of only one source being a $\sigma$-admissible measure of \ $\vect Y$, then
 \begin{equation}\label{equ:numberalgo2}
 \what \sigma_2<2\sigma.
 \end{equation}
\end{thm}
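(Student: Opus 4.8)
The plan is to write the data matrix (\ref{equ:hankelmatrix1}) as $\mathbf H=\mathbf H_\mu+\mathbf E$, where $\mathbf H_\mu$ is the noiseless Hankel matrix assembled from $\mathcal F[\mu]$ at $\omega=-\Omega,0,\Omega$ and $\mathbf E$ is the symmetric matrix collecting the noise samples $\mathbf W(-\Omega),\mathbf W(0),\mathbf W(\Omega)$. Everything then reduces to two elementary facts for $2\times 2$ matrices: Weyl's perturbation inequality $\hat\sigma_2(\mathbf H)\geq\hat\sigma_2(\mathbf H_\mu)-\|\mathbf E\|_2$ for (\ref{equ:numberalgo1}), and the Eckart--Young characterization $\hat\sigma_2(\mathbf H)=\min_{\mathrm{rank}(B)\leq 1}\|\mathbf H-B\|_2$ for (\ref{equ:numberalgo2}). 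The noise matrix has all entries bounded by $\sigma$, so $\|\mathbf E\|_2\leq\|\mathbf E\|_F=\big(|\mathbf W(-\Omega)|^2+2|\mathbf W(0)|^2+|\mathbf W(\Omega)|^2\big)^{1/2}<2\sigma$; the middle entry counted twice is exactly what produces the threshold factor $2$.

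For the upper bound (\ref{equ:numberalgo2}) the argument is routine. If a one-source measure $\hat\mu=a\delta_{\hat y}$ is $\sigma$-admissible, then its Hankel matrix $\hat{\mathbf H}=a\left(\begin{smallmatrix}e^{-i\hat y\Omega}&1\\1&e^{i\hat y\Omega}\end{smallmatrix}\right)$ has rank one, and each entry of $\mathbf H-\hat{\mathbf H}$ equals a residual $\mathbf Y(k\Omega)-\mathcal F[\hat\mu](k\Omega)$ of modulus $<\sigma$. The same three-term computation gives $\|\mathbf H-\hat{\mathbf H}\|_F<2\sigma$, and Eckart--Young yields $\hat\sigma_2(\mathbf H)\leq\|\mathbf H-\hat{\mathbf H}\|_2<2\sigma$.

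The substance is the lower bound (\ref{equ:numberalgo1}), which amounts to proving $\hat\sigma_2(\mathbf H_\mu)\geq 4\sigma$ under (\ref{equ:numberalgo0}), so that Weyl gives $\hat\sigma_2(\mathbf H)>4\sigma-2\sigma=2\sigma$. First I would factor the noiseless matrix in Vandermonde form $\mathbf H_\mu=VDV^{\top}$ with $D=\mathrm{diag}(a_1,a_2)$ and $V=\left(\begin{smallmatrix}e^{-iy_1\Omega/2}&e^{-iy_2\Omega/2}\\ e^{iy_1\Omega/2}&e^{iy_2\Omega/2}\end{smallmatrix}\right)$. Since $\mathbf H_\mu^{-1}=(V^{-1})^{\top}D^{-1}V^{-1}$, submultiplicativity gives $\hat\sigma_2(\mathbf H_\mu)=\|\mathbf H_\mu^{-1}\|_2^{-1}\geq m_{\min}\,\sigma_{\min}(V)^2$. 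A direct computation shows that $V^*V$ has entries $2$ on the diagonal and $2\cos(d_{\min}\Omega/2)$ off-diagonal, hence eigenvalues $2\big(1\pm\cos(d_{\min}\Omega/2)\big)$, so $\sigma_{\min}(V)^2=2\big(1-\cos(d_{\min}\Omega/2)\big)$ in the sub-Rayleigh regime $d_{\min}\leq\pi/\Omega$. This yields $\hat\sigma_2(\mathbf H_\mu)\geq 2m_{\min}\big(1-\cos(d_{\min}\Omega/2)\big)$, and using $1-\cos(d_{\min}\Omega/2)=2\sin^2(d_{\min}\Omega/4)$ one checks that the separation condition (\ref{equ:numberalgo0}) is precisely equivalent to $2m_{\min}\big(1-\cos(d_{\min}\Omega/2)\big)\geq 4\sigma$, which closes the argument.

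The main obstacle is the lower bound on $\hat\sigma_2(\mathbf H_\mu)$ for \emph{general complex} amplitudes: the naive route through $\hat\sigma_2=|\det\mathbf H_\mu|/\hat\sigma_1$ with a Frobenius estimate of $\hat\sigma_1$ loses a constant and does not reach $4\sigma$ at the boundary, because $\hat\sigma_1$ is inflated by the uncontrolled larger amplitude. The inverse-norm/submultiplicativity step above is what eliminates that dependence, reducing the estimate to the purely geometric quantity $\sigma_{\min}(V)$. I expect the equal-amplitude case $a_1=a_2$ to confirm sharpness: there $\mathbf H_\mu$ is real symmetric with eigenvalues $2m_{\min}\big(\cos(d_{\min}\Omega/2)\pm1\big)$, so $\hat\sigma_2(\mathbf H_\mu)=2m_{\min}\big(1-\cos(d_{\min}\Omega/2)\big)$ and the bound is attained, matching the diffraction limit of Theorem \ref{thm:computatwopointresolution0}. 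One caveat worth flagging is aliasing: the three samples determine only $e^{iy_j\Omega}$, so the identity $\sigma_{\min}(V)^2=2\big(1-\cos(d_{\min}\Omega/2)\big)$ and the conclusion are valid in the window $d_{\min}\leq\pi/\Omega$, which is exactly the range of interest since $\tfrac{\sigma}{m_{\min}}\leq\tfrac12$ forces $\mathcal R\leq\pi/\Omega$.
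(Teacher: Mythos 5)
Your proof is correct and follows essentially the same route as the paper's: the same Vandermonde-type factorization of the noiseless Hankel matrix (yours differs from the paper's $DAD^{\top}$ only by a diagonal unitary), the same lower bound $\hat\sigma_2(\mathbf H_\mu)\geq m_{\min}\sigma_{\min}(V)^2=4m_{\min}\sin^2(d_{\min}\Omega/4)$, the same $2\sigma$ bound on the noise perturbation (via the Frobenius norm rather than the paper's direct maximization over unit vectors, which gives the identical constant), and Weyl respectively Eckart--Young for the two directions. The aliasing caveat you flag ($d_{\min}\leq\pi/\Omega$, needed for $\lambda_{\min}(V^*V)=2(1-\cos(d_{\min}\Omega/2))$ rather than $2-2|\cos(d_{\min}\Omega/2)|$) is a genuine restriction that the paper's own proof also relies on without stating it, and which its numerical section acknowledges only informally.
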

\begin{proof}
Observe that $\mathbf H$ has the decomposition
\begin{equation}\label{equ:hankeldecomp1}
\mathbf H = DAD^{\top}+\Delta,
\end{equation}
where $A=\text{diag}(e^{-iy_1\Omega}a_1, e^{-iy_2\Omega}a_2)$ and $D=\big(\phi_{1}(e^{i y_1 \Omega}), \phi_{1}(e^{i y_2\Omega})\big)$ with $\phi_{1}(\omega)$ being defined as $(1, \omega)^\top$ and 
\begin{equation*}
\Delta = \left(\begin{array}{cc}
\mathbf {W}(-\Omega)& \mathbf {W}(0)\\
\mathbf {W}(0)&\mathbf {W}(\Omega)
\end{array}
\right).
\end{equation*}
We denote the singular values of $DAD^{\top}$ by $\sigma_1, \sigma_2$.

We first estimate $\bnorm{\Delta}_2$. We have
\begin{align*}
&\max_{x_1^2+x_2^2=1}\bnorm{\Delta(x_1, x_2)^{\top}}_2\\
= &\max_{x_1^2+x_2^2=1}\sqrt{(x_1\vect W(-\Omega) +x_2  \vect W(0))^2+ (x_1\vect W(0) +x_2  \vect W(\Omega))^2}\\
= &\max_{x_1^2+x_2^2=1}\sqrt{\vect W(0)^2+ 2x_1x_2\vect W(0)(\vect W(-\Omega)+\vect W(\Omega))+x_1^2\vect W(-\Omega)^2+x_2^2\vect W(\Omega)^2}\\
<&\max_{x_1^2+x_2^2=1}\sqrt{\sigma^2+ 4 \sigma^2 x_1x_2+ (x_1^2+x_2^2)\sigma^2} \quad \Big(\text{by the condition on the noise}\Big)\\
=&2\sigma.
\end{align*}
Thus we have $||\Delta||_2<2\sigma$. By Weyl's theorem, we have 
\begin{equation}\label{equ:proofnumberalgoeq1}
\babs{\what \sigma_j-\sigma_j}\leqs ||\Delta||_2<2\sigma, j=1,2.
\end{equation} 

Now we estimate the minimum singular value of $DAD^{\top}$ in the presence of two sources. Denote $\sigma_{\min}(M)$ and $\lambda_{\min}(M)$ as respectively the minimum singular value and eigenvalue of matrix $M$. We have 
\begin{align*}
\sigma_{\min}(DAD^{\top})\geqs m_{\min}\sigma_{\min}(D)^2=m_{\min}\lambda_{\min}(DD^*)= 4m_{\min}\sin^2\left(\babs{\frac{y_1-y_2}{4}}\Omega\right). 
\end{align*}
Therefore, when $y_j\in B_{\frac{(n-1) \pi}{2\Omega}}^1(0),j=1,2,$ and (\ref{equ:numberalgo0}) holds, $\sigma_{\min}(DAD^{\top})\geqs 4\sigma$. This is $\sigma_2 \geqs 4\sigma$. Similarly, by Weyl's theorem, $|\what \sigma_2-\sigma_2|\leqs ||\Delta||_2$. Thus, $\what \sigma_2\geqs 4\sigma-||\Delta||_2>2\sigma$. Conclusion (\ref{equ:numberalgo1}) follows. 

On the other hand, note that if there exists $\what \mu= \what a_1\delta_{\what y_1}$ consisting of one source being a $\sigma$-admissible measure of $\vect Y$, we can substitute the $D$ in (\ref{equ:hankeldecomp1}) by $\big(\phi_{1}(e^{i \what y_1 \Omega})\big)$ with the $\vect W$ and $\Delta$ being modified. Now we have $\sigma_2=0$ and also $||\Delta||_2< 2\sigma$. Thus by (\ref{equ:proofnumberalgoeq1}) we get $|\what \sigma_2|\leqs ||\Delta||_2< 2\sigma$ and prove (\ref{equ:numberalgo2}). 
\end{proof}

We summarize the algorithm in the following \textbf{Algorithm \ref{algo:onedsinguvaluenumberalgo}}. Note that in practical applications one can estimate a noise level although not tight and utilize our algorithm to detect the source number. By Theorem \ref{thm:onednumberdetectalgo1}, for all estimated $\sigma$'s less than $\frac{m_{\min}}{2}$, our algorithm can achieve super-resolution. 

\begin{algorithm}[H]
	\caption{\textbf{Singular-value-thresholding number detection algorithm}}
	\textbf{Input:} Noise level $\sigma$;\\
	\textbf{Input:} Measurement: $\mathbf{Y}(\omega), \omega \in [-\Omega, \Omega]$;\\	
	1: Formulate the Hankel matrix \[
\vect H = \begin{pmatrix}
\mathbf Y(-\Omega)& \mathbf Y(0)\\
\mathbf Y(0)&\mathbf Y(\Omega)
\end{pmatrix}
    \]
    from measurement $\vect Y(\omega)$;\\
    2: Compute the singular value of $\mathbf H$ as $\what \sigma_{1}, \what \sigma_{2}$ distributed in a decreasing manner;\\
    3: If $\what \sigma_2\geqs 2\sigma$, determine source number $n=2$ and otherwise,  determine $n=1$;\\
    \textbf{Return:} $n$. 
    \label{algo:onedsinguvaluenumberalgo}
\end{algorithm}

\noindent \textbf{Numerical experiments:}\\
We conduct many numerical experiments to elucidate the performance of \textbf{Algorithm \ref{algo:onedsinguvaluenumberalgo}}. We consider $\Omega=1$ and measurements $\vect Y$ generated by two sources. The noise level is $\sigma$ and the minimum separation distance between sources is $d_{\min}$. We first perform $100000$ random experiments (the randomness is in the choice of $(d_{\min},\sigma, y_j, a_j)$) and the results were shown in Figure \ref{fig:onedtwopointsphasetransition} (a)-(c). The green points and red points represent respectively the cases of successful detection and failed detection. It is indicated that in many cases, our \textbf{Algorithm \ref{algo:onedsinguvaluenumberalgo}} can surpass the two-point resolution limit. We also conduct $100000$ experiments for the worst-case scenario; see results in Figure \ref{fig:onedtwopointsphasetransition} (d)-(f). As shown numerically, our algorithm successfully detects the source number when $d_{\min}$ is above the two-point resolution limit and fails in exactly the opposite cases. Last, we consider the worst cases when detecting the source number is impossible when $\frac{\sigma}{m_{\min}}>\frac{1}{2}$. The results were presented in Figure \ref{fig:onedtwopointsphasetransition} (g)-(i) and there is no successful case when $\frac{\sigma}{m_{\min}}>\frac{1}{2}$. Note that the failed cases when $\frac{\sigma}{m_{\min}}<\frac{1}{2}$ and $d_{\min}$ above the two-point resolution limit is due to the fact that $|e^{iy_1\Omega}-e^{iy_2\Omega}|$ becomes small when $|y_1-y_2|\Omega$ approaching $2\pi$. 

We also conduct several experiments to illustrate that our algorithm can detect the correct source number even if it seems very unlikely to distinguish the two sources by other methods. We consider $5$ cases where the source number is correctly detected by our algorithm; see Figure \ref{fig:numberdetectandmusic} (a). However, as shown by Figure \ref{fig:numberdetectandmusic} (b)-(f), their MUSIC images only have one peak.

\begin{figure}[!h]
	\centering
        \begin{subfigure}[b]{0.3\textwidth}
		\centering
		\includegraphics[width=\textwidth]{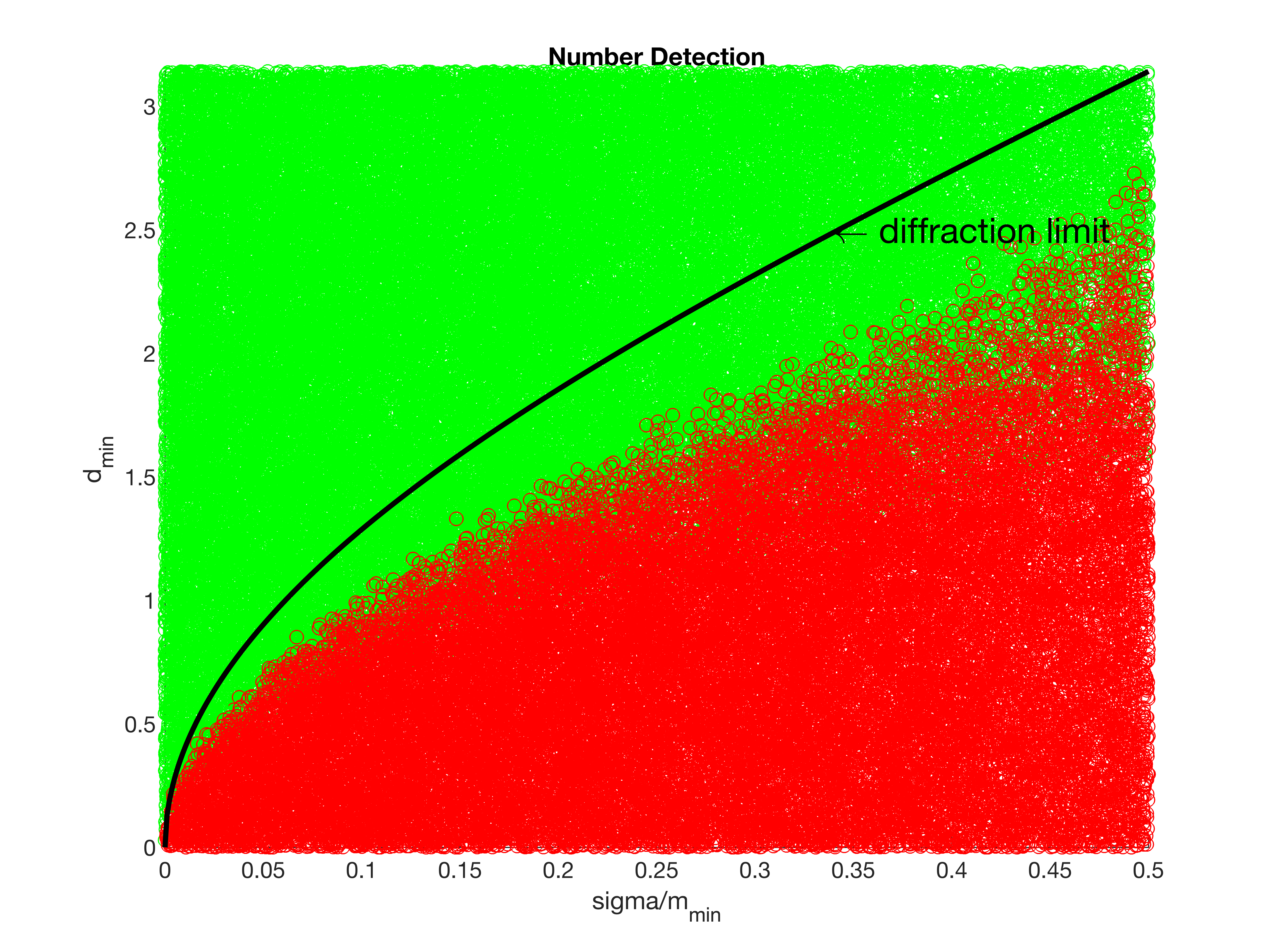}
		\caption{detection results}
	\end{subfigure}
	\begin{subfigure}[b]{0.3\textwidth}
		\centering
		\includegraphics[width=\textwidth]{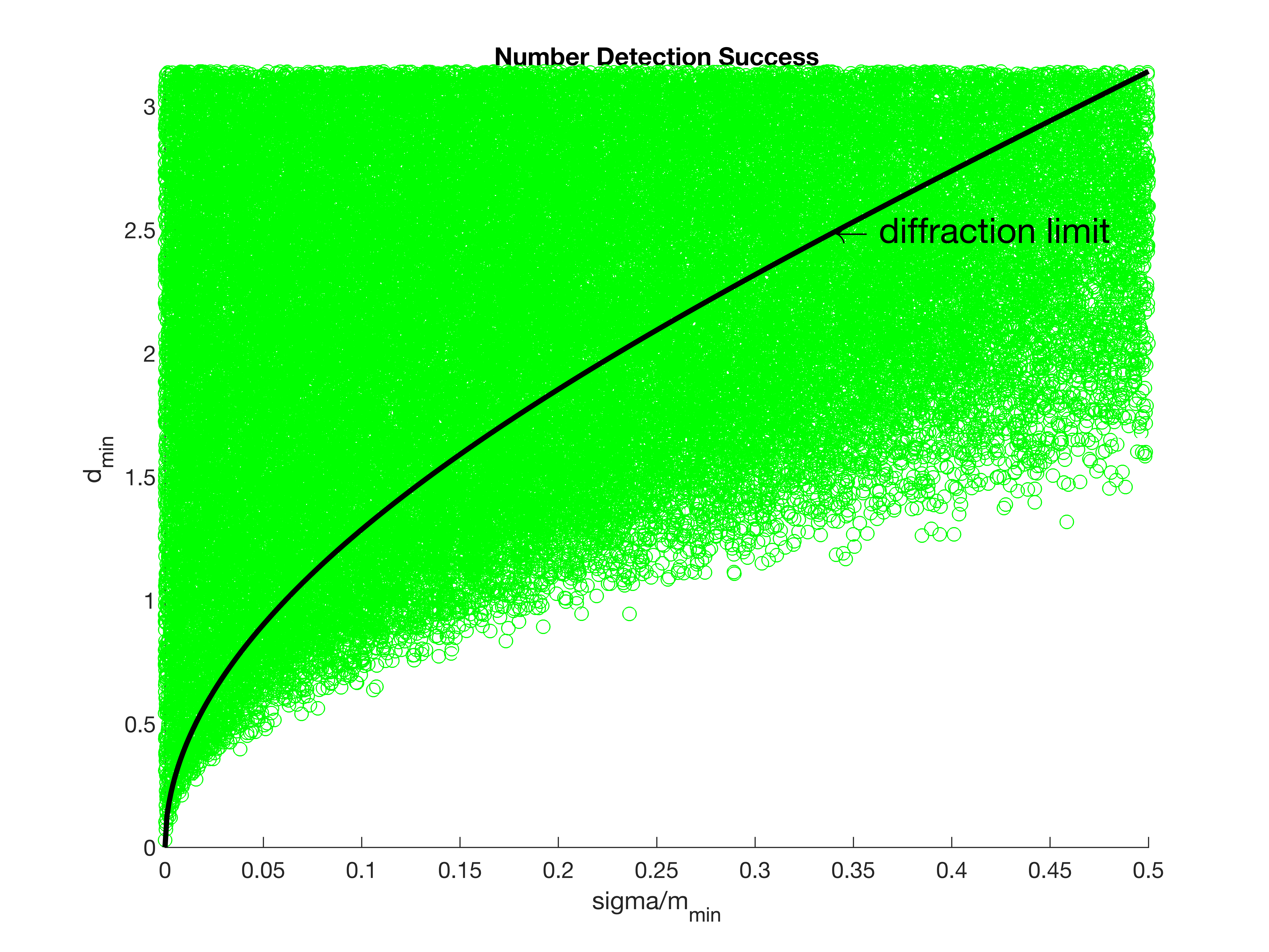}
		\caption{detection success}
	\end{subfigure}
        \begin{subfigure}[b]{0.3\textwidth}
		\centering
		\includegraphics[width=\textwidth]{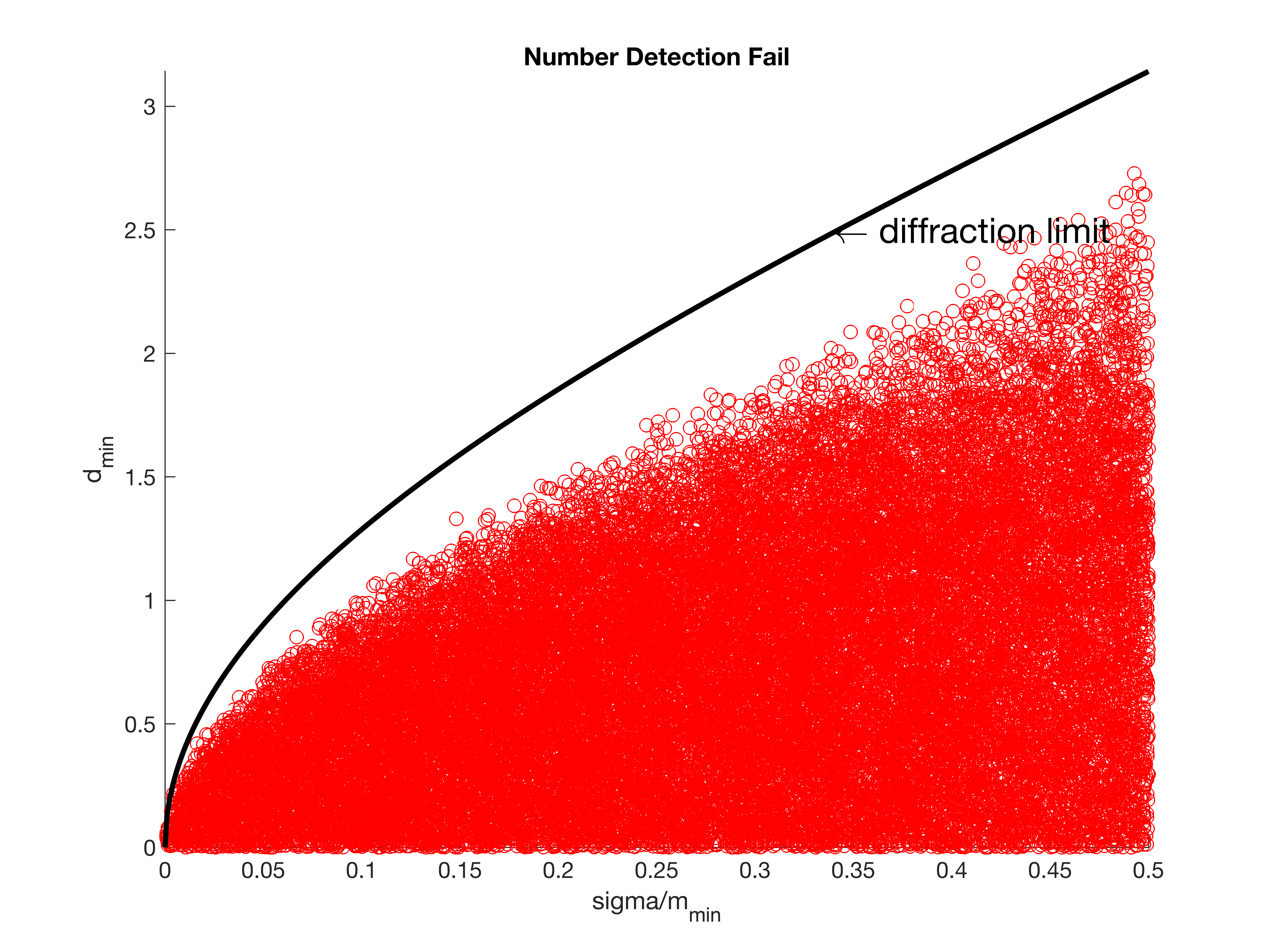}
		\caption{detection fail}
	\end{subfigure}
	\begin{subfigure}[b]{0.3\textwidth}
		\centering
		\includegraphics[width=\textwidth]{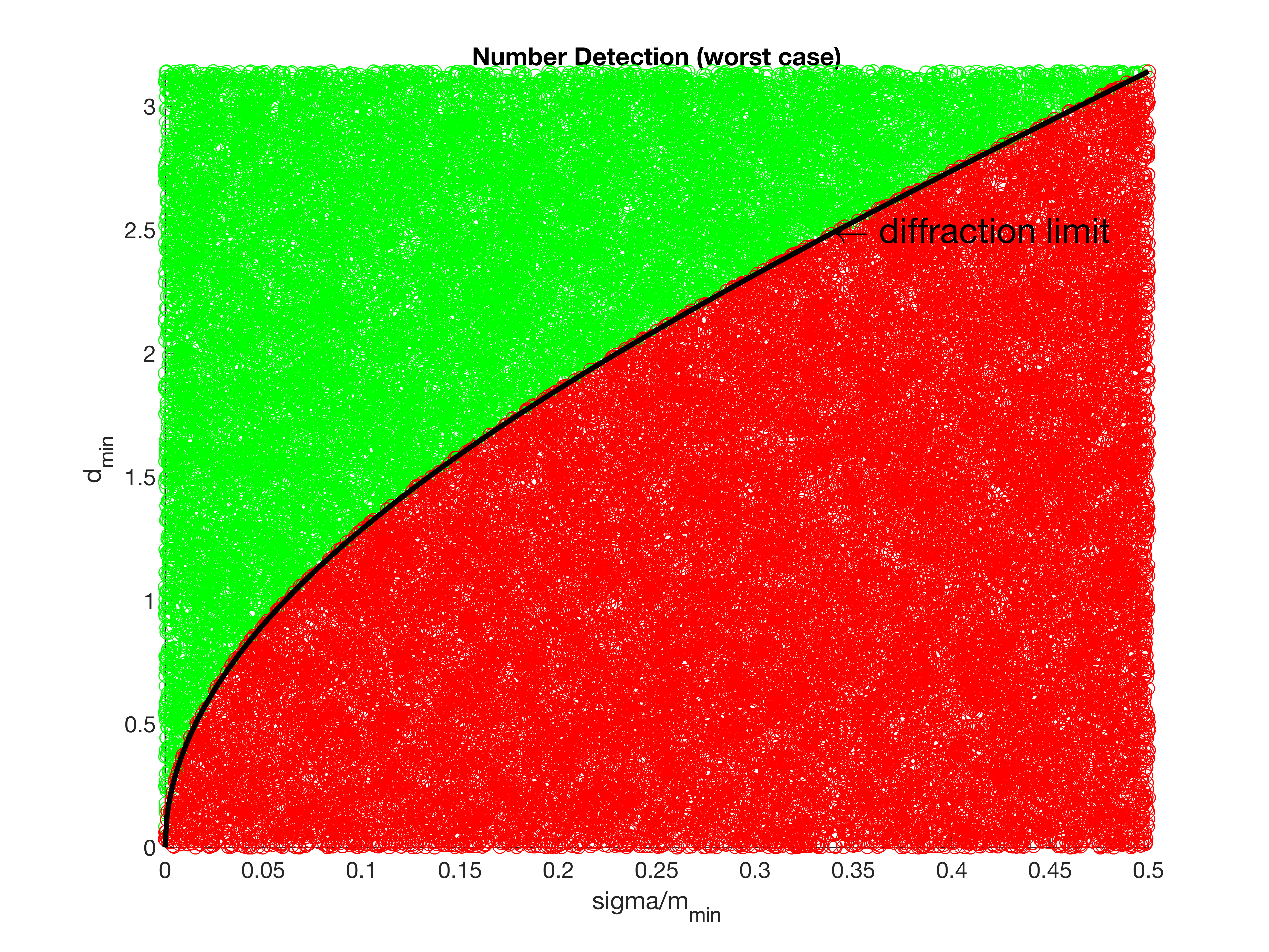}
		\caption{detection results}
	\end{subfigure}
	\begin{subfigure}[b]{0.3\textwidth}
		\centering
		\includegraphics[width=\textwidth]{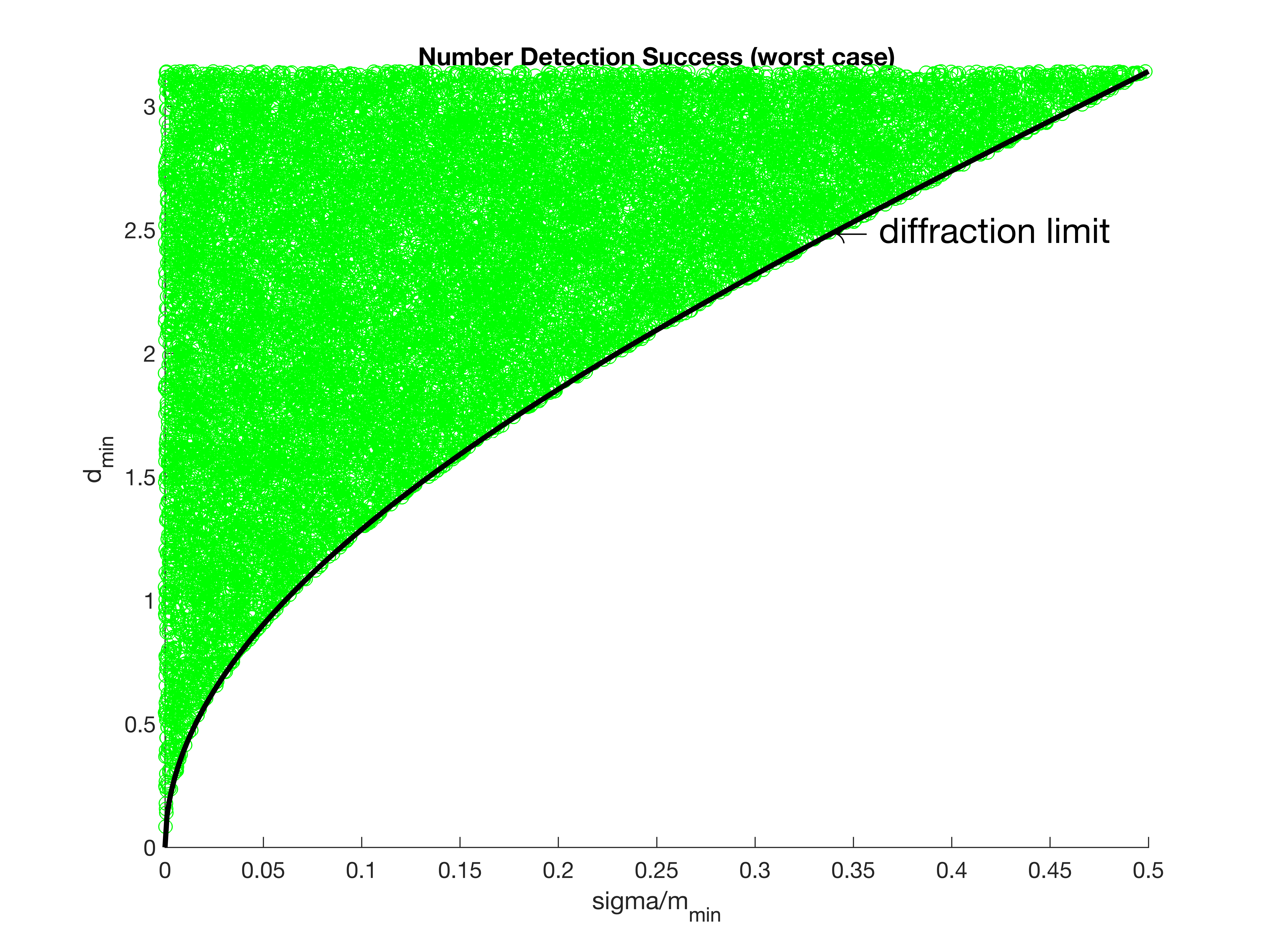}
		\caption{detection success}
	\end{subfigure}
        \begin{subfigure}[b]{0.3\textwidth}
		\centering
		\includegraphics[width=\textwidth]{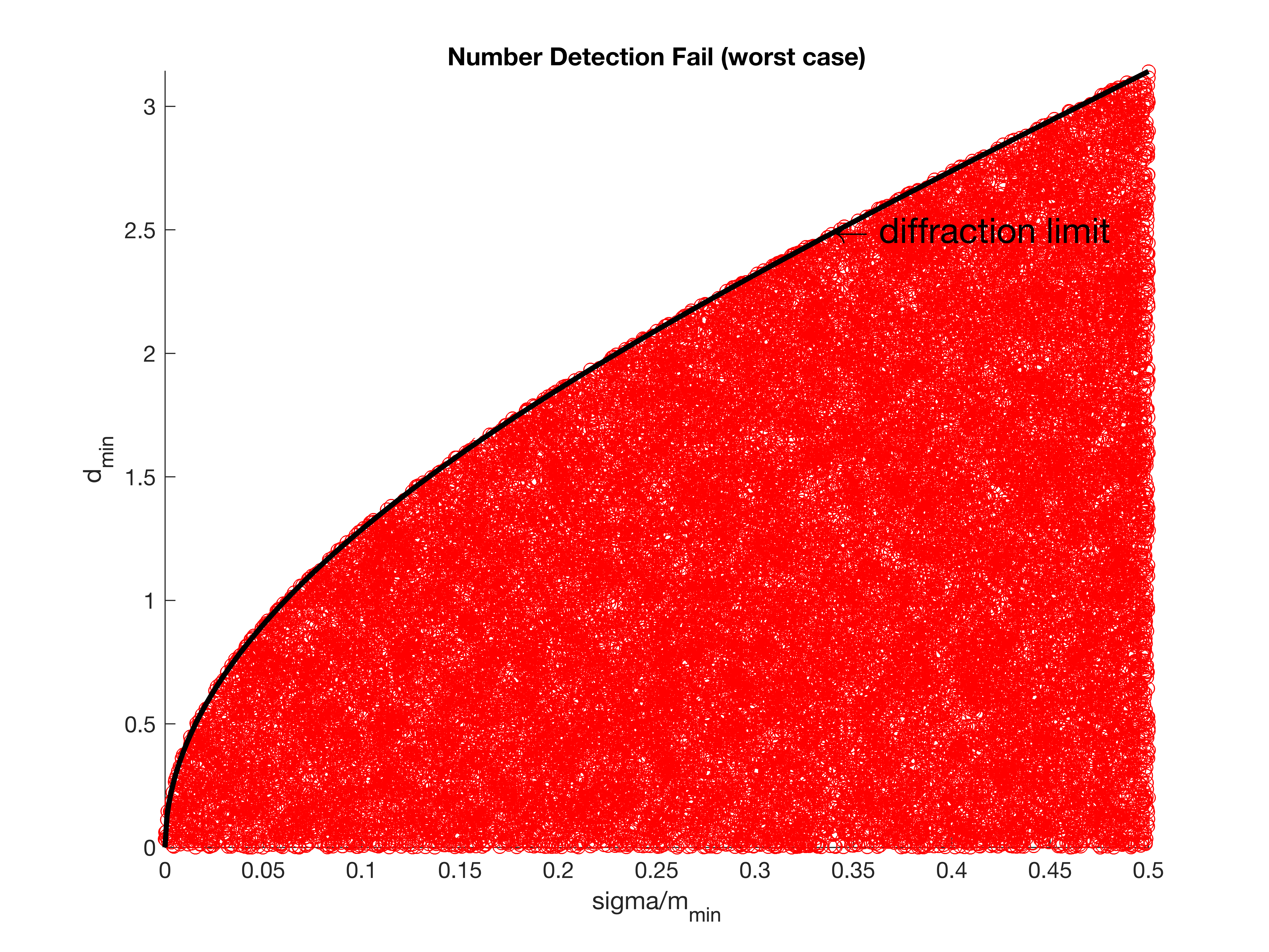}
		\caption{detection fail}
	\end{subfigure}
        \begin{subfigure}[b]{0.3\textwidth}
		\centering
		\includegraphics[width=\textwidth]{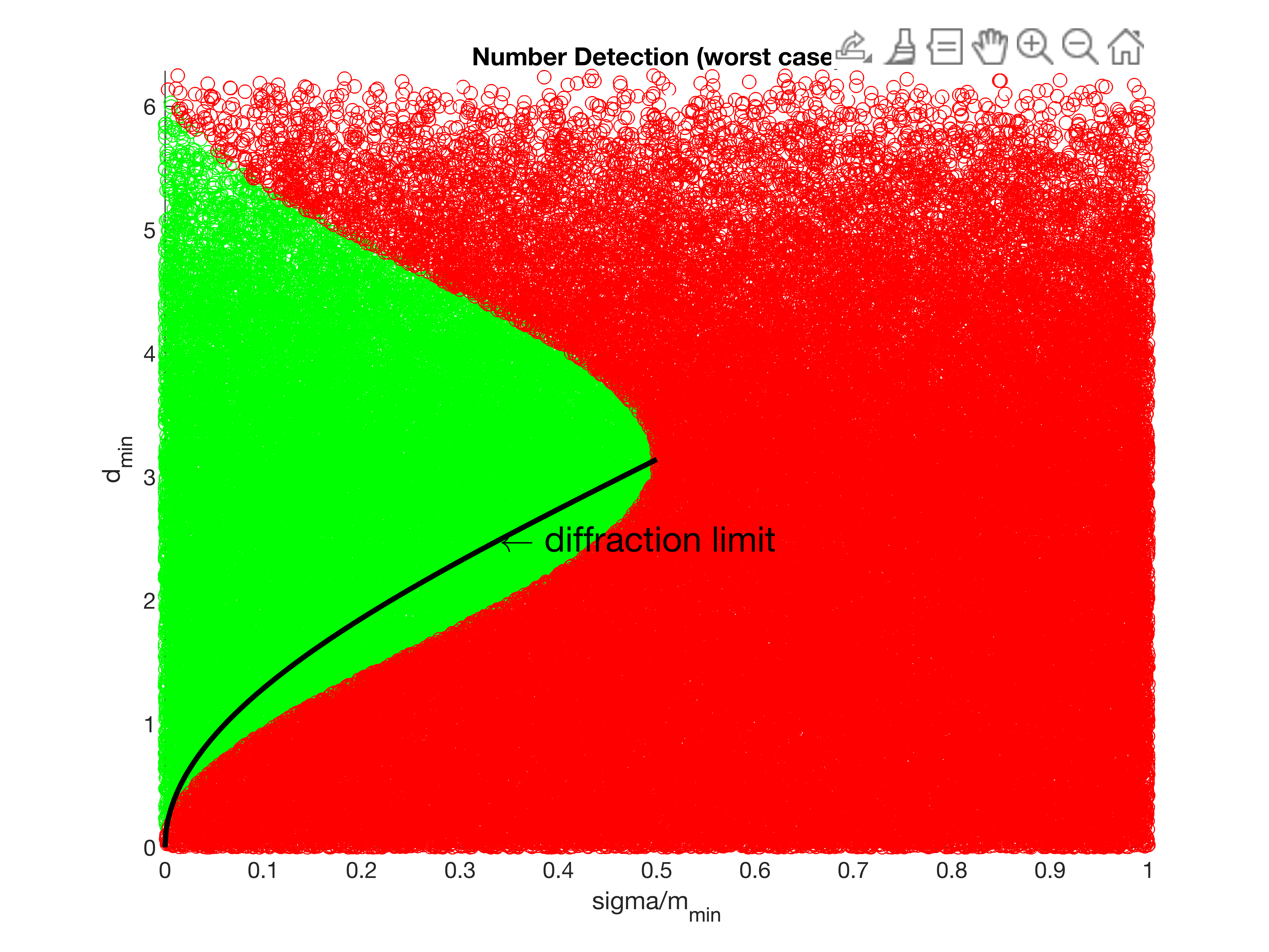}
		\caption{detection results}
	\end{subfigure}
	\begin{subfigure}[b]{0.3\textwidth}
		\centering
		\includegraphics[width=\textwidth]{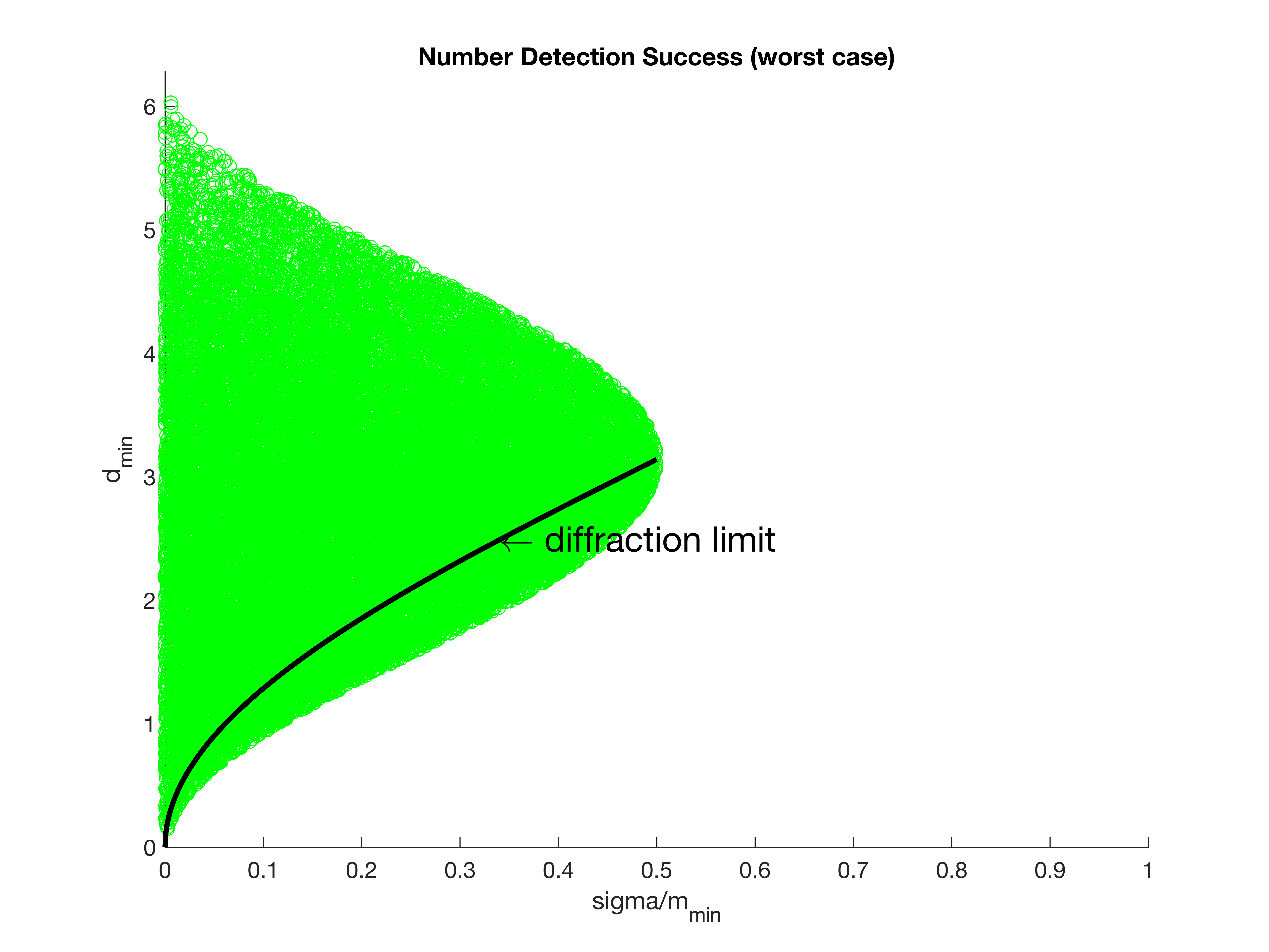}
		\caption{detection success}
	\end{subfigure}
        \begin{subfigure}[b]{0.3\textwidth}
		\centering
		\includegraphics[width=\textwidth]{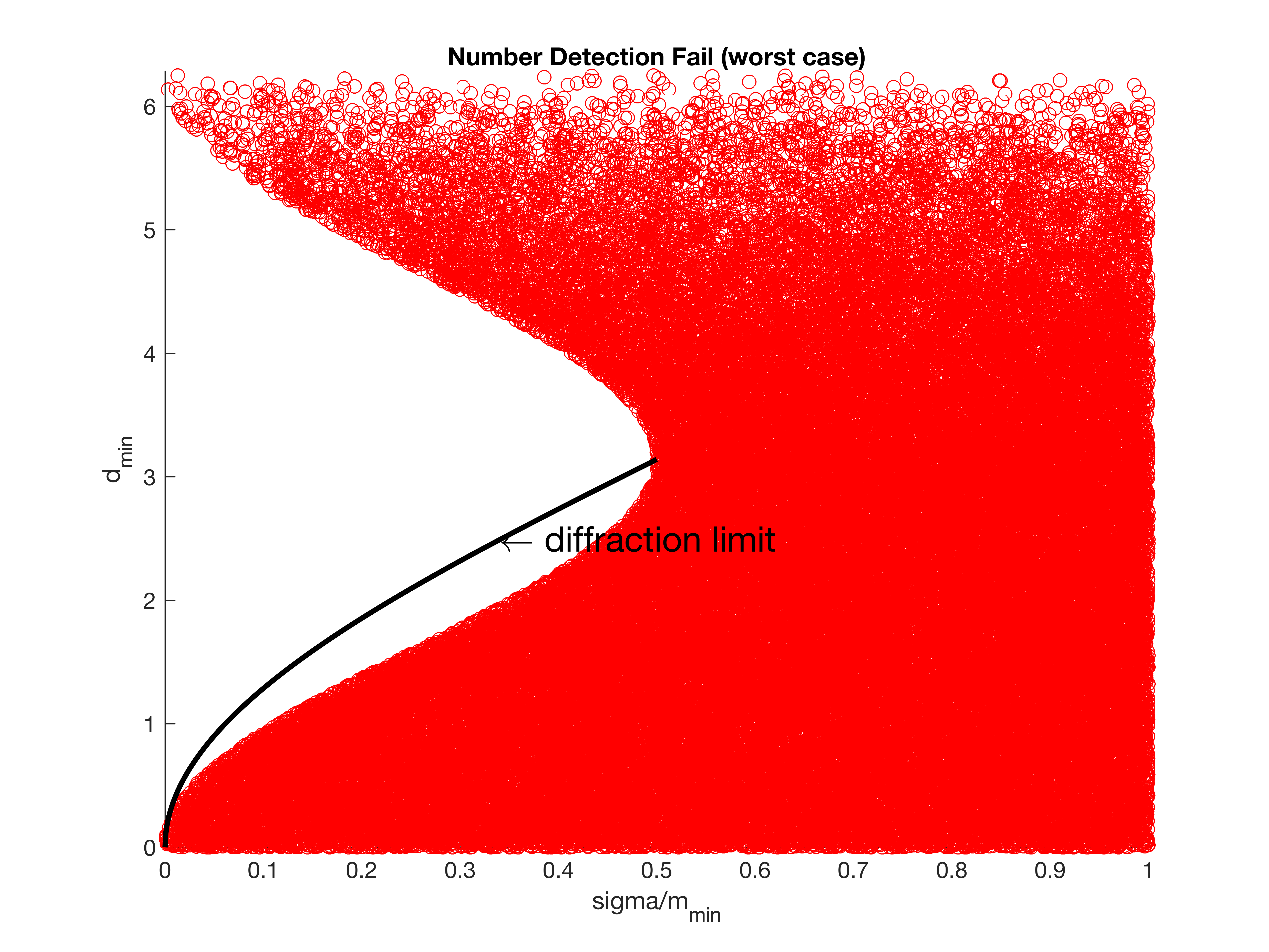}
		\caption{detection fail}
	\end{subfigure}
	\caption{Plots of the successful and the unsuccessful number detections by \textbf{Algorithm \ref{algo:onedsinguvaluenumberalgo}} depending on the relation between $\frac{\sigma}{m_{\min}}$ and $d_{\min}$. The green points and red points represent respectively the cases of successful detection and failed detection. The black line is the two-point resolution limit $\mathcal D_{num}(k, 2)$ derived in Theorem \ref{thm:computatwopointresolution0}.}
	\label{fig:onedtwopointsphasetransition}
\end{figure}

\begin{figure}[!h]
	\centering
	\begin{subfigure}[b]{0.3\textwidth}
		\centering
		\includegraphics[width=\textwidth]{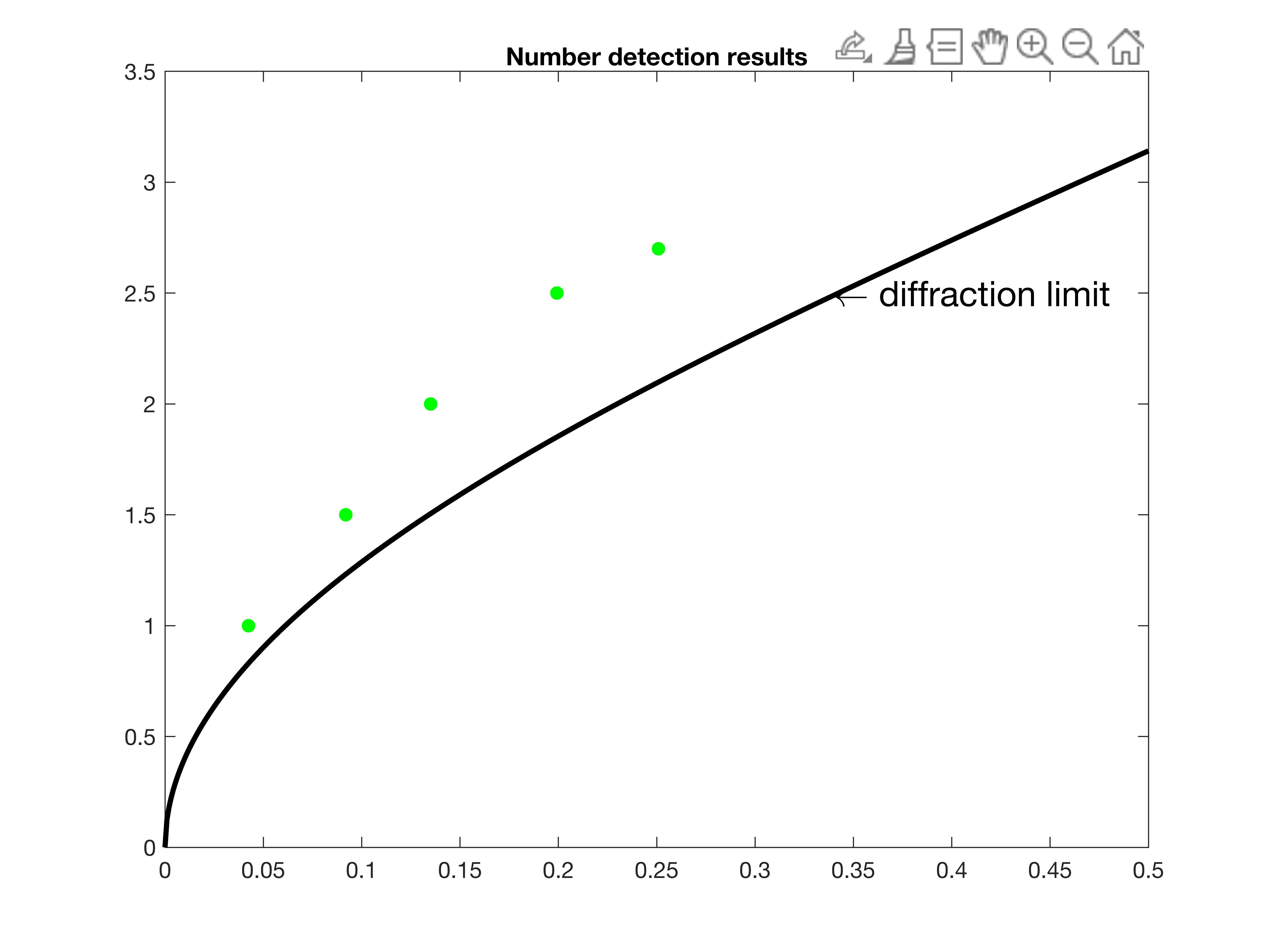}
		\caption{detection results}
	\end{subfigure}
	\begin{subfigure}[b]{0.3\textwidth}
		\centering
		\includegraphics[width=\textwidth]{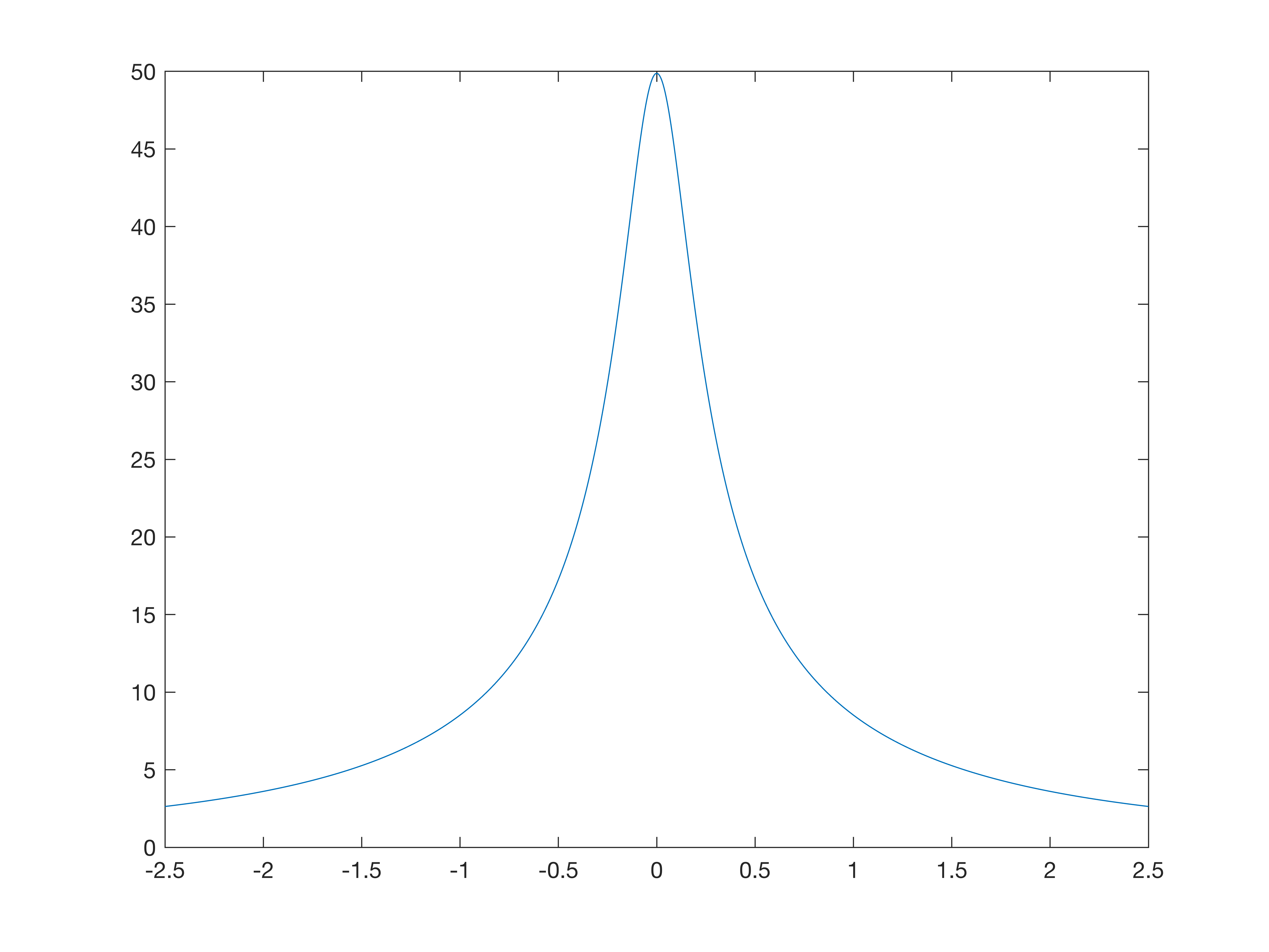}
		\caption{MUSIC image}
	\end{subfigure}
        \begin{subfigure}[b]{0.3\textwidth}
		\centering
		\includegraphics[width=\textwidth]{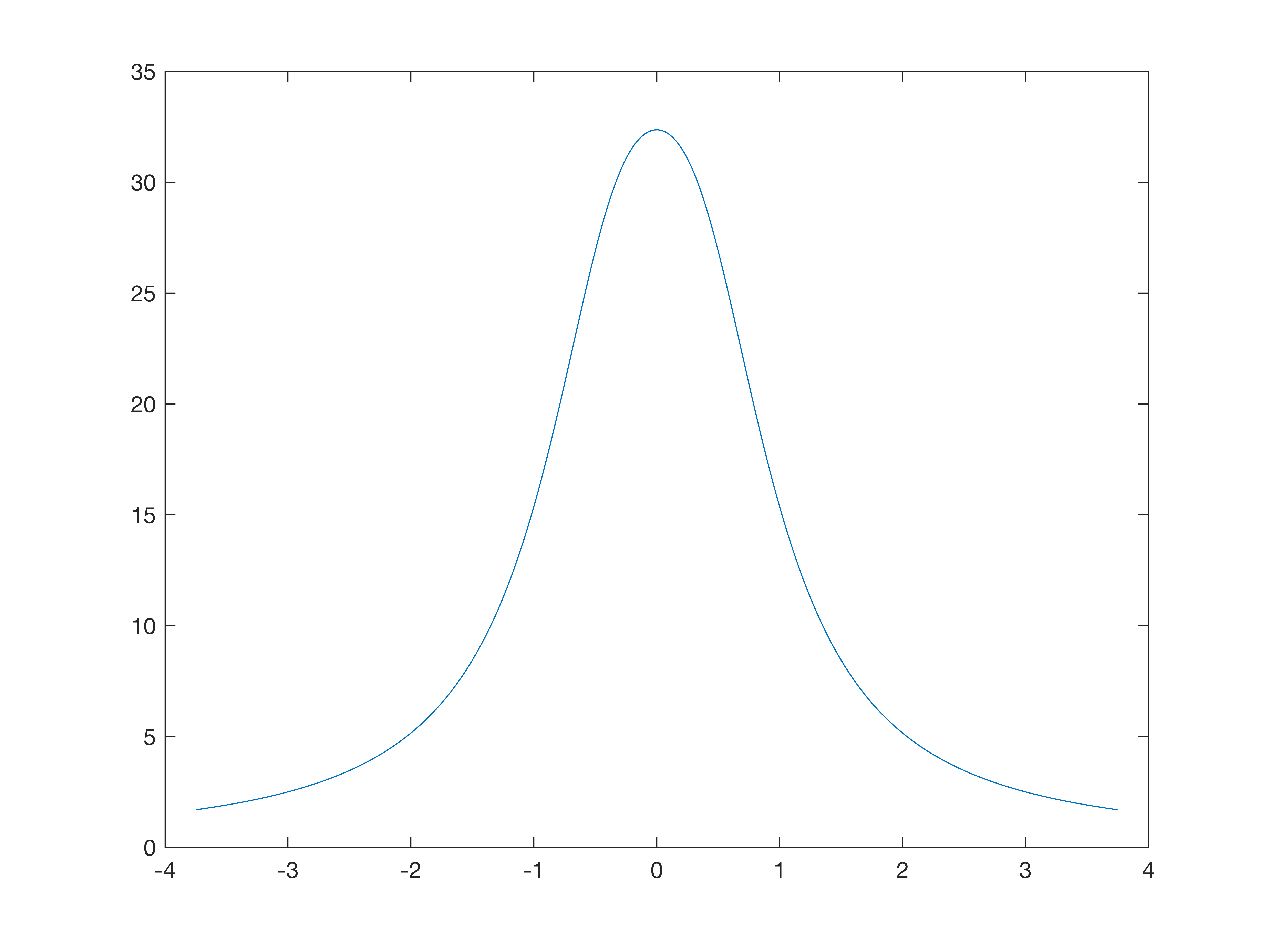}
		\caption{MUSIC image}
	\end{subfigure}
        \begin{subfigure}[b]{0.3\textwidth}
		\centering
		\includegraphics[width=\textwidth]{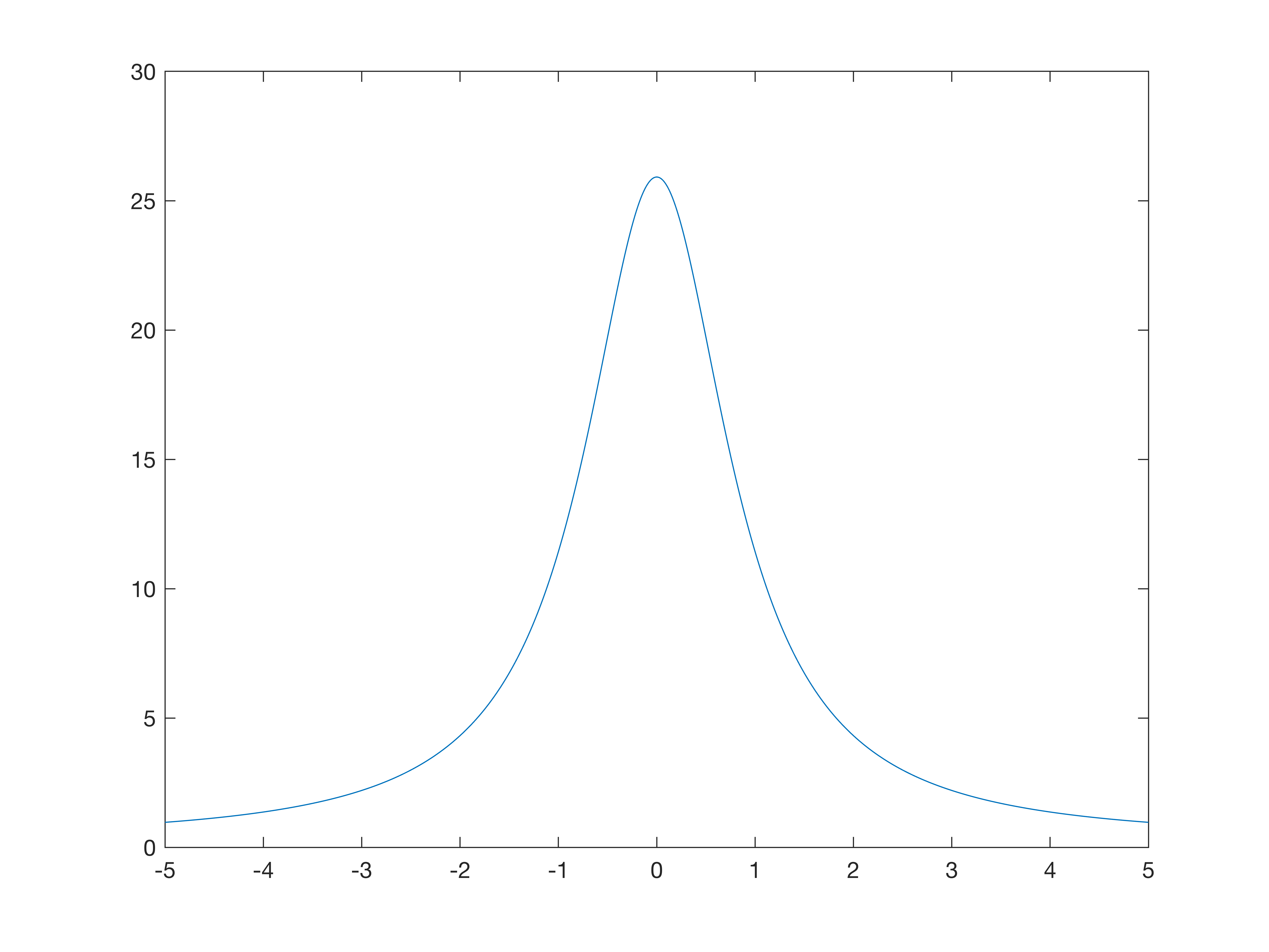}
		\caption{MUSIC image}
	\end{subfigure}
	\begin{subfigure}[b]{0.3\textwidth}
		\centering
		\includegraphics[width=\textwidth]{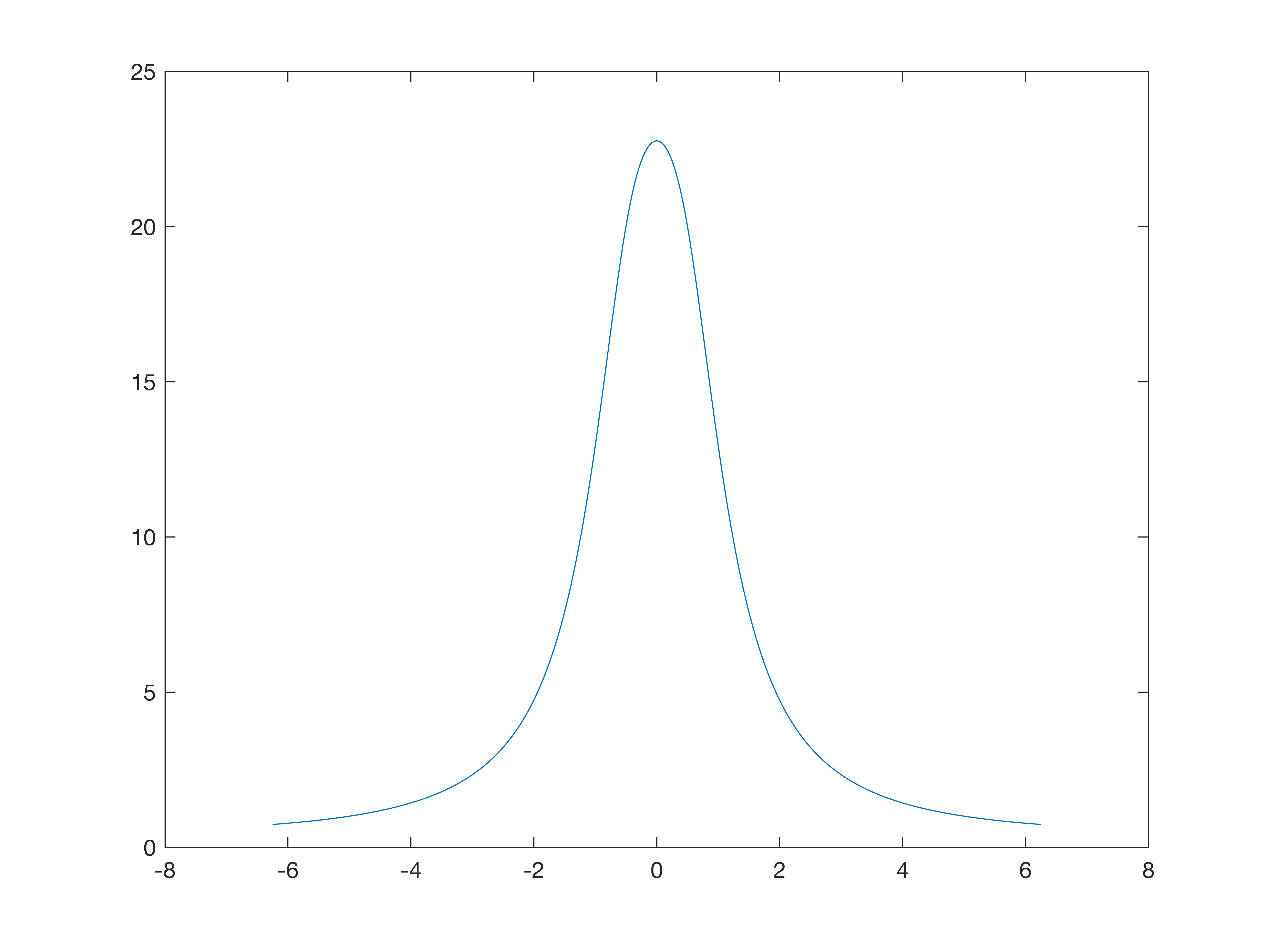}
		\caption{MUSIC image}
	\end{subfigure}
        \begin{subfigure}[b]{0.3\textwidth}
		\centering
		\includegraphics[width=\textwidth]{fig_music_result_1.png}
		\caption{MUSIC image}
	\end{subfigure}
        \caption{Plot (a) is the relation between $\frac{\sigma}{m_{\min}}$ and $d_{\min}$ for several cases. Plots (b)-(f) are MUSIC images of these cases. Note that it is impossible to detect the correct source number from these MUSIC images.}
	\label{fig:numberdetectandmusic}
\end{figure}

\subsection{An optimal algorithm for detecting two sources in multi-dimensional spaces}
For detecting two sources in multi-dimensional spaces, we can first apply \textbf{Algorithm \ref{algo:onedsinguvaluenumberalgo}} to the measurement in several one-dimensional subspaces $V_j$'s and save the outputs, then determine the source number as the maximum value among these outputs. If some of the $V_j$'s are sufficiently close to the space spanned by $\vect y_2-\vect y_1$, it actually achieves similar resolution to the one in Theorem \ref{thm:onednumberdetectalgo1}.  

To be specific, let $\mu=\sum_{j=1}^{2}a_j \delta_{\vect y_j}, \vect y_j\in \mathbb R^k$ and $\vect Y(\vect \omega), \vect \omega\in \mathbb R^k, ||\vect \omega||_2\leqs \Omega$ be the associated measurement in (\ref{equ:highdmodelsetting1}). We choose $N$ unit vectors $\vect v_j$'s in $\mathbb R^k$ and formulate the corresponding Hankel matrices $\vect H_{q}$'s as
\begin{equation}\label{equ:highdhankel1}
\mathbf H_q=\left(\begin{array}{cc}
\mathbf Y(-\Omega\vect v_q)& \mathbf Y(0)\\
\mathbf Y(0)&\mathbf Y(\Omega\vect v_q)
\end{array}\right), \quad q=1, \cdots, N.
\end{equation}
Denoting $\what \sigma_{q,j}$ the $j$-th singular value of $\vect H_{q}$, we can detect the source number by thresholding on $\what \sigma_{q,j}$'s. Moreover, we have the following theorem on the resolution and the threshold. 

\begin{thm}\label{thm:highdnumberdetectalgo1}
Consider $\mu=\sum_{j=1}^{2}a_j \delta_{\vect y_j}, \vect y_j \in B_{\frac{(n-1) \pi}{2\Omega}}^k(\mathbf{0})$ and the measurement $\vect Y$ in (\ref{equ:highdmodelsetting1}) that is generated from $\mu$. If 
\begin{equation}\label{equ:highdnumberalgo-1}
\min_{q=1,\cdots,N}\min\left(\left|\angle(\vect y_1-\vect y_2, \vect v_q)\right|,\ \pi - \left|\angle(\vect y_1-\vect y_2, \vect v_q)\right| \right)=\theta_{\min}
\end{equation}
with $\angle(\cdot, \cdot)$ denoting the angle between vectors, and the following separation condition is satisfied
	\begin{equation}\label{equ:highdnumberalgo0}
	\bnorm{\vect y_1-\vect y_2}_2\geqs \frac{4\arcsin\left(\left(\frac{\sigma}{m_{\min}}\right)^{\frac{1}{2}}\right)}{\Omega\cos{\theta_{\min}}},
	\end{equation}
then we have
	\begin{equation}\label{equ:highdnumberalgo1}
	\max_{q=1, \cdots, N} \what\sigma_{q, 2}> 2\sigma
 \end{equation}
for $\what \sigma_{q,2}$ being the minimum singular value of the Hankel matrix $\vect H_q$ that defined in (\ref{equ:highdhankel1}). On the other hand, if there exists $\what \mu$ consisting of only one source being the a $\sigma$-admissible measure of \ $\vect Y$,  then
 \begin{equation}\label{equ:highdnumberalgo2}
 \what \sigma_{q,2}<2\sigma, \quad q=1, \cdots, N.
 \end{equation}
\end{thm}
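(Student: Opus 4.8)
The plan is to reduce the entire statement to the one-dimensional result of Theorem \ref{thm:onednumberdetectalgo1} by restricting the $k$-dimensional measurement to each of the lines $\mathbb{R}\vect v_q$. Concretely, for each unit vector $\vect v_q$ I would introduce the restricted one-dimensional measurement $\tilde{\vect Y}_q(\omega) := \vect Y(\omega \vect v_q)$ for $\omega \in [-\Omega,\Omega]$. Since $\vect y_j\cdot(\omega\vect v_q) = (\vect y_j\cdot\vect v_q)\,\omega$, model (\ref{equ:twopointmodelsetting1}) gives
\[
\tilde{\vect Y}_q(\omega) = \sum_{j=1}^{2} a_j\, e^{i(\vect y_j\cdot\vect v_q)\omega} + \vect W(\omega\vect v_q),
\]
which is exactly the one-dimensional model of Theorem \ref{thm:onednumberdetectalgo1} with projected locations $y_j^{(q)} := \vect y_j\cdot\vect v_q$, unchanged amplitudes $a_j$ (so $m_{\min}$ is preserved), and noise $\tilde{\vect W}_q(\omega) := \vect W(\omega\vect v_q)$ satisfying $|\tilde{\vect W}_q(\omega)| < \sigma$, because $\bnorm{\omega\vect v_q}_2 = |\omega|\leq\Omega$. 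The Hankel matrix $\vect H_q$ in (\ref{equ:highdhankel1}) is then precisely the matrix (\ref{equ:hankelmatrix1}) assembled from $\tilde{\vect Y}_q$, so its singular values $\hat\sigma_{q,1},\hat\sigma_{q,2}$ are exactly the quantities governed by Theorem \ref{thm:onednumberdetectalgo1}.

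The key computation is the effect of projection on the separation distance. For each $q$ the projected sources are separated by
\[
\babs{y_1^{(q)} - y_2^{(q)}} = \babs{(\vect y_1 - \vect y_2)\cdot\vect v_q} = \bnorm{\vect y_1 - \vect y_2}_2\,\babs{\cos\angle(\vect y_1 - \vect y_2,\vect v_q)}.
\]
Because $\babs{\cos\theta} = \cos\big(\min(\theta,\pi-\theta)\big)$ for $\theta\in[0,\pi]$, and $\cos$ is decreasing on $[0,\pi/2]$, the index $q^*$ attaining the minimal angle $\theta_{\min}$ in (\ref{equ:highdnumberalgo-1}) is precisely the one maximizing the projected separation, with $\babs{y_1^{(q^*)} - y_2^{(q^*)}} = \bnorm{\vect y_1 - \vect y_2}_2\cos\theta_{\min}$. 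Multiplying the hypothesis (\ref{equ:highdnumberalgo0}) through by $\cos\theta_{\min}$ therefore yields, along $\vect v_{q^*}$,
\[
\babs{y_1^{(q^*)} - y_2^{(q^*)}} = \bnorm{\vect y_1 - \vect y_2}_2\cos\theta_{\min} \geq \frac{4\arcsin\left(\left(\frac{\sigma}{m_{\min}}\right)^{\frac{1}{2}}\right)}{\Omega},
\]
which is exactly the one-dimensional separation condition (\ref{equ:numberalgo0}) for the projected problem.

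I would then invoke Theorem \ref{thm:onednumberdetectalgo1} in both directions. For (\ref{equ:highdnumberalgo1}), applying its first conclusion (\ref{equ:numberalgo1}) to the projected data along $\vect v_{q^*}$ gives $\hat\sigma_{q^*,2} > 2\sigma$, whence $\max_{q}\hat\sigma_{q,2}\geq\hat\sigma_{q^*,2} > 2\sigma$. For (\ref{equ:highdnumberalgo2}), if $\hat\mu = \hat a\,\delta_{\hat{\vect y}}$ is a single-source $\sigma$-admissible measure in $\mathbb R^k$, its restriction to the line $\mathbb{R}\vect v_q$ is the single-source measure $\hat a\,\delta_{\hat{\vect y}\cdot\vect v_q}$, whose Fourier data equals $\mathcal F[\hat\mu](\omega\vect v_q)$ and hence stays within $\sigma$ of $\tilde{\vect Y}_q$ (again because $\bnorm{\omega\vect v_q}_2\leq\Omega$); the second conclusion (\ref{equ:numberalgo2}) then yields $\hat\sigma_{q,2} < 2\sigma$ for every $q$, as claimed.

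The argument is essentially bookkeeping once the restriction idea is in place; the only substantive point, which I would state carefully, is the angle–cosine identity that channels the $k$-dimensional separation through the best-aligned direction into the \emph{sharp} one-dimensional threshold, together with the verification that both the noise bound $|\vect W| < \sigma$ and the $\sigma$-admissibility are inherited by the restriction. I do not anticipate a genuine analytic obstacle beyond confirming that $\theta_{\min}\in[0,\pi/2]$, so that $\cos\theta_{\min}\geq 0$ and the minimizing direction truly maximizes the projected gap; the entire content is the reduction, and the sharpness of the multi-dimensional threshold is simply inherited from the sharpness already established in Theorem \ref{thm:onednumberdetectalgo1}.
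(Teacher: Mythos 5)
Your proposal is correct and follows essentially the same route as the paper: the paper likewise picks the best-aligned direction $\vect v_{q^*}$, notes that $\babs{(\vect y_1-\vect y_2)\cdot\vect v_{q^*}}=\cos\theta_{\min}\bnorm{\vect y_1-\vect y_2}_2$ meets the one-dimensional threshold, and then invokes the argument of Theorem \ref{thm:onednumberdetectalgo1} for both directions of the claim. Your write-up is in fact more explicit than the paper's sketch about why the restricted noise bound and the $\sigma$-admissibility are inherited along each line, but the underlying idea is identical.
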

\begin{proof}
By (\ref{equ:highdnumberalgo-1}), there exists $\vect v_{q^*}$ such that 
\[
\babs{\vect y_1\cdot \vect v_{q^*}- \vect y_2\cdot \vect v_{q^*}}= \cos\theta_{\min}\bnorm{\vect y_1-\vect y_2}_2\geqs \frac{4\arcsin\left(\left(\frac{\sigma}{m_{\min}}\right)^{\frac{1}{2}}\right)}{\Omega}.
\]
Hence, similar to the proof of Theorem \ref{thm:onednumberdetectalgo1}, we can show that $\what \sigma_{q^*, 2}> 2\sigma$. This proves (\ref{equ:highdnumberalgo0}). Also, we can show (\ref{equ:highdnumberalgo2}) in the same way as the one in the proof of Theorem \ref{thm:onednumberdetectalgo1}. 
\end{proof}

We summarize the algorithm as the following \textbf{Algorithm \ref{algo:highdsinguvaluenumberalgo}}. 

\begin{algorithm}[H]
	\caption{\textbf{Multi-dimensional singular-value-thresholding number detection algorithm}}	
	\textbf{Input:} Noise level $\sigma$, measurement: $\mathbf{Y}(\vect \omega), \vect \omega \in \mathbb R^k, ||\vect \omega||_2\leqs \Omega$;\\
	\textbf{Input:} $N$ unit vectors $\vect v_q$'s;\\
	\For{$q=1, \cdots, N$}{
            Formulate the Hankel matrix:
            \[
            \mathbf H_q=\left(\begin{array}{cc}
            \mathbf Y(-\Omega\vect v_q)& \mathbf Y(0)\\
            \mathbf Y(0)&\mathbf Y(\Omega\vect v_q)
            \end{array}\right).
            \]
        Compute the singular value of $\mathbf H$ as $\what \sigma_{1}, \what \sigma_{2}$ distributed in a decreasing manner;\\
      \If{$\what \sigma_2\geqs 2\sigma$} 
       {\textbf{Return} $n=2$.
	}}
	{
	\textbf{Return} $n=1$.
	}
         \label{algo:highdsinguvaluenumberalgo}
\end{algorithm}

\noindent \textbf{Numerical experiments:}\\
We consider detecting two sources in two-dimensional spaces. For large enough $N$, we consider
\begin{equation}\label{equ:number2dvectors1}
\vect v_q=\left(\cos\left(\frac{q\pi}{N}\right),\ \sin\left(\frac{q\pi}{N}\right)\right)^{\top}\in \mathbb R^2, \quad  q=1,\cdots,N.
\end{equation}
Input $\vect v_q$'s to \textbf{Algorithm \ref{algo:highdsinguvaluenumberalgo}}, we then determine the source number by \textbf{Algorithm \ref{algo:highdsinguvaluenumberalgo}} from measurements $\vect Y(\vect \omega)$. By Theorem \ref{thm:highdnumberdetectalgo1}, we can determine the correct number when 
\[
\bnorm{\vect y_1-\vect y_2}_2\geqs \frac{4\arcsin\left(\left(\frac{\sigma}{m_{\min}}\right)^{\frac{1}{2}}\right)}{\Omega\cos\left({\frac{\pi}{2N}}\right)}. 
\]
This indicates that we already have an excellent resolution by leveraging only a few $\vect v_q$'s. We use $N=10$ unit vectors in the experiments and conduct $100000$ random experiments for both the general and worst cases. As shown in Figure \ref{fig:twodnumberphasetransition} (a) and (c), our algorithm successfully detects the source number when $d_{\min}$ is above nearly the diffraction limit and fails to detect the source number on some cases when $d_{\min}$ is below the computational resolution limit $\mathcal D_{num}(k,2)$. A very interesting phenomenon is that, as shown in Figure \ref{fig:twodnumberphasetransition} (b), there are many cases in which our algorithm detects the correct source number even when $d_{\min}$ is much lower than the diffraction limit. This indicates that the tolerance of the noise of the algorithm is in fact excellent. The reason is that the worst cases or nearly worst cases actually only happen when the noise satisfies certain patterns. Because we use the measurements in $N$ one-dimensional subspaces, it becomes more difficult for the noises in all the subspaces to satisfy these patterns. Thus the noise tolerance becomes better in the two-dimensional case. 

Note that our theoretical results and algorithms are potentially of great importance in practical applications. We will examine the super-resolving ability of our algorithm in practical examples in future work.

\begin{figure}[!h]
	\centering
	\begin{subfigure}[b]{0.3\textwidth}
		\centering
		\includegraphics[width=\textwidth]{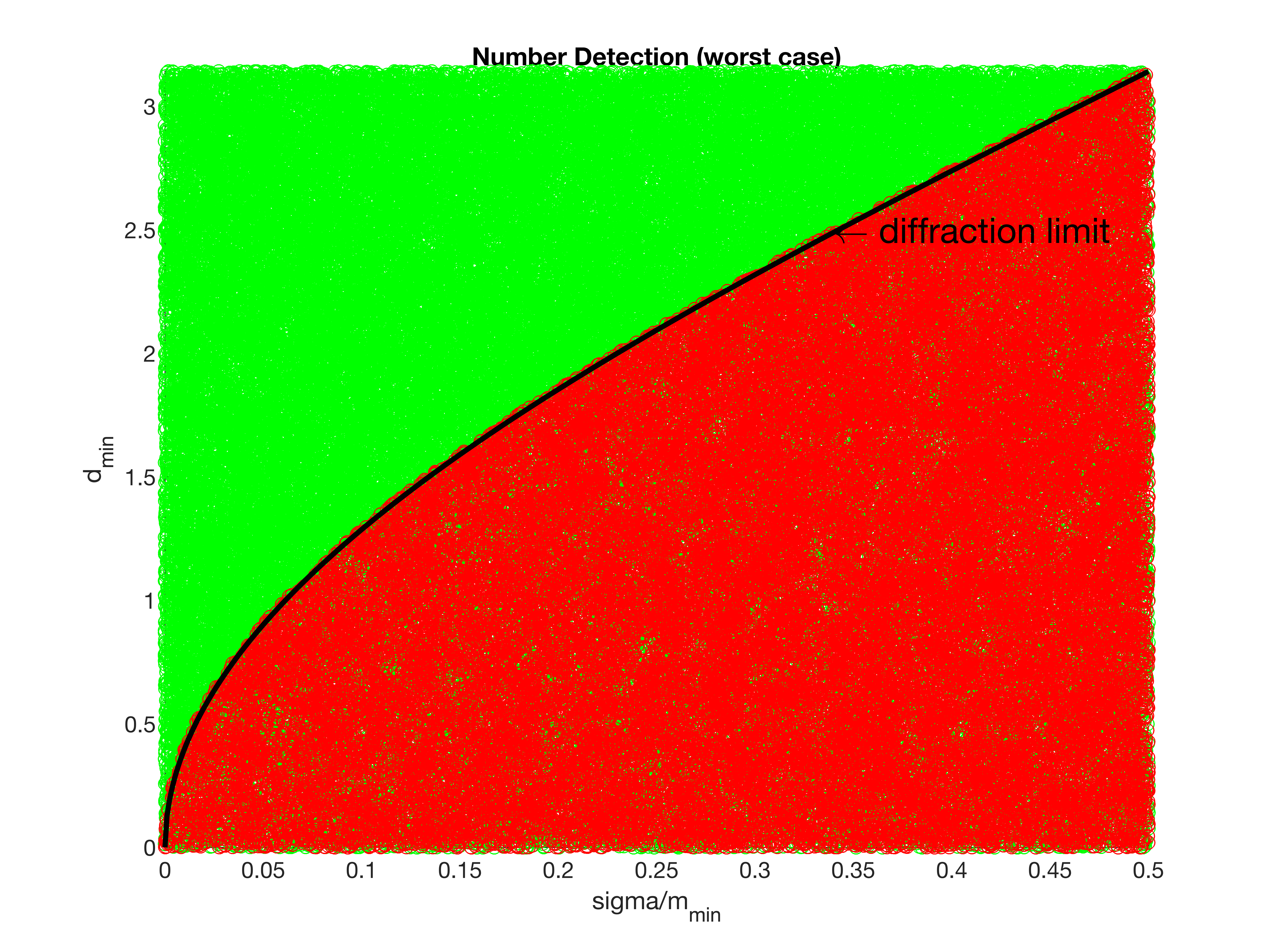}
		\caption{detection results}
	\end{subfigure}
	\begin{subfigure}[b]{0.3\textwidth}
		\centering
		\includegraphics[width=\textwidth]{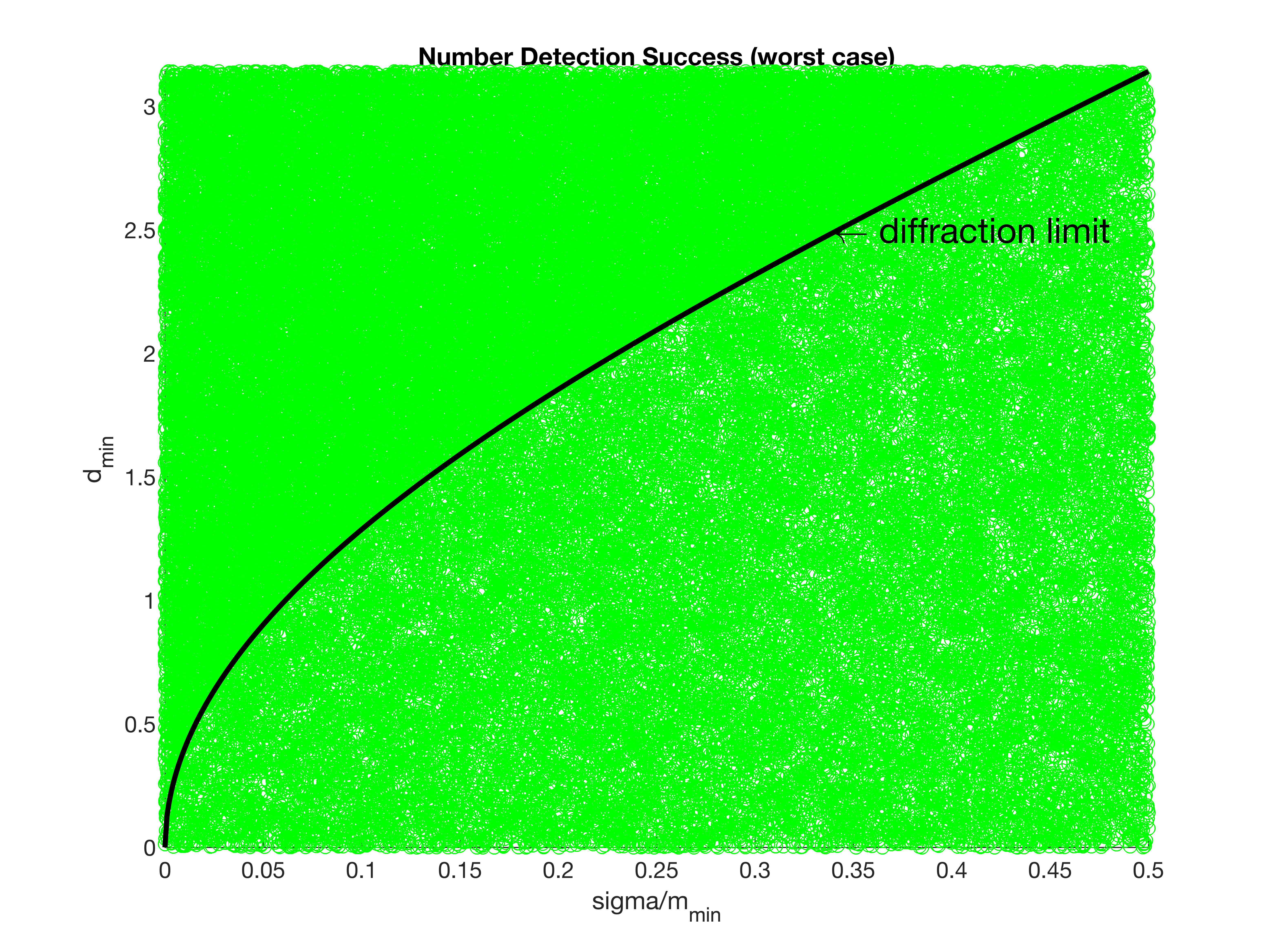}
		\caption{detection success}
	\end{subfigure}
        \begin{subfigure}[b]{0.3\textwidth}
		\centering
		\includegraphics[width=\textwidth]{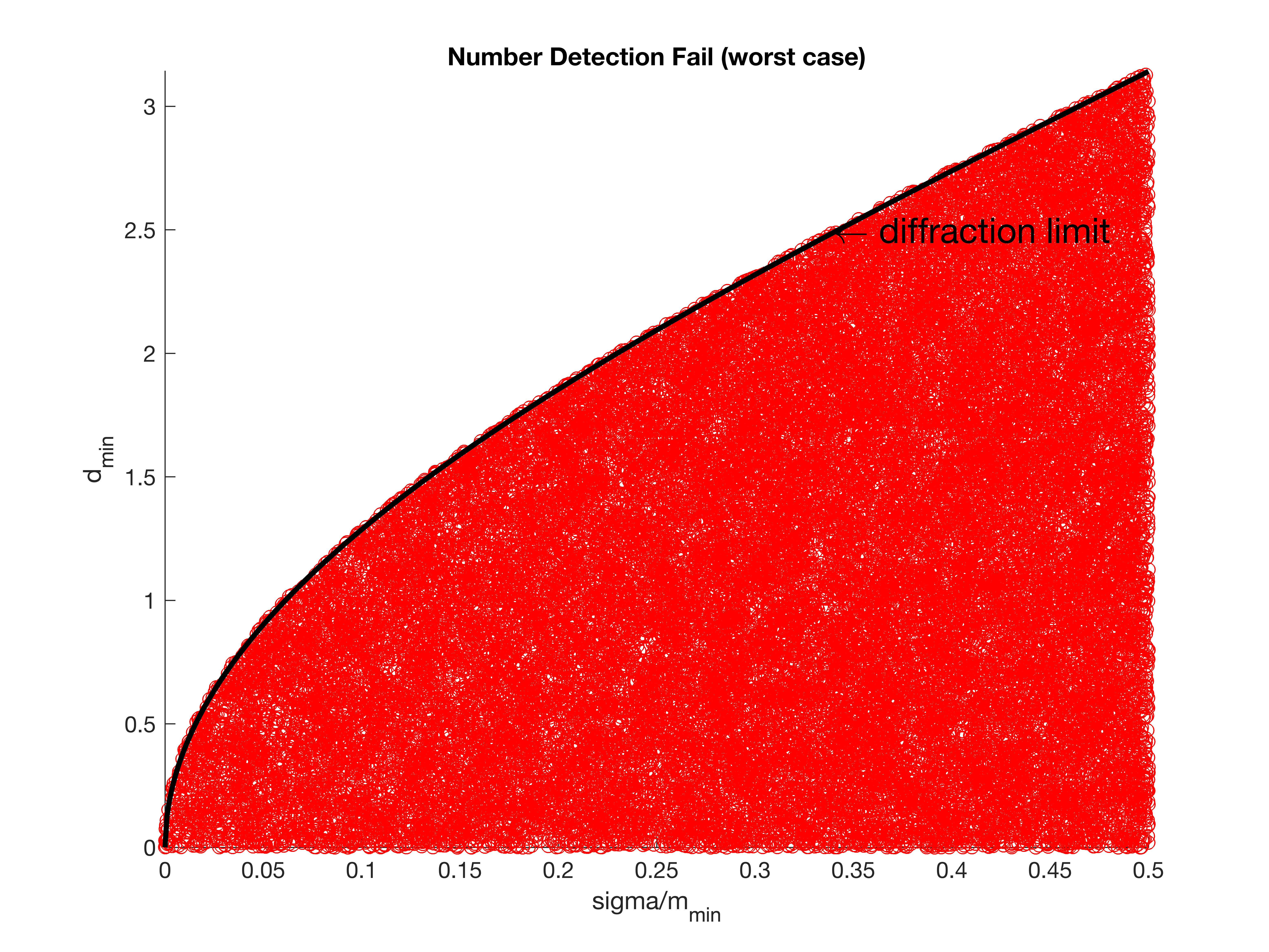}
		\caption{detection fail}
	\end{subfigure}
 \caption{
 Plots of the successful and the unsuccessful number detections by \textbf{Algorithm \ref{algo:highdsinguvaluenumberalgo}} depending on the relation between $\frac{\sigma}{m_{\min}}$ and $d_{\min}$. The green points and red points represent respectively the cases of successful detection and failed detection. The black line is the diffraction limit derived in Theorem \ref{thm:twopointresolution0}.
 }
 \label{fig:twodnumberphasetransition}
\end{figure}

\appendix

\section{Some inequalities}
In this Appendix, we present some inequalities that are used in this paper. We first recall the following Stirling approximation of factorial
\begin{equation}\label{stirlingformula}
\sqrt{2\pi} n^{n+\frac{1}{2}}e^{-n}\leqs n! \leqs e n^{n+\frac{1}{2}}e^{-n},
\end{equation}
which will be used frequently in the subsequent derivations.
\begin{lem}\label{lem:upperboundnumbercalculate1}
	Let $\zeta(n)$ and $\xi(n-1)$ be defined as in (\ref{zetaxiformula1}).		For $n\geqs 2$, we have 
	\[
	\Big(\frac{2}{\zeta(n)\xi(n-1)}\Big)^{\frac{1}{2n-2}}\leqs \frac{ 2e}{n-1}. 
	\]
\end{lem}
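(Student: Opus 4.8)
The plan is to reduce the claimed inequality to a lower bound on the product $\zeta(n)\xi(n-1)$ and then estimate that product with the Stirling bound (\ref{stirlingformula}). Since $x \mapsto x^{1/(2n-2)}$ is increasing on $(0,\infty)$, the assertion $\left(\frac{2}{\zeta(n)\xi(n-1)}\right)^{1/(2n-2)} \leq \frac{2e}{n-1}$ is equivalent to $\zeta(n)\xi(n-1) \geq 2\left(\frac{n-1}{2e}\right)^{2n-2}$, and I would prove the latter form.

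First I would read off from the definitions (\ref{zetaxiformula1}) that, up to the explicit factor $\frac14$, each of $\zeta(n)$ and $\xi(n-1)$ is a product of two factorials whose arguments lie near $\frac{n-1}{2}$, so that $\zeta(n)\xi(n-1)$ is essentially a product of four such factorials. Concretely, I would split on the parity of $n$. When $n$ is odd, setting $m = \frac{n-1}{2}$ gives $\zeta(n)\xi(n-1) = \frac{(m!)^2((m-1)!)^2}{4} = \frac{(m!)^4}{4m^2}$, and the target reduces to $2(m/e)^{4m}$. When $n$ is even and $n \geq 4$, setting $m = \frac n2$ gives $\zeta(n)\xi(n-1) = \frac{m!\,((m-1)!)^2(m-2)!}{4} = \frac{(m!)^4}{4m^3(m-1)}$, with target $2\left(\frac{2m-1}{2e}\right)^{4m-2}$. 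The small cases $n=2,3$ I would dispatch by direct computation.

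The core is then a single use of the Stirling lower bound, which yields $(m!)^4 \geq (2\pi)^2 m^{4m+2}e^{-4m}$. In the odd case this collapses the inequality to $4\pi^2 \geq 8$, i.e. $\pi^2 \geq 2$, which holds. In the even case, after inserting the Stirling bound and cancelling the powers of $e$, the comparison is between an $m^{4m-1}$-type term and $(m-\tfrac12)^{4m-2}$; I would absorb the half-integer base with the elementary estimate $(1-\tfrac{1}{2m})^{4m-2} \leq e^{-2+1/m}$ coming from $1-x \leq e^{-x}$, which reduces everything to $\frac{\pi^2 m}{m-1} \geq 2e^{1/m}$, valid for all $m \geq 2$ because the left side is at least $\pi^2$ and the right side at most $2e^{1/2}$.

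The main obstacle is precisely the even case: since $n-1 = 2m-1$ is odd there, the factor $\left(\frac{n-1}{2e}\right)^{2n-2}$ has the half-integer base $m-\tfrac12$, which does not align with the integer-argument factorials produced by $\zeta$ and $\xi$. The device that removes this mismatch is the logarithmic estimate on $(1-\tfrac{1}{2m})^{4m-2}$, which trades the awkward base for $m^{4m-2}$ up to a harmless constant; everything downstream is then a routine numerical comparison against $\pi^2$.
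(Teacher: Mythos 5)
Your proof is correct and follows essentially the same route as the paper: dispatch the small cases ($n=2,3$; the paper also does $n=4$) by hand, split on the parity of $n$, and apply the Stirling lower bound (\ref{stirlingformula}) to the factorials in $\zeta(n)\xi(n-1)$, reducing everything to a numerical comparison against $\pi^2$. The only difference is bookkeeping — you consolidate the four factorials into $(m!)^4$ and absorb the half-integer base $\left(m-\tfrac{1}{2}\right)^{4m-2}$ via $1-x\leq e^{-x}$, whereas the paper keeps the factorials separate and bounds the resulting rational factor directly — and your version is, if anything, slightly cleaner in the odd case, where the paper's displayed intermediate constant does not quite match its own algebra even though its final conclusion stands.
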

\begin{proof} For $n=2,3,4$, it is easy to check that the above inequality holds. Using (\ref{stirlingformula}), we have for odd $n\geqs 5$,
\begin{align*}
&\zeta(n)\xi(n-1)= (\frac{n-1}{2}!)^2\frac{(\frac{n-3}{2}!)^2}{4}\geqs \pi^2(\frac{n-1}{2})^n(\frac{n-3}{2})^{n-2}e^{-(2n-4)}\\
= & (n-1)^{n-2}\frac{\pi^2(\frac{n-1}{2})^n(\frac{n-3}{2})^{n-2}e^{-(2n-4)}}{(n-1)^{n-2}}\\
=&  \pi^2e^2\left(\frac{n-1}{2e}\right)^{2n-2} \frac{(n-3)^{n-2}}{(n-2)^{n-2}}\\
\geqs& 0.29 \pi^2e^2\left(\frac{n-1}{2e}\right)^{2n-2} \quad \Big(\text{since $n\geqs 5$}\Big)
\end{align*}
and for even $n\geqs 6$,
\begin{align*}
&\zeta(n)\xi(n-1)= (\frac{n}{2})!(\frac{n-2}{2})!\frac{(\frac{n-2}{2})!(\frac{n-4}{2})!}{4}\\
\geqs& \pi^2(\frac{n}{2})^{\frac{n+1}{2}}(\frac{n-2}{2})^{n-1}(\frac{n-4}{2})^{\frac{n-3}{2}}e^{-(2n-4)}\\
= & (n-1)^{2n-2}\frac{\pi^2(\frac{n}{2})^{\frac{n+1}{2}}(\frac{n-2}{2})^{n-1}(\frac{n-4}{2})^{\frac{n-3}{2}}e^{-(2n-4)}}{(n-1)^{2n-2}}\\
=&\pi^2 e^2 \left(\frac{n-1}{2e}\right)^{2n-2}\frac{n^{\frac{n+1}{2}}(n-2)^{n-1}(n-4)^{\frac{n-3}{2}}}{(n-1)^{2n-2}}\\
>&\pi^2 \left(\frac{n-1}{2e}\right)^{2n-2}.   
\end{align*}
Therefore, for all $n\geqs 2$, 
\[\Big(\frac{2}{\zeta(n)\xi(n-1)}\Big)^{\frac{1}{2n-2}}\leqs \frac{2e}{n-1} \Big(\frac{2}{\pi^2}\Big)^{\frac{1}{2n-2}}\leqs  \frac{2e}{n-1}.\]
\end{proof}

\begin{lem}\label{lem:upperboundsupportcalculate1}
	Let $\zeta(n)$ and $\lambda(n)$ be defined as in (\ref{zetaxiformula1}) and (\ref{equ:lambda1}), respectively. For $n\geqs 2$, we have 
	\[
	\Big(\frac{8}{\zeta(n)\lambda(n)}\Big)^{\frac{1}{2n-1}}\leqs \frac{2.36e}{n-\frac{1}{2}}.
	\]
\end{lem}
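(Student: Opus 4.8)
The plan is to follow exactly the template of Lemma \ref{lem:upperboundnumbercalculate1}, first rewriting the claim in the equivalent form
\[
\zeta(n)\lambda(n)\geq 8\Big(\frac{n-\tfrac12}{2.36e}\Big)^{2n-1},
\]
and then proving this by a direct check for small $n$ together with a Stirling-based lower bound for large $n$. First I would dispose of the cases $n=2,3,4$ (and, if convenient, $n=5,6$) by evaluating $\zeta(n)$ and $\lambda(n)$ explicitly from (\ref{zetaxiformula1}) and (\ref{equ:lambda1}); for instance $\zeta(2)\lambda(2)=1$, $\zeta(3)\lambda(3)=\tfrac12$ and $\zeta(4)\lambda(4)=\tfrac12$, and in each case the asserted inequality is verified by a single numerical comparison.

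For the generic range I would split into $n$ odd and $n$ even, using $\lambda(n)=\xi(n-2)$. In the odd case $\zeta(n)\lambda(n)=(\tfrac{n-1}{2}!)^2\tfrac{(\tfrac{n-3}{2}!)(\tfrac{n-5}{2}!)}{4}$, and in the even case $\zeta(n)\lambda(n)=(\tfrac n2)!(\tfrac{n-2}{2})!\tfrac{(\tfrac{n-4}{2}!)^2}{4}$. Applying the lower Stirling bound in (\ref{stirlingformula}) to each of the four factorials, I would obtain, in the odd case,
\[
\zeta(n)\lambda(n)\geq \pi^2\Big(\tfrac{n-1}{2}\Big)^{n}\Big(\tfrac{n-3}{2}\Big)^{\frac{n-2}{2}}\Big(\tfrac{n-5}{2}\Big)^{\frac{n-4}{2}}e^{-(2n-5)},
\]
and an entirely analogous estimate with bases $\tfrac n2,\tfrac{n-2}{2},\tfrac{n-4}{2}$ in the even case. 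In both cases the total exponent of the $\tfrac{\cdot}{2}$-factors is $2n-3$, so, exactly as in Lemma \ref{lem:upperboundnumbercalculate1}, I would multiply and divide by a suitable power of $(n-\tfrac12)$ to promote the exponent to $2n-1$ and rewrite the product as $\big(\tfrac{n-1/2}{2e}\big)^{2n-1}$ times a residual ratio built from factors of the shape $\big(\tfrac{n-c}{\,n-1/2\,}\big)^{m}$.

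The crux is then to show that this residual ratio, together with the surviving constants $\pi^2$ and the remaining powers of $e$, is bounded below by a quantity that makes the whole product at least $8\big(\tfrac{n-1/2}{2.36e}\big)^{2n-1}$. Each factor $\big(\tfrac{n-c}{\,n-1/2\,}\big)^{m}=\big(1-\tfrac{c-1/2}{\,n-1/2\,}\big)^{m}$ converges monotonically to $e^{-(c-1/2)}$, hence is bounded below by an explicit absolute constant for $n\geq 5$; collecting these constants reduces the statement to a finite numerical inequality, in the same spirit as the step ``$\geq 0.29\,\pi^2e^2$'' in the proof of Lemma \ref{lem:upperboundnumbercalculate1}. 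The main obstacle I anticipate is purely bookkeeping: keeping the exponents of the three distinct bases aligned when regrouping, and verifying that the constant $2.36$ is large enough to absorb both the residual correction and the extra factor $8$ uniformly in $n$. The value $2.36$ is calibrated precisely so that the worst of the small cases and the limiting constant (essentially $2e$ times the accumulated correction) are simultaneously covered, which is why a slightly larger multiplier than the $2e$ appearing in Lemma \ref{lem:upperboundnumbercalculate1} is needed here.
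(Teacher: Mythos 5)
Your proposal is correct and follows essentially the same route as the paper's proof: direct verification for the small cases, a parity split with the lower Stirling bound applied to the factorials in $\zeta(n)\lambda(n)=\zeta(n)\xi(n-2)$, regrouping the resulting product of exponent $2n-3$ into $\bigl(\tfrac{n-1/2}{2e}\bigr)^{2n-1}$ times a residual ratio, and bounding that residual by an explicit constant (the paper gets $\tfrac{4\pi^2}{(n-1/2)^2}$) before taking the $(2n-1)$-th root to absorb everything into the factor $2.36$. The only slip is cosmetic: the limits you quote for the residual factors $\bigl(\tfrac{n-c}{n-1/2}\bigr)^m$ are not all $e^{-(c-1/2)}$ since the exponents $m$ scale like $n/2$ for some factors, but your operative claim (each factor is bounded below by an explicit constant for $n\geq 5$) is what the argument actually needs and is true.
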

\begin{proof} For $n=2,3,4,5$, the inequality follows from direct calculation. By the Stirling approximation (\ref{stirlingformula}), we have for even $n\geqs 6$,
\begin{align*}
&\zeta(n)\lambda(n)=\zeta(n)\xi(n-2)=(\frac{n}{2})!(\frac{n-2}{2})!\frac{(\frac{n-4}{2}!)^2}{4}\\
\geqs&\pi^2(\frac{n}{2})^{\frac{n+1}{2}}(\frac{n-2}{2})^{\frac{n-1}{2}}(\frac{n-4}{2})^{n-3}e^{-(2n-5)}\\
= & (n-\frac{1}{2})^{2n-1}\frac{\pi^2(\frac{n}{2})^{\frac{n+1}{2}}(\frac{n-2}{2})^{\frac{n-1}{2}}(\frac{n-4}{2})^{n-3}e^{-(2n-5)}}{(n-\frac{1}{2})^{2n-1}}\\
=&(\frac{n-\frac{1}{2}}{2e})^{2n-1}\frac{\pi^2e^42^2}{(n-\frac{1}{2})^2}  \frac{n^{\frac{n+1}{2}}(n-2)^{\frac{n-1}{2}}(n-4)^{n-3}}{(n-\frac{1}{2})^{2n-3}}\\
\geqs& (\frac{n-\frac{1}{2}}{2e})^{2n-1}\frac{4\pi^2}{(n-\frac{1}{2})^2},
\end{align*}
and for odd $n\geqs 7$, 
\begin{align*}
&\zeta(n)\lambda(n)=\zeta(n)\xi(n-2)=(\frac{n-1}{2}!)^2\frac{(\frac{n-3}{2})!(\frac{n-5}{2})!}{4}\\
\geqs& \pi^2(\frac{n-1}{2})^{n}(\frac{n-3}{2})^{\frac{n-2}{2}}(\frac{n-5}{2})^{\frac{n-4}{2}}e^{-(2n-5)}\\
= & (n-\frac{1}{2})^{2n-1}\frac{\pi^2(\frac{n-1}{2})^{n}(\frac{n-3}{2})^{\frac{n-2}{2}}(\frac{n-5}{2})^{\frac{n-4}{2}}e^{-(2n-5)}}{(n-\frac{1}{2})^{2n-1}}\\
= &(\frac{n-\frac{1}{2}}{2e})^{2n-1}\frac{\pi^2e^42^2}{(n-\frac{1}{2})^2}  \frac{(n-1)^{n}(n-3)^{\frac{n-2}{2}}(n-5)^{\frac{n-4}{2}}}{(n-\frac{1}{2})^{2n-3}}\\
\geqs& (\frac{n-\frac{1}{2}}{2e})^{2n-1}\frac{4\pi^2}{(n-\frac{1}{2})^2}. 
\end{align*}
Therefore, for all $n\geqs 2$,
\[
\Big(\frac{8}{\zeta(n)\lambda(n)}\Big)^{\frac{1}{2n-1}}\leqs \frac{2e}{n-\frac{1}{2}}\Big(\frac{(n-\frac{1}{2})^28}{4\pi^2}\Big)^{\frac{1}{2n-1}}\leqs \frac{2.36e}{n-\frac{1}{2}}.
\]
\end{proof}

\begin{lem}\label{lem:upperboundsupportcalculate2}
	Let $\zeta(n)$ be defined as in (\ref{zetaxiformula1}).	For $n\geqs 2$, we have 
	\[
 \frac{(n-\frac{1}{2})^{2n-1}}{\zeta(n)(n-2)!}\leqs 2^{n-\frac{3}{2}}e^{2n}\sqrt{n-0.5}(4.5\pi)^{-1}. 
 \]
\end{lem}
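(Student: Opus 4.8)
The plan is to bound the fraction by lower-bounding its denominator $\zeta(n)(n-2)!$ via the Stirling inequality (\ref{stirlingformula}), treating the two parities of $n$ separately, exactly in the spirit of Lemmas \ref{lem:upperboundnumbercalculate1} and \ref{lem:upperboundsupportcalculate1}. First I would dispose of the small indices $n=2,3,4,5$ by direct computation (here $\max(\sqrt{n-2},1)=1$ only at $n=2$, and the asymptotic estimates below are still loose), which reduces everything to $n\geq 6$. For such $n$ I would apply $m!\geq \sqrt{2\pi}\,m^{m+\frac12}e^{-m}$ to every factorial occurring in $\zeta(n)(n-2)!$, so that after substituting into the fraction the common factors $2^{\,n-\frac32}\pi^{-\frac32}$ and the power $e^{2n-3}$ cancel against the right-hand side, leaving in each parity a pure-number inequality whose leading powers $n^{2n-2}$ already match top and bottom.

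In the odd case $\zeta(n)=\big((\tfrac{n-1}{2})!\big)^2$, and the reduction collapses to
\[
\Big(\frac{n-\tfrac12}{n-1}\Big)^{2n-2}\Big(\frac{n-1}{n-2}\Big)^{n-2}\leq e^2 .
\]
This I would settle cleanly with the two elementary bounds $\big(1+\tfrac{1}{2(n-1)}\big)^{2(n-1)}\leq e$ and $\big(1+\tfrac{1}{n-2}\big)^{n-2}\leq e$, whose product is exactly $e^2$, so the odd case presents no real difficulty.

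The even case, where $\zeta(n)=(\tfrac n2)!\,(\tfrac{n-2}{2})!$, is where the work lies. The same substitution reduces the claim to
\[
\Big(\frac{n-\tfrac12}{n}\Big)^{\frac{n+1}{2}}\Big(\frac{n-\tfrac12}{n-2}\Big)^{\frac{3n-5}{2}}\leq e^2 .
\]
Unlike the odd case, the left-hand side now \emph{tends} to $e^2$ as $n\to\infty$, so the crude bounds $(1-\tfrac1n)^n\leq e^{-1}$ and $(1+\tfrac1{n-2})^{n-2}\leq e$ are too lossy and in fact cancel in the wrong direction. I expect the main obstacle to be capturing the correct sign of the first-order correction: a Taylor expansion of the logarithm gives $\ln(\mathrm{LHS})=2-\tfrac{5}{4n}+O(n^{-2})$, so the inequality does hold with room to spare, but making this rigorous for every even $n\geq 6$ requires the sharpened estimates $(1-\tfrac1n)^n\leq e^{-1-\frac{1}{2n}}$ and $(1+\tfrac{1}{n-2})^{n-2}\leq e^{\,1-\frac{1}{2(n-2)}+\frac{1}{3(n-2)^2}}$ (or an equivalent monotonicity argument on $\ln(\mathrm{LHS})$), chosen precisely so that the negative $O(1/n)$ term survives the cancellation against the positive one.

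Once both parities are reduced to these pure-number inequalities and the latter are verified for $n\geq 6$, combining them with the base-case check for $n\leq 5$ completes the proof. The only genuinely delicate point is the even-$n$ constant chase just described; everything else is a mechanical application of (\ref{stirlingformula}) and of the definition (\ref{zetaxiformula1}) of $\zeta$.
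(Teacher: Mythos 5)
Your proposal follows essentially the same route as the paper: Stirling lower bounds on each factorial in $\zeta(n)(n-2)!$, a split by parity, and a reduction to exactly the two pure-number inequalities you display, with the odd case closing via $(1+\tfrac1m)^m\le e$. The delicate even-$n$ step you single out is real — the paper simply asserts the inequality $\bigl(\tfrac{n-1/2}{n}\bigr)^{\frac{n+1}{2}}\bigl(\tfrac{n-1/2}{n-2}\bigr)^{\frac{3n-5}{2}}\le e^2$ in its final ``$\le$'' without justification — and although the two sharpened estimates you name do not literally match the factors that occur (the bases are $1-\tfrac{1}{2n}$ and $1+\tfrac{3}{2(n-2)}$ with exponents $\tfrac{n+1}{2}$ and $\tfrac{3n-5}{2}$), your fallback of tracking the first-order correction of $\ln(\mathrm{LHS})$ does work: $\ln(1-x)\le -x$ on the first factor and $\ln(1+x)\le x-\tfrac{x^2}{2}+\tfrac{x^3}{3}$ on the second give $\ln(\mathrm{LHS})\le 2-\tfrac{1}{4n}-\tfrac{15}{16(n-2)}+\tfrac{9}{8(n-2)^2}+\tfrac{9}{16(n-2)^3}\le 2$ for all even $n\ge 4$, which closes the one step the paper glosses over.
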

Proof: By the Stirling approximation formula (\ref{stirlingformula}), when $n$ is odd and $n\geqs 3$, we have 
\begin{align*}
&\frac{(n-\frac{1}{2})^{2n-1}}{\zeta(n)(n-2)!}=\frac{(n-\frac{1}{2})^{2n-1}}{(\frac{n-1}{2}!)^2(n-2)!}\\
\leqs& \frac{(n-\frac{1}{2})^{2n-1}}{(\sqrt{2\pi})^3(\frac{n-1}{2})^{n}(n-2)^{n-2+\frac{1}{2}}e^{-(2n-3)}}\\
\leqs & \frac{2^{n}e^{2n}\sqrt{n-\frac{1}{2}}}{(e\sqrt{2\pi})^3}\frac{(n-\frac{1}{2})^{2n-\frac{3}{2}}}{(n-1)^n (n-2)^{n-\frac{3}{2}}}\leqs\frac{2^{n-\frac{3}{2}}e^{2n-0.95}\sqrt{n-\frac{1}{2}}}{(\sqrt{\pi})^3}\\
<& 2^{n-\frac{3}{2}}e^{2n}\sqrt{n-\frac{1}{2}}(4.5\pi)^{-1}
\end{align*}
When $n$ is even and $n\geqs 4$, we have 
\begin{align*}
&\frac{(n-\frac{1}{2})^{2n-{1}}}{\zeta(n)(n-2)!}=\frac{(n-\frac{1}{2})^{2n-{1}}}{(\frac{n}{2})!(\frac{n-2}{2})!(n-2)!}\\
\leqs& \frac{(n-\frac{1}{2})^{2n-{1}}}{(\sqrt{2\pi})^3(\frac{n}{2})^{\frac{n+1}{2}}(\frac{n-2}{2})^{\frac{n-1}{2}}(n-2)^{n-2+\frac{1}{2}}e^{-(2n-3)}}\\
\leqs & \frac{2^{n}e^{2n}\sqrt{n-\frac{1}{2}}}{(e\sqrt{2\pi})^3}\frac{(n-\frac{1}{2})^{2n-{1.5}}}{n^{\frac{n+1}{2}}(n-2)^{\frac{n-1}{2}} (n-2)^{n-{1.5}}}\leqs \frac{2^{n-\frac{3}{2}}e^{2n-0.95}\sqrt{n-\frac{1}{2}}}{(\sqrt{\pi})^3}\\
<& 2^{n-\frac{3}{2}}e^{2n}\sqrt{n-\frac{1}{2}}(4.5\pi)^{-1}
\end{align*}
For $n=2$, the inequality follows from a direct calculation. 
		
\bibliographystyle{plain}
\bibliography{reference_final}
\end{document}